\documentclass[11pt]{article}

\usepackage{times}
\usepackage{amsmath}
\usepackage{amssymb}
\usepackage{latexsym}
\RequirePackage{pstricks}
\RequirePackage{pst-node}
\RequirePackage{pst-plot}
\RequirePackage{pst-tree}
\usepackage{graphicx}
\usepackage{hyperref}
\usepackage{picins,boxedminipage}

\usepackage[dvips]{epsfig}
\usepackage{latexsym}
\usepackage{color}

\setlength{\textwidth}{6.7in}
\setlength{\evensidemargin}{0.5in}
\setlength{\oddsidemargin}{0.1in}
\setlength{\topmargin}{0in}
\setlength{\textheight}{9.3in}
\advance \topmargin by -0.7in


\newtheorem{theorem}{Theorem}[section]
\newtheorem{proposition}[theorem]{Proposition}
\newtheorem{lemma}[theorem]{Lemma}
\newtheorem{corollary}[theorem]{Corollary}

\newtheorem{fact}[theorem]{Fact}
\newtheorem{claim}[theorem]{Claim}
\newtheorem{con}[theorem]{Conjecture}
\newcommand{\sur}{\sigma }
\newcommand{\qed}{\mbox{}\hspace*{\fill}\nolinebreak\mbox{$\rule{0.6em}{0.6em}
$}}
\newenvironment{proof}{\noindent {\bf Proof:}}{$\qed$\par\medskip}

\newcommand{\cR}{{\cal R}}
\newcommand{\cZ}{{\cal Z}}

\begin{document}

\title{
  Flow-Cut Gaps for Integer and Fractional Multiflows
}
\author{
Chandra Chekuri\thanks{Dept.\ of Computer Science, 201 N.\ Goodwin Ave.,
Univ.\ of Illinois,
Urbana, IL 61801, USA. Email: {\tt chekuri@cs.illinois.edu}}
\and
F.~Bruce Shepherd\thanks{Department of Mathematics and Statistics, McGill University,
805 Sherbrooke West, Montreal, QC, Canada. Email: {\tt
bruce.shepherd@mcgill.ca}}
\and
Christophe Weibel\thanks{Department of Mathematics and Statistics, McGill University,
805 Sherbrooke West, Montreal, QC, Canada. Email:
{\tt christophe.weibel@gmail.com}}
}

\begin{titlepage}
\maketitle
\def\thepage {} 
\thispagestyle{empty}

\begin{abstract}
  Consider a {\em routing problem} instance consisting of a {\em
    demand graph} $H=(V,E(H))$ and a {\em supply graph} $G=(V,E(G))$.
  If the pair obeys the cut condition, then the {\em flow-cut gap} for
  this instance is the minimum value $C$ such that there exists a
  feasible multiflow for $H$ if each edge of $G$ is given capacity
  $C$. It is well-known that the flow-cut gap may be greater than $1$
  even in the case where $G$ is the (series-parallel) graph $K_{2,3}$.
  In this paper we are primarily interested in the ``integer''
  flow-cut gap. What is the minimum value $C$ such that there exists a
  feasible integer valued multiflow for $H$ if each edge of $G$ is
  given capacity $C$?  We formulate a conjecture that states that the
  integer flow-cut gap is quantitatively related to the fractional
  flow-cut gap. In particular this strengthens the well-known
  conjecture that the flow-cut gap in planar and minor-free graphs is
  $O(1)$ \cite{GNRS99} to suggest that the integer flow-cut gap is
  $O(1)$.  We give several technical tools and results on non-trivial
  special classes of graphs to give evidence for the conjecture and
  further explore the ``primal'' method for understanding flow-cut
  gaps; this is in contrast to and orthogonal to the highly successful
  metric embeddings approach. Our results include the following:
  \begin{itemize}
  \item Let $G$ be obtained by series-parallel operations starting
    from an edge $st$, and consider orienting all edges in $G$ in the
    direction from $s$ to $t$. A demand is {\em compliant} if its
    endpoints are joined by a directed path in the resulting oriented
    graph.  We show that if the cut condition holds for a
    compliant instance and $G+H$ is Eulerian, then an integral
    routing of $H$ exists. This result includes as a special case,
    routing on a ring but is not a special case of the Okamura-Seymour
    theorem.
  \item Using the above result, we show that the integer flow-cut gap
    in series-parallel graphs is $5$.  We also give an explicit class
    of routing instances that shows via elementary calculations that
    the flow-cut gap in series-parallel graphs is at least $2-o(1)$; this
    is motivated by and simplifies the proof by Lee and Raghavendra
    \cite{LeeR07}.
  \item The integer flow-cut gap in $k$-Outerplanar graphs is $c^{O(k)}$ for
   some fixed constant $c$.
 \item A simple proof that the flow-cut gap is $O(\log k^*)$
   where $k^*$ is the size of a node-cover in $H$; this was
   previously shown by G\"{u}nl\"{u}k via a more intricate proof
   \cite{Gunluk07}.
  \end{itemize}
\end{abstract}

\end{titlepage}
\pagenumbering{arabic}

\section{Introduction}
Given a (undirected) graph $G=(V,E)$ a {\em routing} or {\em
  multiflow} consists of an assignment $f:{\cal P} \rightarrow R_+$
where ${\cal P}$ is the set of simple paths in $G$ and such that for
each edge $e$, $\sum_{P \in P(e)} f_P \leq 1$, where $P(e)$ denotes
the set of paths containing $e$. Given a {\em demand graph}
$H=(V,E(H))$ such a routing {\em satisfies} $H$ if $\sum_{P \in
  P(u,v)} f_P = 1$ for each $g=uv \in E(H)$, where $P(u,v)$ denotes
paths with endpoints $u$ and $v$ (one may assume a simple demand graph
without loss of generality). If such a flow exists, we call the
instance {\em routable}, or say $H$ is routable in $G$. Edges of $G$
and $H$ are called {\em supply edges} and {\em demand edges}
respectively.  The above notions extend naturally if each supply edge
$e$ is equipped with a capacity $u_e$ and each demand edge $g$ is
equipped with a demand $d_g$. If $u$ is an integral vector, we denote by $G_u$, the graph
obtained by making $u_e$ copies of each edge $e$. $H_d$ is defined similarly.
We call the routing $f$ {\em integral}
(resp. {\em half-integral}) if each $f_P$ (resp. $2 f_P$) is an
integer.

\parpic[r]{
  \begin{minipage}{0.22\linewidth}
    \begin{center}
      \includegraphics[height=1in]{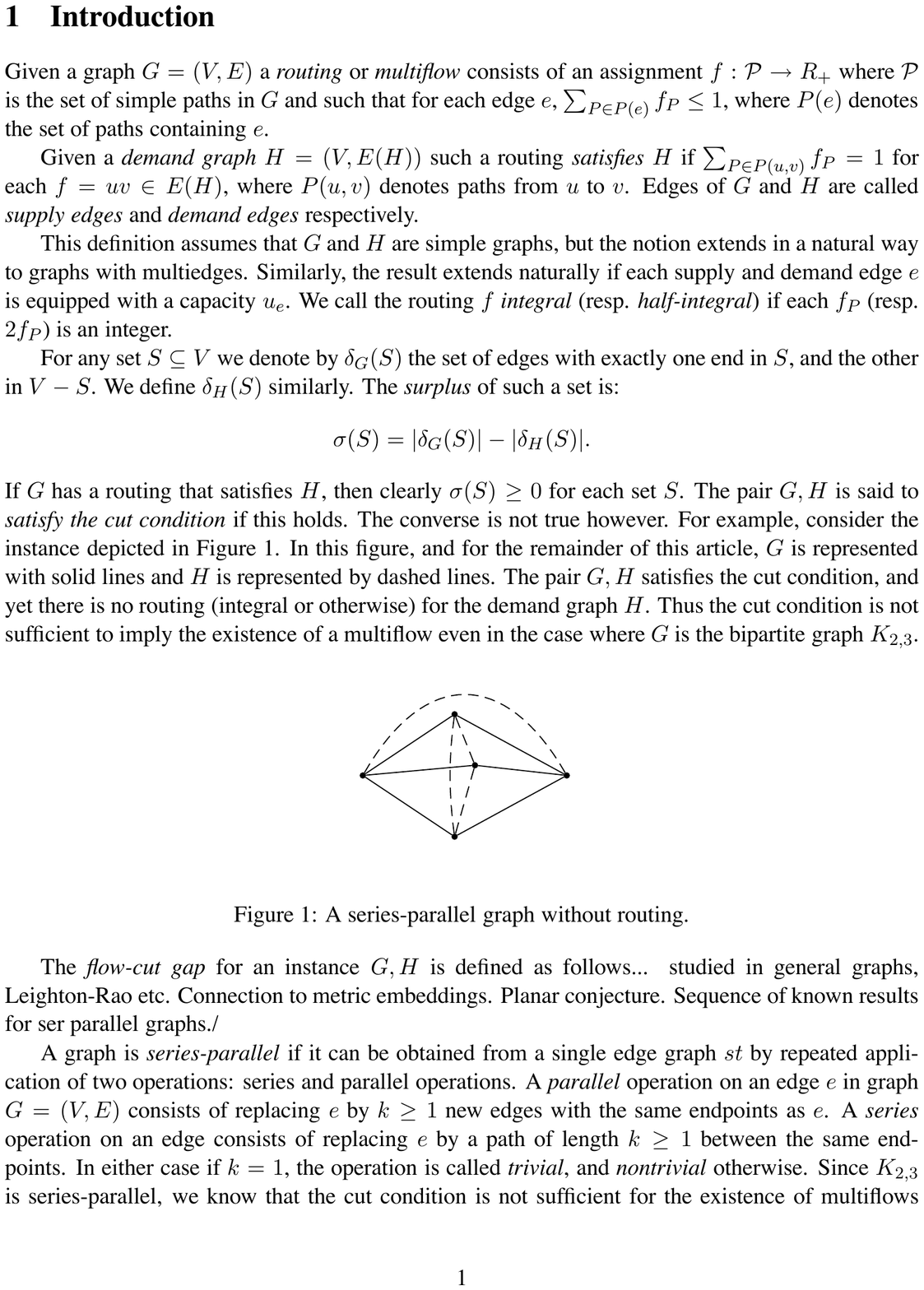}
    \end{center} \label{fig:k23}
  \end{minipage}
}
For any set $S \subseteq V$ we denote by $\delta_G(S)$ the set of
edges with exactly one end in $S$, and the other in $V-S$. We define
$\delta_H(S)$ similarly. (For graph theory notation we primarily
follow Bondy and Murty \cite{BM}.)  The supply graph $G$ satisfies the
{\em cut condition} for the demand graph $H$ if $|\delta_G(S)| \ge
|\delta_H(S)|$ for each $S \subset V$. We sometimes say that the pair
$G,H$ satisfies the cut-condition.  Clearly the cut condition is a
necessary condition for the routability of $H$ in $G$.  The
cut-condition is not sufficient as shown by the well-known example
where $G = K_{2,3}$ is a series-parallel graph with a demand graph
(in dotted edges) as shown in the figure.

Given a graph $G$ and a real number $\alpha > 0$ we use $\alpha G$ to
refer to the graph obtained from $G$ by multiplying the capacity of
each edge of $G$ by $\alpha$. Given an instance $G,H$ that satisfies
the cut-condition, the {\em flow-cut gap} is defined as the smallest
$\alpha \ge 1$ such that $H$ is routable in $\alpha G$; we also refer
to $\alpha$ as the {\em congestion}. We denote this quantity by
$\alpha(G,H)$.  Traditional combinatorial optimization literature has
focused on characterizing conditions under which the cut-condition is
sufficient for (fractional, integral or half-integral) routing, in
other words the setting in which $\alpha(G,H) = 1$; see
\cite{Schrijver_book} for a comprehensive survey of known results.
Typically, these characterizations involve both the supply and demand
graphs. A prototypical result is the Okamura-Seymour Theorem
\cite{OS} that states that the cut-condition is sufficient for a
half-integral routing if $G$ is a planar graph and all edges of $H$
are between the nodes of a single face of $G$ in some planar
embedding.  The proofs of such result rely on what we will term
``primal-methods'' in that they try to directly exhibit routings of
the demands, rather than appealing to dual solutions.  

On the other hand, since the seminal work of Leighton and Rao
\cite{LeightonR88} on flow-cut gaps for uniform and product multiflow
instances, there has been an intense focus in the algorithms and
theoretical computer science community on understanding flow-cut gap
results for classes of graphs. This was originally motivated by the
problem of finding (approximate) sparse cuts. A fundamental and
important connection was established in \cite{LLR,AR} between flow-cut
gaps and metric-embeddings. More specifically, for a graph $G$, let
$\alpha(G)$ be the largest flow-cut gap over all possible capacities
on the edges of $G$ and all possible demand graphs $H$. Also let
$c_1(G)$ denote the maximum, over all possible edge lengths on $G$, of
the minimum {\em distortion} required to embed the finite metric on
the nodes of $G$ (induced by the edge lengths) into an $\ell_1$-space.
Then the results in \cite{LLR,AR} showed that $\alpha(G) \le c_1(G)$
and subsequently \cite{GNRS99} showed that $\alpha(G) = c_1(G)$.
Using Bourgain's result that $c_1(G) = O(\log |V|)$ for all $G$,
\cite{LLR,AR} showed that $\alpha(G) = O(\log |V(G)|)$, and further
refined it to prove that $\alpha(G,H) = O(\log |E_H|)$. Numerous
subsequent results have explored this connection to obtain a variety
of flow-cut gap results. The proofs via metric-embeddings are
``dual''-methods since they work by embedding the metric induced by
the dual of the linear program for the maximum concurrent
multicommodity flow. The embedding approach has been successful in
obtaining flow-cut gap results (amongst several other algorithmic
applications) as well as forging deep connections between various
areas of discrete and continuous mathematics. However, this approach
does not directly give us integral routings even in situations when
they do exist.

In this paper we are interested in the {\em integer} flow-cut gap in
undirected graphs. Given $G,H$ that satisfy the cut-condition, what
is the smallest $\alpha$ such that $H$ can be integrally routed in
$\alpha G$? Is there a relationship between the (fractional)
flow-cut gap and the integer flow-cut gap? A result of Nagamochi and
Ibaraki relates the two gaps in {\em directed} graphs. Let $G=(V,A)$
and $H=(V,R)$ be a supply and demand digraph, respectively. We call
$(G,H)$ {\em cut-sufficient} if for each capacity function $u: A
\rightarrow \cZ^+$ and demand function $d:R \rightarrow \cZ^+$,
$G_u,H_d$ satisfying the cut-condition implies the existence of a
fractional multiflow for $H_d$ in $G_u$.

\begin{theorem}[\cite{NagamochiI89}]
\label{thm:nagiba}
If $(G,H)$ is cut-sufficient, then for any integer capacity vector $u$
and integer demand vector $d$ such that $G_u,H_d$ satisfy the cut
condition, there is an integer multiflow for $H_d$ in $G_u$.

\end{theorem}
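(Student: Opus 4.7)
The plan is to proceed by induction on $\Delta(u,d) := \sum_{e \in A} u_e + \sum_{g \in R} d_g$. The base case $\Delta = 0$ is trivial. In the inductive step, pick any demand $g_0 = (s,t) \in R$ with $d_{g_0} \geq 1$. The key reduction is to produce an $(s,t)$-directed path $P$ in $G$, using only arcs $e$ with $u_e \geq 1$, such that the instance obtained by subtracting one copy of each arc of $P$ from $u$ and one unit from $d_{g_0}$ still satisfies the cut condition. Granted such a ``safe'' path $P$, cut-sufficiency (a property of $(G,H)$ alone, not of $u,d$) still applies to the reduced instance, so the inductive hypothesis yields an integer multiflow there; routing one additional unit of $g_0$ along $P$ extends this to an integer multiflow for $(G_u, H_d)$.

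The entire content of the proof is therefore the existence of a safe $P$. Call a cut $S$ \emph{tight} when $|\delta^+_{G_u}(S)| = |\delta^+_{H_d}(S)|$. A direct accounting of how each side of the cut condition changes under the proposed modification shows that $P$ is safe if and only if, for every tight cut $S$: (i) $P$ uses exactly one arc of $\delta^+(S)$ and no arc of $\delta^-(S)$ when $S$ is $(s,t)$-separating with $s \in S$, and (ii) $P$ uses no arc of $\delta^+(S)$ otherwise. I would exhibit such a $P$ by invoking the cut-sufficiency hypothesis on the current $(u,d)$ to obtain a fractional multiflow, and focusing on its $g_0$-component $f_0$, which is an $(s,t)$-flow of value $d_{g_0} \geq 1$. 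At a tight cut $S$, the aggregate flow must exactly saturate $\delta^+(S)$, which constrains $f_0$: for a tight $(s,t)$-separating cut $S$ with $s \in S$ and no reverse demands across $S$, $f_0$ has no backward flow across $S$, so every path in an acyclic decomposition of $f_0$ crosses $S$ exactly once forward; similar reasoning at tight non-separating cuts forces those paths to avoid $\delta^+(S)$ altogether. In the absence of reverse demands across tight cuts, any $(s,t)$-path in such a decomposition is safe.

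The main obstacle is precisely the presence of ``reverse'' demands $g' = (s',t')$ with $s' \notin S$ and $t' \in S$ when $S$ is tight and $(s,t)$-separating in the forward direction: these can allow $f_0$ to carry backward flow across $S$ that is compensated by other commodities' forward flow, so that a given path in a decomposition of $f_0$ may cross $S$ more than once. To handle this I would use an uncrossing argument on the family of tight cuts together with a flow-exchange step: portions of $f_0$ that cross $S$ backward are swapped against portions of a reverse-demand commodity that crosses $S$ forward, producing a new fractional multiflow in which $f_0$ has no backward crossing of $S$; uncrossing ensures that carrying this out at one tight cut does not destroy tightness elsewhere, so the procedure can be iterated until $f_0$ satisfies (i)--(ii) across every tight cut simultaneously. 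Once that is achieved, taking any $(s,t)$-dipath in the support of the resulting $f_0$ gives the required safe $P$. I expect the bookkeeping of this uncrossing and rerouting in the directed, multi-type setting to be the technically delicate heart of the argument, while the induction and the cut-accounting characterization of safety are routine.
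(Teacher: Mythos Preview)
The paper does not prove this theorem at all; it is quoted from \cite{NagamochiI89} as background and the paper immediately moves on. So there is no ``paper's own proof'' to compare against, and your proposal must stand on its own.

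Your inductive framework (peel off one integral $s$--$t$ path, recurse) is natural, but the proof sketch contains a genuine error in the characterization of ``safe'' paths. You assert that $P$ is safe \emph{if and only if} conditions (i)--(ii) hold at every \emph{tight} cut. The ``only if'' direction is fine, but the ``if'' direction is false. After the modification, the new surplus of an arbitrary cut $S$ is $\sigma'(S) = \sigma(S) - |P \cap \delta^+(S)| + [s \in S,\, t \notin S]$, so safety requires $|P \cap \delta^+(S)| \le \sigma(S) + [s \in S,\, t \notin S]$ for \emph{every} $S$, not only the tight ones. A simple $s$--$t$ path drawn from a decomposition of $f_0$ can cross a surplus-$1$ cut (with $s,t$ on the same side) twice, and then $\sigma'(S) = -1$. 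All one can say from the fractional multiflow is that the \emph{$\lambda$-weighted average} of $|P \cap \delta^+(S)|$ over the path decomposition is at most $\sigma(S) + [s\in S,\, t\notin S]$ (this follows since the forward commodities already consume $d(\delta^+_H(S))$ of the capacity of $\delta^+(S)$), and this average bound does not hand you a single path that works for all $S$ simultaneously.

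Conversely, the ``main obstacle'' you identify---reverse demands causing backward $f_0$-flow across tight forward-separating cuts---is not actually an obstacle. At a tight $S$ with $s\in S$, $t\notin S$, the forward commodities alone must push at least $d(\delta^+_H(S)) = u(\delta^+(S))$ across $\delta^+(S)$; since total flow on $\delta^+(S)$ is at most $u(\delta^+(S))$, every forward commodity (including $g_0$) sends exactly its demand across and every other commodity sends zero. Hence $f_0$ has gross-out $= d_{g_0}$ and gross-in $=0$ across $S$ regardless of whether reverse demands are present. So your proposed uncrossing/flow-exchange machinery is aimed at a phantom difficulty, while the real one (controlling $|P\cap\delta^+(S)|$ at cuts of small positive surplus) is left unaddressed. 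The induction-on-a-safe-path strategy may still be salvageable, but it needs a different argument to produce $P$.
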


The above theorem does not extend to the undirected case. Consider
taking $G$ to be a cycle and $H$ to be a complete graph. Then it is
known that $(G,H)$ is cut-sufficient but we are not guaranteed an
integral flow for integer valued $u$ and $d$; for example when $G$
is a $4$ cycle with unit capacities and $H$ consists of two crossing
edges with unit demands. For integer valued $u$ and $d$, however,
there is always a half-integral routing of $H_d$ in $G_u$ whenever
$(G_u,H_d)$ satisfies the cut-condition. We may therefore ask if a
weaker form of Theorem~\ref{thm:nagiba} holds in undirected graphs. Namely,
where we only ask for half-integral flow instead of integral flows.
One case where one does get such a half-integral routing in
undirected graphs is the following. Consider the case when $G=H$; if
the pair $(G,G)$ is cut-sufficient we simply say that $G$ is
cut-sufficient. It turns out that this is precisely the class of
$K_5$-minor free graphs (Seymour \cite{seymour1981a}; cf. Corollary
75.4d \cite{Schrijver_book}). Moreover we have the following.

\begin{theorem}[Seymour]
  If $G$ is cut-sufficient, then for any nonnegative integer
  weightings $u,d$ on $E(G)$ for which $G_u,G_d$ satisfies the cut
  condition, there is a half-integral routing of $G_d$ in
  $G_u$. Moreover, if $G_u+G_d$ is Eulerian, then there is an integral
  routing of $G_d$.
\end{theorem}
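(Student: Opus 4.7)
The plan is to derive the half-integral statement from the integral (Eulerian) statement by a doubling argument, and to prove the Eulerian case by induction on $\sum_e u_e + \sum_g d_g$, invoking the cut-sufficiency of $G$ at each step to guide the choice of an integer augmenting path.

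For the doubling reduction, given integer $u,d$ with $(G_u,G_d)$ satisfying the cut condition, consider $(2u,2d)$. The cut condition is preserved since $|\delta_{G_{2u}}(S)| = 2|\delta_{G_u}(S)| \geq 2|\delta_{G_d}(S)| = |\delta_{G_{2d}}(S)|$, and every vertex of $G_{2u}+G_{2d}$ has even degree, so the latter multigraph is Eulerian. The Eulerian statement applied to $(2u,2d)$ yields an integer routing of $G_{2d}$ in $G_{2u}$; scaling each path's flow by $1/2$ gives a half-integral routing of $G_d$ in $G_u$.

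For the Eulerian case I would induct on $\sum_e u_e + \sum_g d_g$, with trivial base case. For the inductive step, fix a demand edge $g=st$ with $d_g\ge 1$, and aim to find an $s$-$t$ path $P$ with $u_e\ge 1$ for every $e\in P$ such that, setting $u'_e := u_e - [e\in P]$ and $d'_g := d_g - 1$, the reduced pair $(G_{u'},G_{d'})$ still satisfies the cut condition. Induction applied to $(u',d')$ then produces an integer routing which, augmented by one unit of flow along $P$ for $g$, routes the original instance. The Eulerian property carries over automatically: decrementing $u$ along $P$ flips the parity of $\deg$ only at the endpoints $s,t$ in $G_u$ (internal vertices of $P$ lose two incident supply edges), and decrementing $d_{st}$ does the same in $G_d$, so every vertex retains its parity in $G_{u'}+G_{d'}$.

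The main obstacle is the existence of such a $P$, and this is where the cut-sufficiency of $G$ enters essentially: by hypothesis there is a fractional multiflow $f$ for $(G_u,G_d)$, and the support of $f_g$ provides $s$-$t$ walks from which an integer path must be extracted. Reduction along $P$ can violate the cut condition only at a \emph{tight} cut $\delta(S)$, meaning $|\delta_{G_u}(S)|=|\delta_{G_d}(S)|$; one needs $P$ to cross each tight $s,t$-separating cut exactly once and every other tight cut not at all. The family of tight cuts, by the standard submodularity uncrossing applied to $|\delta_{G_u}(\cdot)|-|\delta_{G_d}(\cdot)|$, can be organized into a laminar chain, and the Eulerian hypothesis forces $|\delta_{G_u}(S)|$ and $|\delta_{G_d}(S)|$ to share parity across every tight cut, providing the unit of ``free'' supply capacity needed to thread $P$ across each layer. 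Making this threading work simultaneously for all tight cuts, while keeping $P$ in the support of $f$, is the technical heart of the argument.
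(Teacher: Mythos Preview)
The paper does not prove this theorem; it is quoted as a known result of Seymour (cf.\ \cite{seymour1981a} and Corollary~75.4d in \cite{Schrijver_book}) and used only as background. So there is no ``paper's proof'' to compare against, and I will evaluate your proposal on its own.

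Your doubling reduction from the half-integral statement to the Eulerian one is correct and standard. The gap is in the Eulerian case, specifically in the step where you claim that ``the family of tight cuts, by the standard submodularity uncrossing \ldots, can be organized into a laminar chain.'' This is false as stated. The surplus function $\sigma(S)=|\delta_{G_u}(S)|-|\delta_{G_d}(S)|$ is a \emph{difference} of two symmetric submodular functions and is in general neither sub- nor supermodular. Concretely, for tight $A,B$ one has
\[
0=\sigma(A)+\sigma(B)=\sigma(A\cup B)+\sigma(A\cap B)+2\bigl(e_{G_u}(A\setminus B,B\setminus A)-e_{G_d}(A\setminus B,B\setminus A)\bigr),
\]
and the last bracket can be negative, so $A\cup B$ and $A\cap B$ need not be tight. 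Thus tight sets do not uncross, and you cannot in general reduce to a laminar family, much less a chain. Your remark that the Eulerian hypothesis ``provides the unit of free supply capacity'' at a tight cut is also off: at a tight cut the surplus is exactly zero, so there is no slack; the Eulerian condition only tells you that \emph{non-tight} cuts have surplus at least $2$. Without a valid uncrossing, the ``threading'' of $P$ through all tight cuts simultaneously is exactly the hard part, and you have not given an argument for it. Note too that your path-extraction step uses only the existence of a fractional flow for the current instance; if that step worked as written it would not be using the $K_5$-minor-freeness of $G$ (beyond supplying the fractional flow), which should make you suspicious.

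For contrast, Seymour's actual proof is structural and looks nothing like a path-by-path augmentation. Cut-sufficient graphs are exactly the $K_5$-minor-free graphs, and by Wagner's theorem every such graph is obtained from planar graphs and copies of $V_8$ by clique-sums of order at most three. The Eulerian integrality is then established piecewise (planar pieces via Okamura--Seymour type results, $V_8$ by a direct argument) and glued across the clique-sums. If you want to salvage your approach, you would need either a genuinely new uncrossing lemma tailored to this setting, or to bring in the $K_5$-minor-free structure explicitly.
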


In this paper we ask more broadly, whether the fractional and
integral flow-cut gaps are related even in  settings where the
flow-cut gap is greater than $1$. We formulate the conjecture below.

\begin{con}[Gap-Conjecture]
  \label{conj:gap} Does there exist a global constant $C$ that
  satisfies the following.  Let $G=(V,E)$ and $H=(V,R)$ be a supply
  and demand graph respectively.  Suppose that for each capacity
  function $u: A \rightarrow \cZ^+$ and demand requirement $d:R
  \rightarrow \cZ^+$, $G_u,H_d$ satisfy the cut-condition implies that
  there is a fractional multiflow for $H_d$ in $G_d$ with congestion
  $\beta$. Then this implies that there in an integer multiflow for
  $H_d$ in $G_d$ with congestion $C \beta$.
\end{con}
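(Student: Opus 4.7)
The plan is to attack the Gap-Conjecture by converting any fractional multiflow into an integer one with only constant-factor loss in congestion; since this is an open conjecture rather than a theorem, what follows is a proposed line of attack rather than a completed argument. Given $(G,H)$ with integer capacities $u$ and demands $d$ satisfying the cut condition, and assuming the fractional gap hypothesis with constant $\beta$, the goal is to produce an integer multiflow for $H_d$ in $(C\beta) G_u$ for some absolute constant $C$.

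The first step is to build in slack. Scale capacities by a constant $C \geq 1$ and invoke the fractional hypothesis on the scaled instance to obtain a fractional multiflow $f$ in $C\beta \cdot G_u$ that uses each edge to at most a $1/C$ fraction of the available capacity. Decomposing $f$ into paths yields, for each demand edge $uv$, a distribution over $u$-$v$ paths with total weight $d_{uv}$ and total edge load at most $\beta u_e$ on each edge $e$. The second step is to round this path decomposition to an integer solution. A natural tool is dependent randomized rounding (pipage, swap rounding, or a negatively-correlated path-selection scheme) that preserves the demand marginals exactly while controlling edge loads through concentration. With $C$ chosen as a sufficiently large absolute constant, one would hope for congestion at most $C\beta$ with positive probability by a union bound over edges. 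When this direct rounding fails, I would fall back to a primal induction in the spirit of the authors' series-parallel and $k$-outerplanar results: decompose $G$ along small separators, apply the conjecture recursively to the pieces, and splice the integer routings together while respecting the Eulerian parity at the interface.

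The principal obstacle is that standard rounding techniques lose at least a logarithmic factor in congestion on general graphs, so a purely black-box reduction from fractional to integer routing seems doomed. Any proof will probably need to exploit the structural reason behind the bounded fractional gap, e.g.\ a low-distortion $\ell_1$ or tree embedding producing paths with limited overlap, rather than treating $\beta$ as opaque. A secondary difficulty is parity: as the $K_{2,3}$ example already shows, integrality can fail locally even when a half-integral flow exists at congestion $1$, so the constant $C$ must uniformly absorb these losses at every scale of decomposition. For these reasons I expect the full resolution of the Gap-Conjecture to require new primal machinery of the kind the authors develop for restricted graph classes, and a plausible first milestone would be to upgrade the metric-embedding approach so that it outputs not only a fractional routing but an integer template with constant-bounded overlap.
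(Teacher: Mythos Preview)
This statement is a \emph{conjecture}, not a theorem: the paper does not prove it, and explicitly says ``our current techniques seem inadequate to resolve the Gap-Conjecture.'' There is therefore no paper proof to compare your proposal against. You correctly identify this and frame your write-up as a line of attack rather than a completed argument.

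Your discussion of obstacles is accurate and matches the paper's own framing. In particular, the observation that black-box randomized rounding loses a logarithmic factor is exactly why the conjecture is nontrivial, and the paper makes the same point (randomized rounding gives $O(\log n)$ while the fractional gap can be $O(1)$). Your remark that one must exploit the structural reason behind the bounded fractional gap, rather than treat $\beta$ as opaque, is in the same spirit as the paper's emphasis on primal methods for specific graph classes. Where the paper goes beyond your outline is that it actually carries out the primal program you sketch in your fallback step for two concrete families: it proves integer flow-cut gap $5$ for series-parallel graphs (via compliant demands and $K_{2m}$ base cases) and $c^{O(k)}$ for $k$-outerplanar graphs (via the rerouting lemma). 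These are offered as evidence for the conjecture, not as a proof of it.

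One small correction to your proposal: when you ``scale capacities by $C$ and invoke the fractional hypothesis on the scaled instance,'' you do not gain anything new---the fractional hypothesis already gives congestion $\beta$ on the original capacities, and scaling both sides by $C$ is vacuous. The slack you need for rounding has to come from somewhere structural, not from rescaling.
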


\noindent
We do not currently know if the statement holds for $C=2$ in instances
with $\beta=1$ (thus generalizing Seymour's theorem mentioned above).
There are several natural weakenings of the conjecture that are
already unknown. For instance, one may allow $C$ to depend on a class
of instances (such as planar or series parallel supply graphs). More
generally, it would be of interest to bound the integer flow cut gap
as some $g(\beta)$, e.g., $g(\beta)=O(\text{poly}(\beta))$.
Previously other conjectures relating fractional and integer
multiflows were shown to be false. For instance, Seymour conjectured
that if there is a fractional multiflow for $G,H$, then it implies a
half-integer multiflow. These conjectures have been strongly disproved
(see \cite{Schrijver_book}). Note that our conjecture differs from the
previous ones in that we relate the flow-cut gap values for hereditary
classes of instances on $G,H$.

The Gap-Conjecture has several important implications. First, it
would give structural insights into flows and cuts in graphs.
Second, it would allow fractional flow-cut gap results obtained via
the embedding-based approaches to be translated into integer
flow-cut gap results. Finally, it would also shed light on the
approximability of the congestion minimization problem in special
classes of graphs. In congestion minimization we are given $G,H$ and
are interested in the least $\alpha$ such that $\alpha G$ has an
integer routing for $H$. Clearly, the congestion required for a
fractional routing is a lower bound on $\alpha$; moreover this lower
bound can be computed in polynomial time via linear programming.
Almost all the known approximation guarantees are with respect to
this lower bound; even in directed graphs an $O(\log n/\log \log n)$
approximation is known via randomized rounding \cite{RaghavanT87}.
In general undirected graphs, this problem is hard to approximate to
within an $\Omega(\log \log n)$-factor \cite{AndrewsZ05}.  However,
for planar graphs and graphs that exclude a fixed minor, it is
speculated that the problem may admit an $O(1)$ approximation. The
Gap-Conjecture relates this to the conjecture of Gupta et al.\
\cite{GNRS99} that states that the fractional flow-cut gap is $O(1)$
for all graphs that exclude a fixed minor. Thus the congestion
minimization problem has an $O(1)$ approximation in planar graphs if
the Gupta et al.\ conjecture and the Gap-Conjecture are both true.
We also note that an $O(1)$ gap between fractional and integer
multiflows in planar graphs (or other families of graphs) would shed
light on the Gap-Conjecture.

Our current techniques seem inadequate to resolve the
Gap-Conjecture. It is therefore natural to prove the Gap-Conjecture in
those settings where we do have interesting and non-trivial upper
bounds on the (fractional) flow-cut gap. Note that the conjecture
follows easily when $G$ and $H$ are unrestricted (complete graphs); in
this case the flow-cut gap is $\Omega(\log n)$; one may consider $G$,
a bounded degree expander, with $H$, a uniform multiflow
\cite{LeightonR88}. On the other side, randomized rounding shows that
the integer flow-cut gap is $O(\log n)$. Now, if $G$ is a complete
graph and $H$ is a complete graph on a subset of $k$ nodes of $G$ then
the flow-cut gap for such instances is $\Omega(\log k)$; this easily
follows from the above mentioned general result. For these instances
the fractional flow-cut gap improves to $O(\log k)$, as shown in
\cite{LLR,AR}. One can obtain an improvement for the integer flow-cut
gap but one cannot employ simple randomized rounding; in \cite{CKS-unpub}
it is shown that the integer flow-cut gap for these instances
is $O(\text{polylog}(k))$; this relies on the results in
\cite{well-linked,KhandekarRV}.

In a sense, the Gap-Conjecture is perhaps more relevant and interesting
in those cases where the flow-cut gap is $O(1)$. We focus on
series-parallel graphs and $k$-Outerplanar graphs for which we know
flow-cut gaps of $2$ \cite{ChakrabartiJLV08} and $c^k$ (for some
universal constant $c$) \cite{ChekuriGNRS} respectively. Proving
flow-cut gaps for even these restricted families of graphs has taken
substantial technical effort. In this paper we affirm that one can
prove similar bounds for these graphs for the integer flow-cut
gap. For instance, in series parallel instances, we show that the
integer flow-cut gap is at most $5$ (and we conjecture it is $2$).

\medskip
\noindent {\bf Overview of results and techniques:}
In this paper we focus especially on applying primal methods to two classes of graphs for which
the flow-cut gap is known to be $O(1)$: series parallel graphs and
$k$-Outerplanar graphs.

The first proof that series parallel instances had a constant flow-cut
was given in \cite{GNRS99}; subsequently a gap of $2$ was shown in
\cite{ChakrabartiJLV08}. This latter upper bound is tight since it is
shown in \cite{LeeR07} that there are instances where the gap is
arbitrarily close to $2$.  We give a simpler proof of the lower bound
in this paper that is based on an explicit (recursive) instance and
elementary calculations --- our proof is inspired by \cite{LeeR07} but
avoids their advanced metric-embeddings machinery.

In Section~\ref{sec:spcong5} we show that for series-parallel graphs
the integer flow-cut gap is at most $5$. The primal-method has
generally been successful in identifying restrictions on demand
graphs for which the cut-condition implies routability. We follow
that approach and identify several classes of demands in
series-parallel graphs  for which cut-condition implies routability
(see Sections~\ref{sec:compliant} - ~\ref{sec:semicompliant}). The
main class exploited to obtain the congestion 5 result,  are the
so-called compliant demands (Section~\ref{sec:semicompliant}).
However, the critical base case for compliant demands boils down to
determining classes of demands on $K_{2m}$ instances for which the
cut condition implies routability. In fact, for $K_{2m}$ instances,
we are able to give a complete characterization of demand graphs $H$
for which $(K_{2m},H)$ is cut sufficient - see
Section~\ref{subsec:odd}. Moreover, we conjecture that this
forbidden minor characterization carries over for all
series-parallel graphs (it does not completely characterize
cut-sufficiency in general however).

 One
ingredient we use is a general proof technique for ``pushing''
demands similar to what has been used in previous primal proofs; for
instance in the proof of the Okamura-Seymour theorem \cite{OS}. We
try to replace a demand edge $uv$ by a pair of edges $ux,xv$ to make
the instance simpler (we call this {\em
  pushing to} $x$). Failing to push, identifies some tight cuts and
sometimes these tight cuts can be used to shrink to obtain an instance
for which we know a routing exists.  This contradiction means that we
could have pushed in the first place.

In \cite{ChekuriGNRS}, an upper bound of $c^k$ (for some constant
$c$) is given for the flow-cut gap in $k$-Outerplanar graphs. In
this paper (Section~\ref{sec:kouter}), we show that the integer
flow cut gap in this case is $c^{O(k)}$. In this effort, we
explicitly employ a second proof ingredient which is a simple {\em
rerouting} lemma that was stated and used in \cite{treewidth} (see
Section~\ref{sec:rerouting}). Informally speaking the lemma says the
following. Suppose $H$ is a demand graph and for simplicity assume
it consists of pairs $s_1t_1, \ldots, s_kt_k$. Suppose we are able
to route the demand graph $H'$ consisting of the edges $s_1s_1',
t_1t_1', \ldots, s_ks_k',t_kt_k'$ in $G$ where $s_1',t_1',\ldots,
s_k',t_k'$ are some arbitrary intermediate nodes. Let $H''$ be the
demand graph consisting of $s'_1t'_1,\ldots, s_k't_k'$. The lemma
states that if $G,H$ satisfies the cut-condition and the
aforementioned routing exists in $G$ then $2G,H''$ satisfies the
cut-condition. Clearly we can compose the routings for $H'$ and
$H''$ to route $H$. The advantage of the lemma is that it allows us
to reduce the routing problem on $H$ to that in $H''$ by choosing
$H'$ appropriately. This simple lemma and its variants give a way to
prove approximate flow-cut gaps effectively.

The rerouting lemma sometimes leads to very simple and insightful
proofs for certain results that may be difficult to prove via other
means --- see \cite{treewidth}.  In this paper we give two
applications of the lemma. We give (in Section~\ref{sec:gunluk}) a
very short and simple proof of a result of G\"{u}nl\"{u}k
\cite{Gunluk07}; he refined the result of \cite{LLR,AR} and showed
that $\alpha(G,H) = O(\log k^*)$ where $k^*$ is the node-cover size of
$H$. Clearly $k^* \le |E_H|$ and can be much smaller. We also show
that the integer flow-cut gap for $k$-Outerplanar graphs is $c^{O(k)}$
for some universal constant $c$; in fact we show a slightly stronger
result (see Section~\ref{sec:kouter}). Previously it was known that
the (fractional) flow-cut gap for $k$-Outerplanar graphs is $c^k$
\cite{ChekuriGNRS}.


Our integer flow-cut gap results imply corresponding new approximation
algorithms for the congestion minimization problem on the graph
classes considered. Apart from this immediate benefit, we feel that it
is important to complement the embedding-based approaches to
simultaneously develop and understand corresponding tools and
techniques from the primal point of view. As an example, Khandekar,
Rao and Vazirani \cite{KhandekarRV}, and subsequently \cite{OSVV08},
gave a primal-proof of the Leighton-Rao result on product
multicommodity flows \cite{LeightonR88}. This new proof had
applications to fast algorithms for finding sparse cuts
\cite{KhandekarRV,OSVV08} as well as approximation algorithms for the maximum
edge-disjoint path problem \cite{RaoZ}.

\section{Basics and Notation}

We first discuss some basic and standard reduction
operations in primal proofs for flow-cut gaps and also set up the
necessary notation for series-parallel graphs.

\subsection{Some Basic Operations Preserving the Cut Condition}
\label{sec:basicops}

We present several operations that turn an instance
$G,H$ satisfying the cut condition into smaller instances with the
same property. We call an instance $G,H$ {\em Eulerian} if
$G+H$ is Eulerian; we also seek to preserve this property.

For $S \subseteq V$, the capacity of the cut $\delta_G(S)$, is just
$|\delta_G(S)|$ (or sum of capacities if edges have capacities).
Similarly, the demand of such a cut is $|\delta_H(S)|$. Hence the
surplus is $\sur(S) = |\delta_G(S)|- |\delta_H(S)|$. The set $S$, and
cut $\delta(S)$, is called {\em tight} if $\sur(S)=0$. The {\em cut
  condition} is then satisfied for an instance $G,H$ if $\sur(S) \geq
0$ for all sets $S$. One may naturally obtain ``smaller'' routing
instances from $G,H$ by performing a contraction of a subgraph of $G$
(not necessarily a connected subgraph) and removing loops from the
resulting $G'$, and in the resulting demand graph $H'$. It is easily
checked that if $G,H$ has the cut condition, then so does any
contracted instance.

We call a subset $A \subseteq V(G)$ {\em central} if both $G[A]$ and
$G[V-A]$ are connected. The following is well-known
cf. \cite{Schrijver_book}.

\begin{lemma}
\label{lem:minimal}
$G,H$ satisfy the cut condition if and only if the surplus of every central set
is nonnegative.
\end{lemma}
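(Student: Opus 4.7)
The plan is to prove the non-trivial direction (``every central set has nonnegative surplus'' $\Rightarrow$ ``the cut condition holds'') by a minimum-surplus counterexample argument. The key observation is that if an induced subgraph $G[S]$ splits into components, then supply edges on $\delta_G(S)$ partition cleanly across the components, while demand edges between different components only help the surplus. The forward direction is immediate, since every central set is in particular a set.

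For the converse, I would choose $S^* \subseteq V$ minimizing $\sur(S^*)$ over all subsets of $V$. If $\sur(S^*) \geq 0$ we are done, so assume for contradiction $\sur(S^*) < 0$; in particular $S^* \notin \{\emptyset, V\}$. The goal is to show that $S^*$ must be central, which will contradict the hypothesis. Suppose first that $G[S^*]$ is disconnected, with connected components on vertex sets $S_1, \ldots, S_k$, $k \geq 2$. Since no $G$-edge joins distinct $S_i$ (by definition of components), every $G$-edge leaving any $S_i$ must leave $S^*$ entirely; this gives the disjoint union $\delta_G(S^*) = \bigsqcup_i \delta_G(S_i)$ and hence $\sum_i |\delta_G(S_i)| = |\delta_G(S^*)|$. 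On the $H$ side, each $H$-edge between two distinct $S_i, S_j$ contributes twice to $\sum_i |\delta_H(S_i)|$ while contributing zero to $|\delta_H(S^*)|$, so $\sum_i |\delta_H(S_i)| \geq |\delta_H(S^*)|$. Subtracting, I get
\[
\sum_{i=1}^k \sur(S_i) \;\leq\; \sur(S^*).
\]
Minimality of $\sur(S^*)$ forces $\sur(S_i) \geq \sur(S^*)$ for each $i$, so $k\,\sur(S^*) \leq \sur(S^*)$, and since $k \geq 2$ this yields $\sur(S^*) \geq 0$, a contradiction.

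By symmetry, using $\sur(S^*) = \sur(V \setminus S^*)$, the identical argument applied to $V \setminus S^*$ rules out the possibility that $G[V \setminus S^*]$ is disconnected. Hence both $G[S^*]$ and $G[V \setminus S^*]$ are connected, so $S^*$ is central; the hypothesis then gives $\sur(S^*) \geq 0$, contradicting $\sur(S^*) < 0$. I do not anticipate a genuine obstacle: the only point requiring care is the inequality $\sum_i \sur(S_i) \leq \sur(S^*)$, which rests on the asymmetry that no supply edges cross the component partition within $S^*$ while some demand edges may, and the fact that such crossing demand edges are subtracted twice on the left-hand side.
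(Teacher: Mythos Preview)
Your argument has an arithmetic slip at the decisive step. From $k\,\sur(S^*) \le \sur(S^*)$ with $k \ge 2$ you conclude $\sur(S^*) \ge 0$, but the inequality rearranges to $(k-1)\sur(S^*) \le 0$, i.e.\ $\sur(S^*) \le 0$, which is entirely compatible with your standing assumption $\sur(S^*) < 0$ and yields no contradiction. Concretely, nothing rules out $\sur(S^*) = -3$ with two components satisfying $\sur(S_1)=\sur(S_2)=-2$: each is at least the minimum $-3$, yet their sum $-4 \le -3$. So the proof collapses precisely here, and the symmetric argument for the complement inherits the same flaw.

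The repair is to change the extremal choice of $S^*$. Rather than minimizing the surplus, take $S^*$ of minimum $|\delta_G(S^*)|$ among sets with negative surplus; this quantity is invariant under complementation, so your ``by symmetry'' step still applies. Your inequality $\sum_i \sur(S_i) \le \sur(S^*) < 0$ then gives some $S_i$ with $\sur(S_i) < 0$, and since the $\delta_G(S_i)$ partition $\delta_G(S^*)$ with each piece nonempty (using that $G$ is connected), $|\delta_G(S_i)| < |\delta_G(S^*)|$, contradicting minimality. Alternatively, choose $S^*$ of minimum cardinality among violating sets to force $G[S^*]$ connected, and then observe that if $G[V\setminus S^*]$ has components $T_1,\ldots,T_m$, some $T_j$ has $\sur(T_j)<0$ and is already central (its complement is the connected $G[S^*]$ with the remaining $T_i$'s attached). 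The paper itself does not prove this lemma, deferring to Schrijver.
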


\smallskip \noindent {\bf $1$-cut reduction:} This operation
takes an instance where $G$ has a cut node $v$ and consists of
splitting $G$ into nontrivial pieces determined by the components of
$G-v$. Demand edges $f$ with endpoints $x,y$ in distinct components
are replaced by two demands $xv,yv$ and given over to the obvious
instance.  One easily checks that each resulting instance again
satisfies the cut condition.  A simple argument also shows that the
Eulerian property is maintained in each instance if the original
instance was Eulerian.

\smallskip \noindent {\bf Parallel reduction:} This takes as
input an instance with a demand edge $f$ and supply edge $e$, with
the same endpoints.  The reduced instance is obtained by simply
removing $f,e$ from $H$ and $G$ respectively.  Trivially the new
instance satisfies the cut condition and is Eulerian if $G,H$ was.

\smallskip \noindent {\bf Slack reduction:} This works on an instance
where some edge $e$ (in $G$ or $H$) does not lie in any tight cut. In
this case, if $e \in G$, we may remove $e$ from $G$ and add it to $H$.
If $e \in H$, we may add two more copies of $e$ to $H$.  Again, this
trivially maintains the cut condition and the Eulerian property.

\smallskip \noindent {\bf Push operations:} Such an operation is
usually applied to a demand edge $xy$ whose endpoints lie in
distinct components of $G-\{u,v\}$ for some $2$-cut $u,v$. {\em
Pushing a demand $xy$ to $u$} involves replacing the demand edge
$xy$ by the two new demands $xu,uy$.  Such an operation clearly
maintains the Eulerian property but it may not maintain the cut
condition. We have
\begin{fact}
\label{fact:pushcond}
Pushing a demand $xy$ to $u$ maintains the cut condition in an Eulerian instance if and only
if there is no tight cut $\delta(S)$ that contains $u$ but none of
$x,y,v$.
\end{fact}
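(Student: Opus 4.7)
The plan is to track exactly how pushing the demand $xy$ to $u$ modifies $|\delta_H(S)|$ for each cut, and then reduce the ``if and only if'' to an application of Lemma~\ref{lem:minimal} combined with the 2-cut structure.

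First, I would compute the effect of replacing the demand edge $xy$ by the pair $xu$, $uy$. A brief case analysis on the positions of $x$, $y$, $u$ relative to $S$ shows that $|\delta_H(S)|$ is unchanged except when $x$ and $y$ lie on the same side of $S$ while $u$ lies on the opposite side, in which case $|\delta_H(S)|$ grows by exactly $2$. Because $G+H$ is Eulerian, $\sur(S)$ is even for every $S$, so any non-tight cut has surplus at least $2$. Hence pushing violates the cut condition at $S$ if and only if $S$ is pre-push tight and is split in this ``bad'' way.

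The forward direction is then immediate: a tight $S$ with $u\in S$ and $x,y,v\notin S$ is of the bad form, so pushing drops its surplus to $-2$. For the reverse direction, suppose no tight cut has $u$ inside and $x,y,v$ outside, and assume for contradiction that pushing violates the cut condition. By Lemma~\ref{lem:minimal} some \emph{central} set $T$ must then have negative post-push surplus; by the analysis above, $T$ is pre-push tight and split in the bad way, say $u\in T$ and $x,y\notin T$.

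The conceptual heart is now a split on whether $v\in T$. If $v\notin T$, then $T$ directly contradicts the hypothesis. If $v\in T$, then $V\setminus T\subseteq V\setminus\{u,v\}$; letting $A,B$ denote the components of $G-\{u,v\}$ containing $x$ and $y$ respectively, the set $V\setminus T$ meets both $A$ and $B$, while $G[V\setminus T]\subseteq G-\{u,v\}$ has no edges across this cut. Hence $G[V\setminus T]$ is disconnected, contradicting the centrality of $T$. This observation---that centrality forces $v$ to lie opposite $u$ across every bad tight cut---is where the 2-cut hypothesis enters decisively; the rest is bookkeeping.
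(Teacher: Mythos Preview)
Your argument is correct. Note, however, that the paper does not actually supply a proof of Fact~\ref{fact:pushcond}; it is stated as a bare fact immediately after the description of the push operation and is treated as an elementary observation. So there is no ``paper's proof'' to compare against in any substantive sense.

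That said, your write-up is exactly the kind of verification the paper is implicitly leaving to the reader, and the ingredients you use---the parity/Eulerian observation that every surplus is even, Lemma~\ref{lem:minimal} to pass to a central violating set, and the $2$-cut hypothesis to rule out the case $v\in T$---are the natural ones. The only step worth flagging as the real content is the last one: showing that a central tight $T$ with $u\in T$ and $x,y\notin T$ cannot also contain $v$, because then $V\setminus T$ would sit inside $G-\{u,v\}$ and meet both of the components containing $x$ and $y$, contradicting connectivity of $G[V\setminus T]$. You have this right. The remainder is, as you say, bookkeeping.
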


We call the preceding four operations {\em basic}, and we generally
assume throughout that our instance is reduced in that we cannot
apply any of these operations. In particular, we may generally assume
that $G$ is $2$-node connected.

\subsection{Series Parallel Instances}

A graph is {\em series-parallel} if it can be obtained from a single
edge graph $st$ by repeated application of two operations: series and
parallel operations.  A {\em parallel} operation on an edge $e$ in
graph $G=(V,E)$ consists of replacing $e$ by $k \geq 1$ new edges with
the same endpoints as $e$.
A {\em series} operation on an edge consists of replacing $e$ by a
path of length $k > 1$ between the same endpoints.
Series-parallel graphs can also
be characterized as graphs that do not contain $K_4$ as minor.

A {\em capacitated graph} refers to a graph where each edge also has
an associated positive integer capacity. For purposes of routing, any
such edge may be viewed a collection of parallel edges.  Conversely,
we may also choose to identify a collection of parallel edges as a
single {\em capacitated edge}.  In either case, for a pair of nodes
$u,v$, we refer to the {\em capacity} between them as the sum of the
capacities of edges with $u,v$ as endpoints.  For a pair of nodes
$u,v$ a {\em bridge} is either a (possibly capacitated) edge between
$u,v$ or it is a subgraph obtained from a connected component of
$G-\{u,v\}$ by adding back in $u,v$ with all edges between $u,v$ and
the component.  In the latter case, the bridge is {\em nontrivial}.  A
{\em strict cut} is a pair of nodes $u,v$ with at least $2$ nontrivial
bridges and at least $3$ bridges.

\begin{lemma}
\label{lem:decomp}
If $G$ is a $2$-node-connected series-parallel graph, then either it
is a capacitated ring, or it has a strict $2$-cut.
\end{lemma}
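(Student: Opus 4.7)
The plan is to prove the lemma by induction on $|V(G)|$. For the base cases $|V(G)|\le 3$, every $2$-node-connected graph is either two vertices joined by parallel edges or a (capacitated) triangle, both of which are capacitated rings.

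For the inductive step with $|V(G)|\ge 4$, we first observe that $G$ cannot be $3$-node-connected: every $3$-connected graph on at least four vertices contains $K_4$ as a minor, and $G$ is $K_4$-minor-free. So $G$ has a $2$-cut $\{u,v\}$. Two easy sub-cases can be handled directly. If $G-\{u,v\}$ has at least three components, then $\{u,v\}$ has at least three nontrivial bridges and is already strict. If $G-\{u,v\}$ has exactly two components while $u,v$ are adjacent in $G$, then the two nontrivial bridges together with the trivial $uv$-edge bridge yield three bridges with two nontrivial, so again $\{u,v\}$ is strict.

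The crux is the remaining sub-case, where $G-\{u,v\}$ has exactly two components $C_1,C_2$ and $u,v$ are non-adjacent. Let $B_i=G[C_i\cup\{u,v\}]$ and $B_i'=B_i+uv$. Each $B_i'$ is $2$-node-connected by a standard bridge-of-a-$2$-cut property, and since $B_{3-i}$ provides a $u$-$v$ path disjoint from $B_i-\{u,v\}$, any $K_4$ minor of $B_i'$ would lift (via this path in place of the edge $uv$) to a $K_4$ minor of $G$; hence $B_i'$ remains series-parallel. Moreover $|V(B_i')|<|V(G)|$ because $C_{3-i}\ne\emptyset$. If both $B_1'$ and $B_2'$ were capacitated rings, each $B_i$ would be a capacitated $u$-$v$ path and $G$ itself would be a capacitated ring, contradicting the hypothesis. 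So by the inductive hypothesis at least one $B_i'$, say $B_1'$, has a strict $2$-cut $\{x,y\}$.

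The final step, lifting $\{x,y\}$ to a strict $2$-cut of $G$, is where the main (though routine) technical effort lies. The key observation is that in $B_1'$ the added edge $uv$ plays exactly the bridge-merging role at $\{x,y\}$ that the subgraph $B_2$ plays in $G$: since $B_2$ is connected, contains both $u$ and $v$, and intersects $\{x,y\}$ only possibly at $u$ or $v$, the components of $B_1-\{x,y\}$ that contain $u$ and $v$ are merged into a single component of $G-\{x,y\}$, exactly as the edge $uv$ merges them in $B_1'-\{x,y\}$. A short case analysis on $\{x,y\}\cap\{u,v\}$ confirms this correspondence; the case $\{x,y\}=\{u,v\}$ cannot arise, since $B_1'-\{u,v\}=C_1$ is connected and hence $\{u,v\}$ has only one nontrivial bridge in $B_1'$, too few to be strict. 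Thus the bridges of $\{x,y\}$ in $G$ and in $B_1'$ are in natural bijection preserving the trivial/nontrivial distinction, so the strict-cut property transfers from $B_1'$ to $G$ and the induction closes.
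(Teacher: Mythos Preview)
Your proof is correct, but it follows a genuinely different route from the paper's argument. The paper exploits the series-parallel \emph{construction} directly: it looks at the first parallel operation applied to the initial edge $st$, producing parallel edges $e_1,\ldots,e_k$, and then argues by cases on $k$ and on whether any $e_i$ (or any edge on its subdivided path) later receives a parallel-then-series operation. Any such operation immediately produces a strict $2$-cut at the endpoints of that edge; if none occurs, $G$ is a capacitated ring. The whole proof is a few lines and locates the strict cut explicitly in the SP decomposition tree.

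Your argument instead inducts on $|V(G)|$ using only the $K_4$-minor-free characterization: find any $2$-cut $\{u,v\}$, dispose of the easy cases, and otherwise apply induction to $B_i'=B_i+uv$, lifting a strict cut of $B_1'$ back to $G$ by arguing that $B_2$ plays in $G$ exactly the bridge-merging role that the added edge $uv$ plays in $B_1'$. This is sound (the case $\{x,y\}=\{u,v\}$ is correctly excluded, and the remaining cases on $\{x,y\}\cap\{u,v\}$ do check out), but it costs you the lifting bookkeeping that the paper avoids entirely. On the other hand, your approach is self-contained from the minor-free viewpoint and does not require fixing or reasoning about a particular SP build sequence, which is a clean structural perspective. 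One small wording point: when both $B_i'$ are rings you conclude $G$ is a ring ``contradicting the hypothesis''; this is fine if you have tacitly assumed $G$ is not a ring, but it reads more naturally as simply landing in the first alternative of the lemma.
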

\begin{proof}
Suppose that $G$ has no strict $2$-cut.  Let $e_1,e_2, \ldots ,e_k$ be
the result of the first parallel operation on the original edge $st$.
Let $P_i=a_1,a_2, \ldots ,a_l$ be the path obtained after subdividing
$e_i$ if there is ever a nontrivial series operation applied to $e_i$
(where the $a_j$'s are the edges).  Since $G$ is $2$-node-connected,
$k \geq 2$.  If $k = 2$, then each edge of $P_1,P_2$ results in a
capacitated edge in $G$ and hence $G$ is a capacitated ring. If this
were not the case, then some $a_j$ has a parallel operation followed
by a series operation, and hence the ends of $a_j$ would form a strict
cut in $G$.  So suppose that $k \geq 3$. Clearly at most one $e_i$ is
subdivided, say $P_1$, or else $s,t$ is a strict cut.  Again, either
each edge of $P_1$ becomes a capacitated edge. Otherwise any series
operation to some $a_p$ results in its endpoints inducing a strict
$2$-cut.
\end{proof}

The following lemma is useful in applying the push operation (cf.
Fact~\ref{fact:pushcond}).

\begin{lemma}
  \label{lemma:centralrl}
  Let $u,v$ be a pair of nodes in a series parallel graph, and let $l,r$
  be a $2$-cut separating $u$ from $v$. Let $L$ be a central set
  containing $l$, but not $u$, $r$ and $v$; and let $R$ be a central set
  containing $r$, but not $u$, $l$ and $v$.  Then $L\setminus R$ and
  $R\setminus L$ are central.
\end{lemma}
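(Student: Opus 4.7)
The complement conditions are easy: $V\setminus(L\setminus R) = (V\setminus L)\cup R$ is a union of two connected sets (by centrality of $L$ and $R$) sharing $r$, hence connected; symmetrically $V\setminus(R\setminus L)$ is connected through $l$. The work lies in proving $G[L\setminus R]$ and $G[R\setminus L]$ are connected. I plan to decompose $G$ along the $2$-cut $\{l,r\}$: let $C_1,\ldots,C_k$ be the components of $G-\{l,r\}$ and $B_j = G[C_j\cup\{l,r\}]$ the bridges, and set $L_j = L\cap C_j$, $M_j = C_j\setminus L_j$, $R_j = R\cap C_j$, $N_j = C_j\setminus R_j$. Because $l\in L$, $r\notin L$, and $G[L]$ is connected, any path in $G[L]$ from $l$ to a vertex of $C_j$ must stay inside $V(B_j)$ (it cannot leave through the excluded vertex $r$); hence $L_j\cup\{l\}$ is connected in $B_j$, and three analogous arguments give that $M_j\cup\{r\}$, $R_j\cup\{r\}$, and $N_j\cup\{l\}$ are each connected in $B_j$. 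Since $L\setminus R = \{l\}\cup\bigsqcup_j (L_j\setminus R_j)$ with all pieces sharing $l$, the problem reduces to a local claim: in every $2$-terminal series-parallel graph $B$ with terminals $l,r$ and sets $L,R\subseteq V(B)\setminus\{l,r\}$ satisfying the four connectivity hypotheses above, $(L\setminus R)\cup\{l\}$ is connected in $B$.

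I plan to prove this local claim by induction on the series-parallel construction of $B$. The parallel case $B = B'\|B''$ is immediate: the four hypotheses restrict independently to each factor (a path detouring through the other factor would have to re-use $l$ or $r$, which is impossible), so by induction $(L\setminus R)\cup\{l\}$ is connected in each factor, and the pieces glue at $l$. The substantive case is a series composition $B = B'\oplus B''$ with middle vertex $m$. Here the hypotheses force four structural implications: for instance, if $m\in L$ then $V(B')\setminus\{l,m\}\subseteq L$, because otherwise $M\cup\{r\}$ would contain an interior vertex of $B'$ with no path to $r$ that avoids the excluded vertex $m$; three symmetric statements hold. These implications split the series case into four subcases by whether $m$ is in $L$ or in $R$. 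When $m\in L\cap R$, a direct calculation shows $(L\setminus R)\cup\{l\}$ coincides with $N\cup\{l\}$, already connected by hypothesis; when $m\notin L\cup R$, it coincides with $L\cup\{l\}$. In the two mixed subcases, one of $V(B')$ or $V(B'')$ is forced to lie entirely in $L$ or in $R$ and provides a connected trunk through $m$, while the inductive hypothesis applied to the opposite half (viewed as a $2$-terminal series-parallel graph) supplies the remaining vertices. The claim for $R\setminus L$ follows by interchanging the roles of $l,r$ and $L,R$.

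The main obstacle is the series case of this induction, where the four forced inclusions are essential: these implications are what the series-parallel hypothesis buys us, and in fact the conclusion can fail in non-series-parallel settings---for instance, in $K_{3,3}$ with a subdivided edge one can exhibit valid $L,R$ for which $R\setminus L$ is disconnected. The complement argument, the parallel case, and the bridge-decomposition reduction are routine by comparison.
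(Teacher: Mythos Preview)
Your approach is genuinely different from the paper's. The paper argues by contradiction: assuming a second component $R_2$ of $R\setminus L$ exists, it locates two internally-disjoint $R_1$--$R_2$ paths (one through $R\cap L$, one through $V\setminus(R\cup L)$), connects them to $l$ via paths in $L$ and in $V\setminus R$, and uses a detour through the component $C_u$ of $G-\{l,r\}$ to complete a $K_4$ minor. Your argument is structural rather than certificate-based: you reduce to the bridges of $\{l,r\}$ and run an induction on a series--parallel decomposition. Your case analysis in the series step (the four forced inclusions depending on the position of the middle vertex $m$) is correct and is exactly where the $K_4$-freeness is doing work; the parallel case and the reduction to bridges via the four local connectivity facts are also fine.

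There is, however, one genuine gap. Your induction is on ``the series-parallel construction of $B$'' with terminals $l,r$, but you never establish that each bridge $B_j=G[C_j\cup\{l,r\}]$ \emph{is} a two-terminal series-parallel graph with $l,r$ as its terminals. Being a subgraph of a $K_4$-minor-free graph only gives that $B_j$ is $K_4$-minor-free; it does not automatically give a series--parallel decomposition rooted at $l,r$. Indeed, $K_4-e$ with the non-adjacent pair as terminals is the standard obstruction. The fix is available but needs to be stated: since $\{l,r\}$ separates $u$ from $v$ and both $u,v$ lie outside $L\cup R$, the centrality hypotheses force the bridges through $u$ and through $v$ to be non-degenerate (each sees both $l$ and $r$); hence for \emph{any} bridge $B_j$ there is a second non-degenerate bridge providing an $l$--$r$ path disjoint from $C_j$, so $B_j+lr$ is $K_4$-minor-free. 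If in addition $G$ is $2$-connected, one checks $B_j+lr$ is $2$-connected, and then any edge of a $2$-connected $K_4$-minor-free graph can serve as the root of a series--parallel decomposition, giving the structure you need. Degenerate bridges (where one of $l,r$ has no neighbour in $C_j$) must be handled separately, but as you can verify, centrality then forces $C_j\subseteq L\setminus R$ or $C_j\cap(L\setminus R)=\emptyset$, and the claim is immediate. Without $2$-connectivity of $G$ the bridge may fail to be two-terminal series-parallel with terminals $l,r$ (a pendant in $C_j$ suffices), so as written your proof does not cover the lemma in full generality, though it does cover the $2$-connected case in which the paper actually uses it.
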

\begin{proof}
    First, we prove that $V\setminus (R\setminus L) = (V\setminus R)\cup
  L$ is connected. This is the case since $(V\setminus R)$ and $L$ are
  each connected, and both contain $l$.

  Then, we prove that $R\setminus L$ is connected. Let $R_1$ be the
  connected component of $R\setminus L$ that contains $r$. Assume
  $R\setminus L$ contains another connected component $R_2$. The
  $2$-cut $l,r$ separates $u$ from $v$, so $V-l,r$ contains at least two
  connected components $C_u$ and $C_v$, containing $u$ and $v$
  respectively. Since $R_2$ contains neither $l$ nor $r$, it is
  entirely in a connected component $C$ of $V-l,r$. Assume without
  loss of generality that $C$ is not $C_u$. Since $R$ is connected,
  there is a path $P$ in $R$ from $R_1$ to $R_2$; choose $R_2$ so that $P$ is minimal.
  It follows that $P=(r_1,Q,r_2)$ where $r_i \in R_i$, and $Q$ is a subpath of $R \cap L$.
  Note that $P \cup R_2$ is disjoint from $C_u$.

 \parpic[r]{
  \begin{minipage}{3in}
    \begin{center}
      \includegraphics[height=2in]{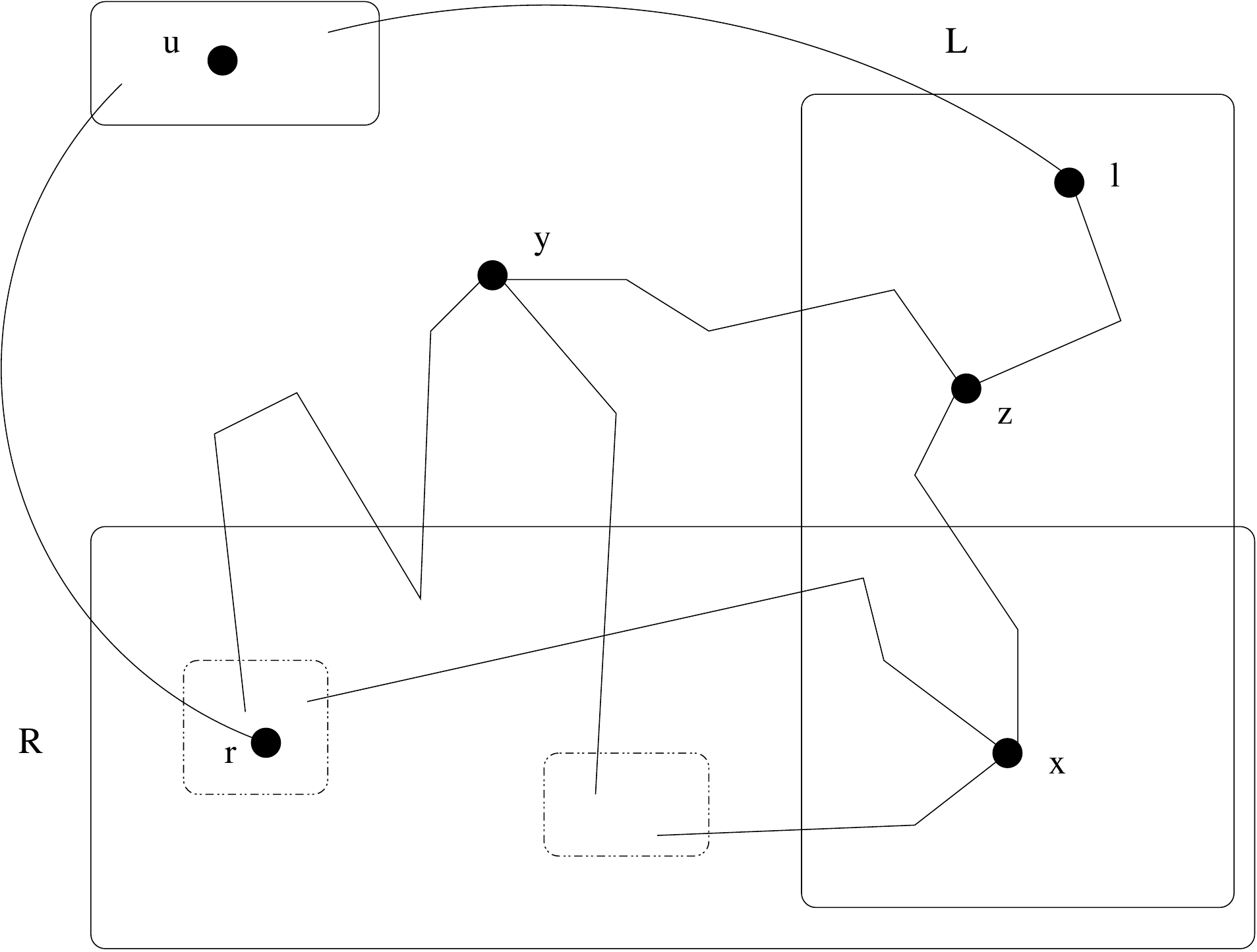}
    \end{center}
  \end{minipage}
}
  Since $R_1$ and $R_2$ are outside of $L$, there is a path $P'$ outside
  of $L$ connecting them. This path may have the form
  $a_1,Q_1,I_2,Q_2, \ldots ,Q_p,a_2$ where $a_i \in R_i$, each $Q_i$ is a path
  in $V \setminus (R \cup L)$ and each $I_j$ is a path in some component
  (other than $R_1,R_2$ in the graph induced by $R \setminus L$. Once again $P'$
  is disjoint from $C_u$, since it contains neither $l,r$ and hence is included in $C$.
  Since the internal nodes of $P'$ lie entirely in $V \setminus (R
  \cup L)$ and $P$'s lie in $R \cap L$, the paths are
  internally node-disjoint paths connecting $R_1,R_2$ in the graph $G-C_u$.

  Since $G[L]$ is connected, there is a path $P_1$ joining $l$ to an
  internal node of $P$ within $L$; since $G[V\setminus R]$ is
  connected, there is a path $P_2$ in this graph connecting $l$ to some
  internal node of $P'$. Choose these paths minimal, and let $x$ be the
  first point of intersection of $P_1$ with $P$, and $y$ be the
  first intersection of $P_2$  with $P'$. Once again, note that $P_1 \cup P_2
  \setminus l$ does not contain $l,r$ and hence is contained in
  $G-C_u$ since $y,x \in C$.  Choose $z \in P_1$ such that the subpath of $P_2$
  from $z$ to $y$ does not intersect $P_1$.  Clearly $z \in L \setminus R$ by construction.
  We now have constructed a $K_4-e$ minor (in fact a homeomorph) on the nodes $z,x,y$ and
  $r \in R_1$. This graph is also contained within
  $C$, except for $r$ and possibly $z$ if $z=l$. We may now extend
  this to a $K_4$ minor using $C_u$ and the subpath of $P_1$
  joining $l,z$.

  So $R\setminus L$ is connected, so it is central. $L\setminus R$ is
  central by the same argument.
\end{proof}

\section{Instances where the Cut Condition is Sufficient for Routing}

\subsection{Fully Compliant Instances }
\label{sec:compliant}

Let $G$ be a series-parallel supply graph and $H$ a demand graph
defined on the same set of nodes. An edge $e$ of $H$ is \emph{fully
  compliant} if $G+e$ is also series-parallel.  An instance $G,H$ is
fully compliant if $G+e$ is series-parallel for each $e \in E(H)$.  We
note that $H$ itself may not be series-parallel in fully compliant
instances. For instance, we could take $G$ to be a ring and $H$ to be
the complete demand graph.

In this section we prove that fully compliant instances $G,H$ are
integrally routable if they satisfy the cut condition and $G+H$ is
Eulerian. This forms one base case in showing that compliant
instances (introduced in Section~\ref{sec:semicompliant}) are routable,
which in turn will yield our congestion 5 routing result for general
series-parallel instances.

We start with several technical lemmas.



\begin{lemma}
 \label{lem:comp}
  Let $G$ be $2$-node-connected series-parallel graph. A demand edge $uv$
  is fully compliant if and only if there is an edge $uv$ in $G$, or $u,v$
  is a $2$-cut in $G$.
\end{lemma}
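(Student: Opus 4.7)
My plan is to prove both directions, with the converse being the substantive one.

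For the forward direction ($\Leftarrow$), if $uv$ is already an edge of $G$, then $G + uv$ merely has an additional parallel copy of an edge, which cannot introduce a $K_4$ minor; so $G + uv$ remains series-parallel. If $\{u,v\}$ is a 2-cut of $G$, let $B_1, \ldots, B_k$ be the $\{u,v\}$-bridges (induced subgraphs on the components of $G - \{u,v\}$ together with $u, v$). I would show each $B_i + uv$ is series-parallel by contradiction: if some $B_i + uv$ had a $K_4$ minor, I could combine it with a $u$-$v$ path $P$ from another bridge $B_j$ (contracting $P$ to play the role of the edge $uv$) to produce a $K_4$ minor in $G$ itself, contradicting that $G$ is series-parallel. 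Hence each $B_i$ is 2-terminal series-parallel at terminals $u, v$, and $G + uv$, being their parallel composition together with the extra edge $uv$ at $u, v$, is series-parallel.

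For the converse ($\Rightarrow$), I argue the contrapositive: assume $uv \notin E(G)$ and $\{u,v\}$ is not a 2-cut, and construct a $K_4$ minor in $G + uv$. Since $G$ is 2-node-connected, Menger's theorem yields two internally vertex-disjoint $u$-$v$ paths $P_1, P_2$ in $G$, each of length at least $2$ (as $uv \notin E(G)$), so each $P_i - \{u, v\}$ is a nonempty subpath. Since $\{u, v\}$ is not a 2-cut, $G - \{u, v\}$ is connected; hence there exists a shortest path $Q$ in $G - \{u,v\}$ from an internal vertex of $P_1$ to an internal vertex of $P_2$, and minimality of $Q$ guarantees that its internal vertices lie outside $(P_1 \cup P_2) - \{u, v\}$.

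I then exhibit four branch sets in $G + uv$: $\{u\}$, $\{v\}$, $V_3 := V(P_1 - \{u,v\}) \cup \text{int}(Q)$, and $V_4 := V(P_2 - \{u,v\})$. These are pairwise disjoint (by internal disjointness of $P_1, P_2$ and the shortness of $Q$) and each is connected (the subpaths themselves, with $\text{int}(Q)$ attached to $V_3$ at the start of $Q$). The six required $K_4$-adjacencies are all realized: $\{u\}$-$\{v\}$ by the new edge $uv$; $\{u\}$-$V_3$ and $\{u\}$-$V_4$ by the neighbors of $u$ on $P_1, P_2$; $\{v\}$-$V_3$ and $\{v\}$-$V_4$ symmetrically; and $V_3$-$V_4$ by the last edge of $Q$. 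This $K_4$ minor contradicts the hypothesis that $G + uv$ is series-parallel.

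I expect the main delicacy to be coordinating the branch sets: keeping them connected, pairwise disjoint, and carrying all six adjacencies simultaneously. The trick is the choice of $Q$ as a shortest cross-path, which is exactly what prevents $\text{int}(Q)$ from contaminating $V_4$ or from short-circuiting the rest of $P_1$. The forward direction's only subtle point is showing each bridge is 2-terminal series-parallel at $u, v$, which is handled by the borrowing argument that substitutes a $u$-$v$ path from a different bridge in place of the putative edge $uv$.
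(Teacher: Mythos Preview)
Your proof is correct and follows essentially the same approach as the paper. The $\Rightarrow$ direction is identical in spirit: Menger gives two internally disjoint $u$--$v$ paths, the absence of an edge forces internal vertices, the failure of $\{u,v\}$ to be a $2$-cut supplies a cross-path, and together these yield a $K_4$ minor in $G+uv$; your explicit branch-set bookkeeping is just a more careful version of the paper's sketch. For the $\Leftarrow$ direction you argue directly (each bridge $B_i+uv$ is $K_4$-free by borrowing a $u$--$v$ path from another bridge, then recompose), whereas the paper argues the contrapositive (given a $K_4$ minor in $G+uv$, a second component of $G-\{u,v\}$ would supply an $S_1$--$S_2$ path avoiding $S_3,S_4$, pushing the $K_4$ into $G$). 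Both hinge on the same borrowing trick; the paper's contrapositive framing just avoids the extra composition step you take at the end.
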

\begin{proof}
  Suppose that there is no edge $uv$ in $G$, and that $u,v$ is not a
  $2$-cut in $G$. Since $G$ is $2$-connected, it contains at least
  $2$ node-disjoint paths from $u$ to $v$. Since there is no edge
  $uv$, each of these paths contain at least one node apart from
  $u$ and $v$. And since $u,v$ is not a $2$-cut, there is a path
  connecting these two paths in $G-\{u,v\}$. Therefore, $uv$
  is not fully compliant, because $G+uv$ contains a $K_4$.

  Suppose $G+uv$ contains a $K_4$. Then there is no edge $uv$ in $G$,
  since it is series-parallel. Let $S_1$, $S_2$, $S_3$ and $S_4$ be
  the sets of nodes in $G+uv$ that form the $K_4$ minor. Without loss
  of generality $u$ and $v$ are in $S_1$ and $S_2$ respectively; they
  have to be in different sets for otherwise $G$ will have a $K_4$
  minor. Since $S_3$ and $S_4$ are adjacent, they are contained in
  the same connected component of $G-\{u,v\}$, say $C$. We claim that
  there cannot be another connected component $C'$ in $G-\{u,v\}$.
  For if it did, then since $G$ is $2$-node-connected, both $u$ and
  $v$ have an edge to $C'$. This implies that there is a path between
  $S_1$ and $S_2$ in $G$ that avoids $C$ (and hence $S_3$ and $S_4$);
  then $S_1,S_2,S_3,S_4$ would form a $K_4$ minor in $G$ which is
  impossible since $G$ is series-parallel. Therefore, there is no
  other connected component in $G-\{u,v\}$, and $u,v$ is not a
  $2$-cut.
\end{proof}

%


\begin{lemma}
  \label{lem:2cut}
  Let $G$ be $2$-node-connected series-parallel graph. If an edge $uv$
  is not fully compliant, then there is a $2$-cut separating $u$ from $v$.
\end{lemma}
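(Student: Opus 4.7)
My plan is induction on $|V(G)|$. By Lemma~\ref{lem:comp}, saying $uv$ is not fully compliant is equivalent to saying $uv\notin E(G)$ and $\{u,v\}$ is not a $2$-cut of $G$; I will carry this as my standing assumption. The base cases ($|V(G)|\le 3$, where every pair of vertices is adjacent) are vacuous. For the inductive step I apply Lemma~\ref{lem:decomp}: $G$ is either a capacitated ring or has a strict $2$-cut. The ring alternative is also vacuous, since in a ring any two non-adjacent vertices form a $2$-cut, so the hypothesis ``$uv$ is not fully compliant'' cannot hold.

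So I may assume $G$ has a strict $2$-cut $\{x,y\}$ with bridges $B_1,\dots,B_k$, $k\ge 3$, at least two of them nontrivial. The easy case is when $u$ and $v$ lie in the interiors of two distinct bridges: then $\{x,y\}$ itself is a $2$-cut disjoint from $\{u,v\}$ that separates them. Otherwise, possibly after swapping labels, both $u$ and $v$ belong to $V(B_1)$, with not both in $\{x,y\}$ (the case $\{u,v\}=\{x,y\}$ is ruled out by the standing assumption). I form $B_1^+$ from $B_1$ by adding a virtual edge $xy$ if it is not already present; a standard argument, using that $G$ is $2$-connected and series-parallel and that $\{x,y\}$ is a $2$-cut, shows that $B_1^+$ is $2$-connected and series-parallel. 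Moreover $B_1^+$ has strictly fewer vertices than $G$, since some other nontrivial bridge contributes interior vertices.

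The key reduction step is verifying the inductive hypothesis on $uv$ inside $B_1^+$. Since $uv\ne xy$, we still have $uv\notin E(B_1^+)$. Suppose for contradiction that $\{u,v\}$ were a $2$-cut of $B_1^+$. The virtual edge $xy$ keeps $x$ and $y$ in the same component of $B_1^+-\{u,v\}$, so there must exist a component $D$ of $B_1^+-\{u,v\}$ containing neither $x$ nor $y$. But then $D$ persists as a component of $G-\{u,v\}$, because the only attachment of $B_1$ to the rest of $G$ is through $x$ and $y$; this contradicts our hypothesis that $\{u,v\}$ is not a $2$-cut in $G$. Thus Lemma~\ref{lem:comp} applies inside $B_1^+$ and by induction there is a $2$-cut $\{a,b\}$ in $B_1^+$, disjoint from $\{u,v\}$, separating $u$ from $v$.

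The final step is to lift $\{a,b\}$ to a $2$-cut of $G$ separating $u$ from $v$, via a case split on $\{a,b\}\cap\{x,y\}$. When $\{a,b\}\cap\{x,y\}=\emptyset$, the virtual edge $xy$ is present in $B_1^+-\{a,b\}$, so $x,y$ lie in a common component there; the component of $B_1^+-\{a,b\}$ containing the one of $u,v$ not in that $\{x,y\}$-component has no vertex of $\{x,y\}$ and therefore survives untouched when we pass from $B_1-\{a,b\}$ to $G-\{a,b\}$, because the other bridges attach only at $x,y$. When $\{a,b\}$ meets or equals $\{x,y\}$, a parallel bookkeeping shows that after removing the element(s) of $\{a,b\}\cap\{x,y\}$ the other bridges of $G$ can only merge with the unique component of $B_1^+-\{a,b\}$ containing the remaining attachment vertex (or with nothing at all), so $u$ and $v$ still lie in different components of $G-\{a,b\}$. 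The main obstacle is carrying out this final case analysis cleanly, in particular the subcase where $u$ or $v$ coincides with $x$ or $y$; it is routine but requires careful tracking of how components of $B_1^+-\{a,b\}$ merge with the other bridges in $G-\{a,b\}$, and no new ideas beyond those above.
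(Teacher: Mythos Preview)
Your inductive argument via Lemma~\ref{lem:decomp} is correct; two small phrasing issues are harmless (when one of $u,v$ equals $x$ or $y$ the ``virtual edge keeps $x,y$ together'' sentence does not literally apply, but the needed conclusion---some component $D$ of $B_1^+-\{u,v\}$ misses both $x$ and $y$---still follows; and the lift subcase $\{a,b\}=\{x,y\}$ cannot occur, since $B_1^+-\{x,y\}$ is exactly the interior of $B_1$, which is connected by the definition of a bridge). The paper, however, takes a much shorter and quite different route: it argues directly by contradiction using Menger's theorem. If no $2$-cut separates $u$ from $v$, there are three internally disjoint $u$--$v$ paths, each with an internal vertex since $uv\notin E(G)$; and since $\{u,v\}$ is itself not a $2$-cut (Lemma~\ref{lem:comp}), $G-\{u,v\}$ is connected, so two of these paths can be linked by a path avoiding $u,v$, yielding a $K_4$ minor in $G$. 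Your structural induction avoids Menger and explicit minor-hunting, which is self-contained in a different way, but the price is the bridge-decomposition bookkeeping and the case analysis for lifting $\{a,b\}$ from $B_1^+$ back to $G$; the paper's three-line argument sidesteps all of that.
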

\begin{proof}
  Suppose an edge $uv$ is not fully compliant, and there is no $2$-cut
  separating $u$ from $v$. Then there are three node disjoint paths
  connecting $u$ and $v$. Since $uv$ is not fully compliant, there is no
  edge $uv$ and each of these paths contain at least one node apart
  from $u$ and $v$. By Lemma~\ref{lem:comp}, $u,v$ is not a $2$-cut,
  so there are paths connecting these paths in
  $G-\{u,v\}$. This creates a $K_4$ minor in $G$, which is
  impossible.
\end{proof}

\paragraph{A $2$-Cut Reduction:}
A {\em partition} of $G$ is any pair of graphs $(G_1,G_2)$ such
that: (i) $V(G_1) \cap V(G_2) = \{u,v\}$, for distinct nodes $u,v$
(ii) $E(G)$ is the disjoint union of $E(G_1),E(G_2)$ and (iii)
$|V(G_i)| \geq 3$ for each $i$. Thus any $2$-cut admits possibly
several partitions, and we refer to any such as a {\em partition
for} $\{u,v\}$.  We say that a demand graph $H$ has no demands {\em
crossing} a partition for $u,v$, if $H$ can be written as a disjoint
union $H_1 \cup H_2$ where for $i=1,2$, $H_i$ is a subgraph of
$H[V(G_i)]$, the demand graph induced by one side of the partition.
Note that even if $G,H$ satisfy the cut condition, it may not be the
case that $G_i,H_i$ does. It is easily seen however that we may
always add some number $k_i$ of parallel edges between $u,v$ in each
$G_i$ so that $G_i,H_i$ does have the cut condition. For $i=1,2$ the
smallest such number is called the {\em deficit} of the reduced
instance $G_i,H_i$.  One easily checks that the deficit of at least
one of the reduced instances is at most $0$ if $G,H$ satisfies the
cut condition.

\begin{lemma}
\label{lem:reduce} Let $G,H$ satisfy the cut condition and let
$(G_1,G_2)$ be a partition for $2$-cut $\{u,v\}$ such that $H$ has
no demands crossing the partition. Let $k_i$ be the deficit of
$G_i,H_i$; without loss of generality $k_1 \geq 0 = k_2$. Let $H_2'$
be obtained by adding $k_1$ demand edges between $u,v$ in $H_2$. We
also let $H_1'=H_1$. Let $G'_1$ be obtained by adding $k_1$ supply
edges between $u,v$ to $G_1$; we also let $G_2'=G_2$. Then
$G'_i,H'_i$ satisfies the cut condition for $i=1,2$. Moreover, if
$G+H$ was Eulerian, then so is $G'_i+H'_i$ for $i=1,2$. Finally, if
there is an integral routing for each instance $G'_i,H'_i$, then
there is such a routing for $G,H$.
\end{lemma}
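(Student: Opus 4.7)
The proof splits into four claims, addressed in order.

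The cut condition for $(G_1', H_1') = (G_1 + k_1 \cdot uv, H_1)$ holds by the very definition of $k_1$ as the deficit. For $(G_2', H_2') = (G_2, H_2 + k_1 \cdot uv)$, I fix an arbitrary $S \subseteq V(G_2)$ and split into cases. If $S$ does not separate $u$ from $v$, I lift $S$ to $\tilde{S} \subseteq V$ by adding or removing $V(G_1)\setminus\{u,v\}$ so that $u$ and $v$ lie on the same side of $\tilde S$; the cut condition on $(G,H)$ applied to $\tilde S$ yields $|\delta_{G_2}(S)| \geq |\delta_{H_2}(S)|$, which suffices since the new $uv$-edges in $H_2'$ do not appear in $\delta_{H_2'}(S)$. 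If $S$ separates $u$ and $v$ (say $u\in S$, $v\notin S$), I pick $T \subseteq V(G_1)$ with $u\in T$, $v\notin T$ attaining the deficit $k_1 = |\delta_{H_1}(T)| - |\delta_{G_1}(T)|$, and set $\tilde S = T \cup S$. Since $V(G_1)\cap V(G_2) = \{u,v\}$ and $E(G)$ partitions as $E(G_1)\sqcup E(G_2)$, the cut decomposes as $|\delta_G(\tilde S)| = |\delta_{G_1}(T)| + |\delta_{G_2}(S)|$ and $|\delta_H(\tilde S)| = |\delta_{H_1}(T)| + |\delta_{H_2}(S)|$. The cut condition on $(G,H)$ for $\tilde S$ thus rearranges to $|\delta_{G_2}(S)| - |\delta_{H_2}(S)| \geq k_1$, exactly what is needed.

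For the Eulerian property, only $u$ and $v$ change degree between $G+H$ and $G_i'+H_i'$. Since $G+H$ is Eulerian every cut has even cardinality, so $|\delta_G(T)| \equiv |\delta_H(T)| \pmod{2}$ for every $T \subseteq V$. Applied to the sets $T$ above and combined with the decomposition of $\delta(T)$ at $u$ into its $G_1$- and $G_2$-contributions, this forces $|\delta_{H_1}(T)| - |\delta_{G_1}(T)| \equiv \deg_{G_2+H_2}(u) \pmod{2}$ for every separating $T$. Hence the deficit $k_1$, interpreted as the least nonnegative integer of matching parity that restores the cut condition, has parity $\deg_{G_2+H_2}(u) \pmod{2}$, making $\deg_{G_i'+H_i'}(u)$ (and $\deg_{G_i'+H_i'}(v)$) even.

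To combine integral routings, let $f_1$ route $H_1$ integrally in $G_1'$ and $f_2$ route $H_2'$ integrally in $G_2$. The $k_1$ dummy $uv$-demands of $H_2'$ are routed by $f_2$ along simple paths $P_1,\ldots,P_{k_1}$ in $G_2$. Since $G_1'$ has unit capacities, each parallel $uv$-edge added to $G_1$ is used by at most one $f_1$-path, and each $f_1$-path is simple, so it uses at most one such edge. I match each used $uv$-edge to a distinct $P_i$ and splice $P_i$ into the corresponding $f_1$-path in place of that edge, yielding a simple path in $G$ for each demand in $H_1$. Demands in $H_2$ keep their $f_2$-paths. Edges of $G_1$ are loaded exactly as in $f_1$ (only dummy edges are rewired) and edges of $G_2$ exactly as in $f_2$ (each $P_i$ is reused once), so capacities are respected.

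The crux is the combined cut $\tilde S = T \cup S$ in the argument for $(G_2',H_2')$: choosing $T$ to realize the deficit is precisely what converts the cut condition on $(G,H)$ into the required inequality for $(G_2,H_2)$. The parity check for the Eulerian claim and the splicing construction for routings are then comparatively routine.
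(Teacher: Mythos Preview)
Your proof is correct and follows essentially the same approach as the paper's: both combine a deficit-attaining separating set $T\subseteq V(G_1)$ with an arbitrary cut $S\subseteq V(G_2)$ to verify the cut condition for $(G_2',H_2')$, both deduce the parity of $k_1$ from a separating cut (the paper computes $k_1\equiv\deg_{G_1+H_1}(u)$, you compute the equivalent $k_1\equiv\deg_{G_2+H_2}(u)$), and both splice the $uv$-paths routed in $G_2$ in place of the dummy supply edges. The only quibble is the phrase ``interpreted as the least nonnegative integer of matching parity that restores the cut condition,'' which is confusing---$k_1$ is simply $|\delta_{H_1}(T)|-|\delta_{G_1}(T)|$ for the maximizing separating $T$, and your preceding sentence already shows every such value has the required parity, so the conclusion follows without any reinterpretation.
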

\begin{proof}
  First, let $S \subset V(G_1)$ which defines $G_1$'s deficit.  That
  is the number of demand edges in $\delta_{H_1}(S)$ is $k_1$ greater
  than $|\delta_{G_1}(S)|$. Without loss of generality, $u \in S$.
  Also, we must have $v \not \in S$, for otherwise $S \cup V(G_2)$
  violates the cut condition for $G,H$.  Clearly, $G_1',H_1$ now satisfies
  the cut condition. Next  suppose that $G_2,H'_2$
  does not obey the cut condition.  Then there exists some $S'$
  containing $u$ and not $v$, such that
  \[
  |\delta_{H_2}(S')| + k_1 > |\delta_{G_2}(S')|.
  \]
  But then $S \cup S'$ violates the cut condition for $G,H$.

  Let $G+H$ be Eulerian, and note that all nodes except possibly $u,v$ have even degree
  in $G'_i+H'_i$ (and in $G_i+H_i$). It is thus sufficient to show that $u,v$ also have
  even degree. Let $p$ be the parity of $u$ in $G_1+H_1$.
  Since any graph has an even number of
  odd-degree nodes, $v$ must also have parity $p$ in $G_1+H_1$.  Let
  $s=|\delta_{G_1}(S)|,d=|\delta_{H_1}(S)|$ and so $k_1 = d-s$.  Since
  $S$ separates $u,v$ we have that $d+s=|\delta_{G_1+H_1}(S)|$ has
  parity $p$ and hence $k_1 = d-s=d+s - 2s$ does as well.  That is, the
  deficit $k_1$ has parity $p$ and so $u$ and $v$ have even degree in $G'_1+H_1'$.
  This immediately implies the same for $G_2'+H_2'$.

  The last part of the lemma is immediate, since each demand from $H$ is routed
  in $G_1 \cup G_2$ together with the $k_1$ new supply edges in
  $G_1$. However, any such new edge was routed in $G_2$ together
  with the demands in $H_2$, so we can simulate the edge by routing
  through $G_2$.
\end{proof}


\paragraph{Fully Compliant instances are routable:}
Let $G,H$ satisfy the cut condition where $G$ is series parallel,
and each edge in $H$ is compliant (with $G$). We now show that if
$G+H$ is Eulerian, then there is an integral routing of $H$ in $G$.
The proof is algorithmic and proceeds by repeatedly applying the
reduction described above and those from Section~\ref{sec:basicops}.
In particular, at any point if there is a slack, parallel or $1$-cut
reduction we apply the appropriate operation. Thus we may assume
that $G$ is $2$-node connected, and that each demand or supply edge
lies in a tight cut, and no demand edge is parallel to a supply
edge.

If $G$ is a capacitated ring, then the result follows from the
Okamura-Seymour theorem. Otherwise by Lemma~\ref{lem:decomp}, there
is some strict cut $u,v$.  Thus $G$ has $3$ node-disjoint paths
$P_1,P_2,P_3$ between $u,v$ and so there is a partition $G_1,G_2$
for $u,v$ such that either $G_1$ or $G_2$ has $2$ node-disjoint
paths between $u,v$.  Without loss of generality, for $i=1,2$ $P_i$
is contained in $G_i$.  But then there could not be any demand edge
crossing the partition.  Since if $f \in E(H)$ has one end in
$G_1-\{u,v\}$ and the other in $G_2-\{u,v\}$, then we would have a
$K_4$ minor in $G+f$.  Thus we may decompose using
Lemma~\ref{lem:reduce} to produce two smaller instances and
inductively find routings for them. These two routings yield a
routing for $G,H$ by the last part of Lemma~\ref{lem:reduce}.

\subsection{Routable $K_{2m}$ Instances }

\parpic[r]{
  \begin{minipage}{3in}
    \begin{center}
      \includegraphics[height=1.5in]{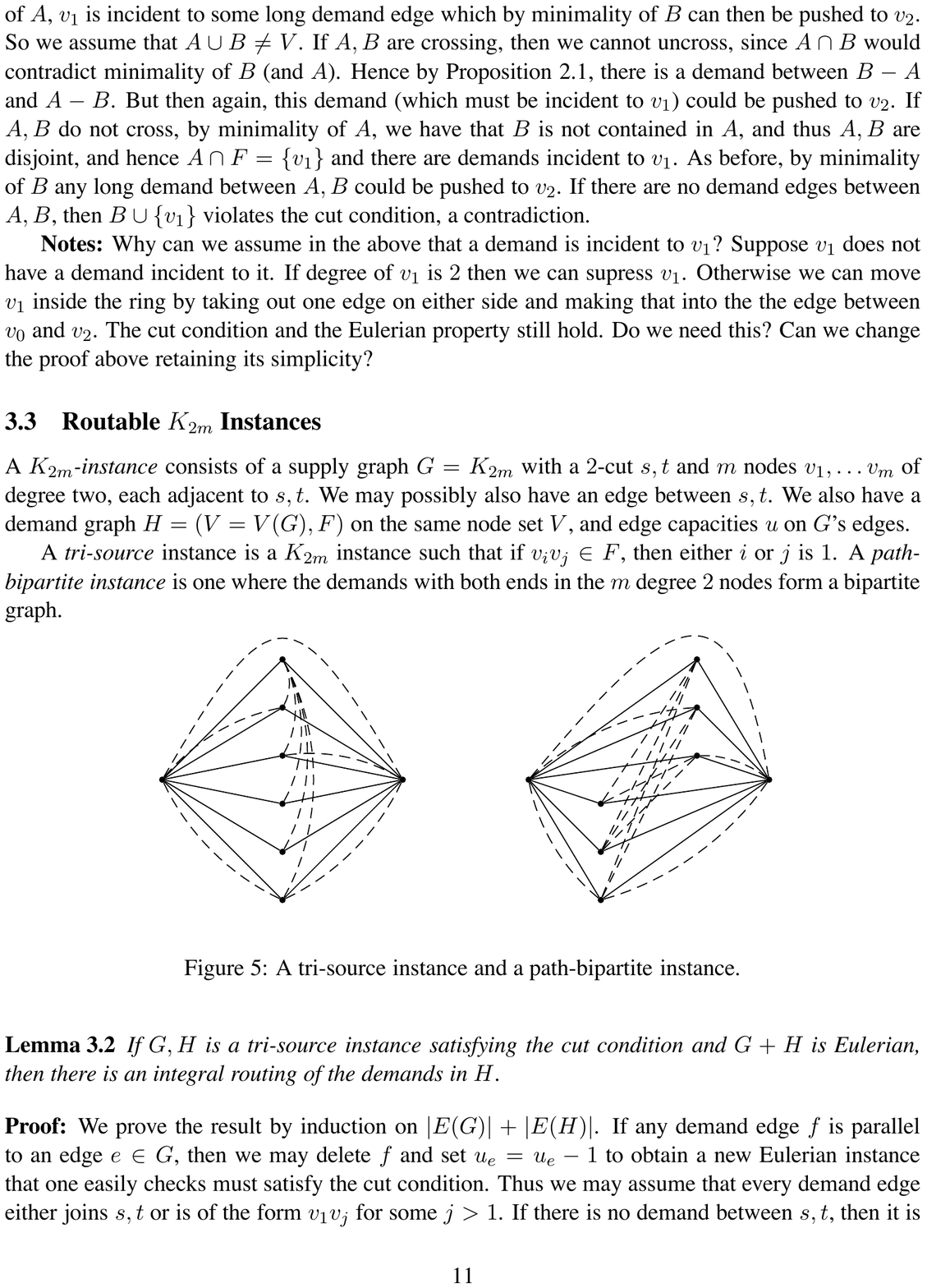}
    \end{center} \label{fig:trisource}
  \end{minipage}
}
A {\em $K_{2m}$-instance} consists of a supply graph $G=K_{2m}$ with a
$2$-cut $s,t$ and $m$ nodes $v_1,\ldots v_m$ of degree two, each
adjacent to $s,t$. We may possibly also have an edge between $s,t$. We
also have a demand graph $H=(V=V(G),F)$ on the same node set $V$, and
edge capacities $u$ on $G$'s edges.
A {\em path-bipartite
  instance} is one where the demands with both ends in the $m$ degree
$2$ nodes form a bipartite graph. One special case is a so-called
{\em tri-source} instance, where if $v_iv_j \in F$, then either $i$
or $j$ is $1$. The figure shows a tri-source and a path-bipartite
instance.

\begin{lemma}\label{lem:bipk2m}
If $G,H$ is a path-bipartite instance satisfying the cut condition
and $G+H$ is Eulerian, then there is an integral routing of $H$ in
$G$.
\end{lemma}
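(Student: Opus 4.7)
The plan is to proceed by induction on $|E(G)|+|E(H)|$. After applying the basic reductions of Section~\ref{sec:basicops} (which preserve the cut condition, the Eulerian property, and the path-bipartite structure), I pick a demand edge $g$ and route it along a short path, so that induction applies to the residual instance. Demands $g$ of the form $sv_i$, $tv_i$, or $st$ (incident to a hub) are routed along the corresponding parallel supply edge when one exists, with cut-preservation immediate because $[g\in\delta(A)]=[e\in\delta(A)]$ in every cut $A$; in the parallel-reduced case (where the parallel supply edge has been absorbed), the length-$3$ detour through the other hub handles them, via a similar tight-cut analysis.

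The main case is a bipartite demand $g=v_iv_j$ with $v_i\in V_1$ and $v_j\in V_2$. I try the two length-$2$ routings via $s$ (decrementing $a_i, a_j$) and via $t$ (decrementing $b_i, b_j$), and claim that at least one preserves the cut condition. Suppose both fail. Since $G+H$ is Eulerian, the surplus $\sur(A):=|\delta_G(A)|-|\delta_H(A)|$ is always even. A short case analysis of the three indicators $[sv_i\in\delta(A)]$, $[sv_j\in\delta(A)]$, $[g\in\delta(A)]$ then forces a failed via-$s$ routing to occur at a tight cut $S$ with $s\in S$ and $v_i,v_j\notin S$; symmetrically, failed via-$t$ routing yields a tight $T$ with $t\in T$ and $v_i,v_j\notin T$.

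The crux is the observation that no tight cut can have both hubs on one side and $\{v_i,v_j\}$ on the other: if $A$ were such a cut, its complement $C$ lies entirely among the degree-$2$ nodes, so $|\delta_G(C)|=\sum_{v_k\in C}(a_k+b_k)$; summing the singleton cut conditions $a_k+b_k\ge \deg_H(v_k)$ over $v_k\in C$ gives $|\delta_G(C)|\ge 2e_H(C)+|\delta_H(C)|$, and combined with tightness $|\delta_G(C)|=|\delta_H(C)|$ this forces $e_H(C)\le 0$, contradicting the fact that $g$ is internal to $C$. Applied to $S\cup T$ (which contains both hubs and excludes $\{v_i,v_j\}$), this shows $\sur(S\cup T)\ge 2$. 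Combining with the submodular identity $\sur(S)+\sur(T)-\sur(S\cup T)-\sur(S\cap T)=2[e_G(S\setminus T, T\setminus S)-e_H(S\setminus T, T\setminus S)]$, together with a final application of the cut condition on $\{s,t\}\cup(S'\triangle T')$—exploiting that $G$ has no supply edges between degree-$2$ nodes—yields the desired contradiction. After the valid routing, the Eulerian property persists (capacities at the hub drop by $2$, and at $v_i, v_j$ both supply and demand degrees drop by $1$), so induction completes the proof.

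\textbf{Main obstacle:} The final step is delicate because $\sur$ is not submodular in general; the $K_{2m}$-structure (no supply edges between degree-$2$ nodes) is precisely what is needed to convert the submodular deficit into a quantity controlled by the cut condition. The other ingredients—the reductions, the parity argument, and the singleton cut summation—are routine.
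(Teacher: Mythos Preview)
Your overall strategy (route a bipartite demand $v_iv_j$ through $s$ or through $t$, and argue that at least one of the two pushes preserves the cut condition) is exactly the paper's strategy, and your singleton-summation argument showing $\sur(S\cup T)\ge 2$ is correct.  The gap is in your final contradiction step.

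You claim that the submodular identity
\[
0=\sur(S)+\sur(T)=\sur(S\cup T)+\sur(S\cap T)+2\bigl[e_G(S\setminus T,T\setminus S)-e_H(S\setminus T,T\setminus S)\bigr]
\]
together with the cut condition on $\{s,t\}\cup(S'\triangle T')=S\triangle T$ gives a contradiction.  It does not.  Using the $K_{2m}$ structure (no supply edges between degree-$2$ nodes, so $e_G(S\cap T,\,V\setminus(S\cup T))=0$) one computes directly
\[
\sur(S\triangle T)=\sur(S\cup T)+\sur(S\cap T)+2\,e_H\bigl(S\cap T,\;V\setminus(S\cup T)\bigr)\;\ge\;2,
\]
so the cut condition on $S\triangle T$ holds \emph{with slack} rather than being violated.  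More tellingly, your contradiction step never invokes the bipartition $X,Y$; if it were valid it would equally show that every $K_{2m}$ instance (not just path-bipartite ones) admits a push to $s$ or $t$, which is false---the odd $K_{2,3}$ instance (triangle demand on the middle nodes plus an $st$ demand, all unit) satisfies the cut condition and is Eulerian, yet both pushes of any triangle edge are blocked, and your inequality $\sur(S\triangle T)\ge 0$ reads $6\ge 0$ there.

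What the paper actually does is different: it first reduces so that every $\delta(v_i)$ is tight, then takes two tight $s{-}t$ cuts $S,T$ (both on the $s$-side), performs an $8$-node contraction according to membership in $X/Y$, $S/\bar S$, $T/\bar T$, and adds the cut-condition inequalities for two \emph{auxiliary} cuts that deliberately mix $X$- and $Y$-nodes across the $S,T$ regions.  Subtracting the two tightness equations from the two auxiliary inequalities forces the demands lying entirely inside $S\setminus T$ or $T\setminus S$ to vanish.  The bipartite hypothesis is used precisely in choosing those auxiliary cuts---that is the missing ingredient in your argument.
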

\begin{proof}
  If any demand edge $f$ is parallel to an edge $e \in G$, then we may
  delete $f$ and set $u_e = u_1-1$ to obtain a new Eulerian instance
  that one easily checks must satisfy the cut condition. Note that we
  may also assume that either there are no $st$ demands, or no $st$
  supply for otherwise we could reduce the instance. Thus we may
  assume that every demand edge either joins $s,t$ or is of the form
  $v_iv_j$ for some $i \neq j$. Suppose first that some node $v_i$
  does not define a tight cut. Consider the new instance obtained by
  adding a new supply edge between $s,t$ and remove one unit of
  capacity from each edge incident to $v_i$. The only central cuts
  whose supply is reduced is the cut induced by $v_i$; as this cut was
  not tight, it still satisfies the Eulerian and cut
  conditions. The new instance is also smaller in our measure and so
  we assume that each $\delta(v_i)$ is tight.

  Let $X,Y$ be a bipartition of the degree two nodes for the demands
  amongst them. Let $r_j = u_{v_jt}$ and $l_j = u_{sv_j}$ for each
  $j$. For a subset $S$ of the degree two nodes, we also let $r(S) =
  \sum_{v_i \in S} r_i$ (similarly for $l(S)$). Hence actual supply
  out of $\delta_G(X)$ is just $r(X)+l(X)$. We also let $d(i)$ denote
  the total demand out of $v_i$; hence we have that $d(i)=r_i + l_i$
  for each $i$. In particular, any subset of $X$ or of $Y$ is tight.
  Thus if $r(Y) < r(X)$, then $X \cup t$ is a violated cut and so
  $r(Y) \geq r(X)$. Similarly, $r(X) \geq r(Y)$. Thus $r(X)=r(Y)$ and
  the same reasoning shows that $l(X)=l(Y)$. Moreover, the above
  argument shows that there are no $st$ demands or else $X \cup t$ is
  a violated cut.

\parpic[r]{
  \begin{minipage}{7cm}
    \begin{center}
      \includegraphics[width=6cm]{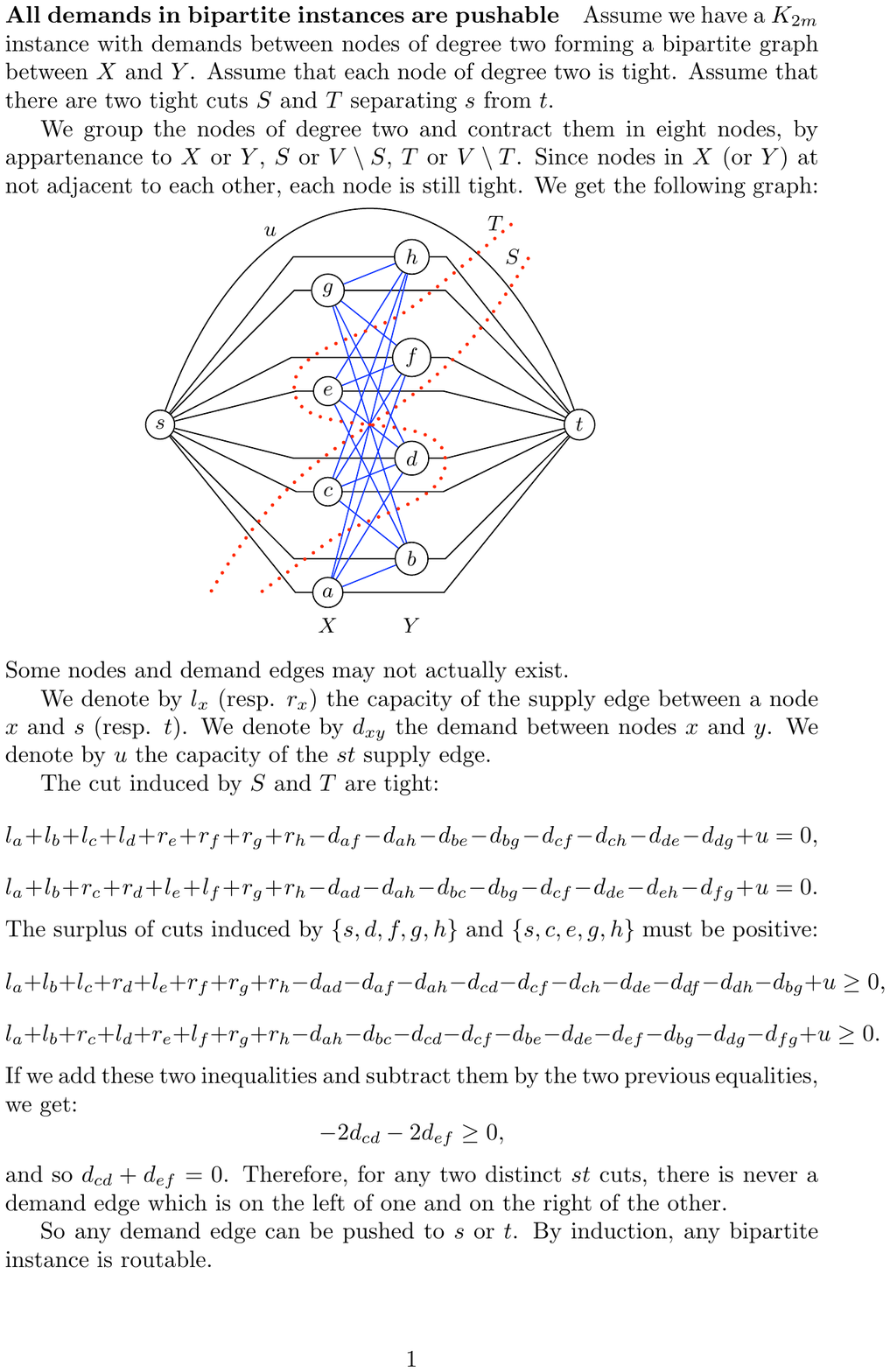}
    \end{center} \label{fig:pushingf}
  \end{minipage}
}
  Assume that there are two tight cuts $S$ and $T$ separating $s$ from
  $t$. We group the nodes of degree two and contract them in eight
  nodes, by inclusion in $X$ or $Y$, $S$ or $V\setminus S$, $T$ or
  $V\setminus T$.  Since nodes in $X$ (or $Y$) are not adjacent to each
  other, each node is still tight. The result is shown in the adjacent figure.

Some nodes and demand edges may not actually exist.

We denote by $l_x$ (resp. $r_x$) the capacity of the supply edge
between a node $x$ and $s$ (resp. $t$). We denote by $d_{xy}$ the
demand between nodes $x$ and $y$. We denote by $u$ the capacity of the
$st$ supply edge.

The cut induced by $S$ and $T$ are tight which implies:
$$
l_a+l_b+l_c+l_d+r_e+r_f+r_g+r_h - d_{af}-d_{ah}-d_{be}-d_{bg}-d_{cf}-d_{ch}-d_{de}-d_{dg}+u=0,
$$
$$
l_a+l_b+r_c+r_d+l_e+l_f+r_g+r_h - d_{ad}-d_{ah}-d_{bc}-d_{bg}-d_{cf}-d_{de}-d_{eh}-d_{fg}+u=0.
$$
The surplus of cuts induced by $\{s,d,f,g,h\}$ and $\{s,c,e,g,h\}$ must be positive:
$$
l_a+l_b+l_c+r_d+l_e+r_f+r_g+r_h - d_{ad}-d_{af}-d_{ah}-d_{cd}-d_{cf}-d_{ch}-d_{de}-d_{ef}-d_{eh}-d_{bg}+u\geq 0,
$$
$$
l_a+l_b+r_c+l_d+r_e+l_f+r_g+r_h - d_{ah}-d_{bc}-d_{cd}-d_{cf}-d_{be}-d_{de}-d_{ef}-d_{bg}-d_{dg}-d_{fg}+u\geq 0.
$$
If we add these two inequalities and subtract them by the two previous equalities, we get:
$$
-2d_{cd}-2d_{ef}\geq 0,
$$
and so $d_{cd}+d_{ef}= 0$. Therefore, for any two distinct $st$ cuts,
there is never a demand edge which is on the left of one and on the
right of the other.

So any demand edge can be pushed to $s$ or $t$. By induction, any
path-bipartite instance is routable.
\end{proof}

\subsection{$K_{2m}$ instances are cut-sufficient if and only if they have no odd $K_{2p}$ minor}
\label{subsec:odd}

While the preceding result on ``bipartite'' $K_{2m}$ instances will be
sufficient to deduce our congestion $5$ result later, we delve a little
more deeply into the structure of general $K_{2m}$ instances. This is
because they suggest a very appealing conjecture on cut-sufficient
series parallel instances (given below) and could very likely form a
basis for settling that conjecture.

Recall that a {\em $K_{2m}$-instance} is one whose supply graph is of
the form $G=K_{2m}$ with a $2$-cut $s,t$ and $m$ nodes $v_1,\ldots
v_m$ of degree two, each adjacent to $s,t$. We say an instance has an
{\em odd $K_{2p}$ minor} if it is possible to delete or contract
supply edges of the instance, and delete demand edges, until we get an
instance that is an \emph{odd $K_{2p}$}, that is, a $K_{2p}$ supply
graph, with $p$ odd, such that the two nodes of degree $p$ are
connected by a demand edge, and such that the $p$ nodes of degree $2$
are connected by an odd cycle of demand edges.  One easily sees that
an odd $K_{2p}$ instance is not cut-sufficient (refer to the
introduction for the definition of cut-sufficiency).  Indeed, setting
all demands and capacities to $1$ defines an instance that satisfies
both Eulerian and cut condition, but is not routable. In general, if
we only obtain this structure as a minor, we can think of assigning 0
demand (respectively capacity) to the deleted demand (resp. supply)
edges, and assign infinite capacity to any contracted supply
edge. Hence for any pair $(G,H)$, if it is cut-sufficient, then it
cannot contain an odd $K_{2p}$ minor. For general graphs this is not
sufficient (cf. examples in \cite{Schrijver_book}). However we
conjecture the following.
\begin{con}
Let $G$ be series parallel. Then a pair $(G,H)$ is cut-sufficient if
and only if it has no odd $K_{2p}$ minor.
\end{con}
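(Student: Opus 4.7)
The necessity direction is the observation made just before the conjecture statement: the odd $K_{2p}$ example has $G+H$ Eulerian and satisfies the cut condition without being routable, so if $(G,H)$ has an odd $K_{2p}$ minor, then by setting appropriate integer capacities (infinite on contracted supply edges and matching the standard example elsewhere) we violate cut-sufficiency. For the sufficiency direction, the plan is to induct on $|V(G)|+|E(G)|$, applying the basic reductions of Section~\ref{sec:basicops} and assuming $G$ is $2$-node-connected (since minors of the pieces after a $1$-cut reduction are minors of $(G,H)$, the induction hypothesis is preserved). By Lemma~\ref{lem:decomp}, either $G$ is a capacitated ring or $G$ admits a strict $2$-cut. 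In the ring case, $G$ is outerplanar and every demand graph sits on a single face, so Okamura-Seymour gives routability (and any cycle is $K_{2,3}$-minor-free, so the minor hypothesis is vacuously true).

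In the strict-cut case at $\{u,v\}$ with bridges $B_1,\ldots,B_k$, the high-level strategy is to eliminate all cross-bridge demands by repeated push operations to either $u$ or $v$, and then apply Lemma~\ref{lem:reduce} to split $(G,H)$ into smaller bridge-wise instances to which induction applies. By Fact~\ref{fact:pushcond}, a cross-demand $xy$ can be pushed to $u$ unless there is a tight cut through $u$ missing $x,y,v$, and symmetrically for $v$. If every cross-demand can be pushed, we are done. Otherwise some demand resists both pushes; Lemma~\ref{lemma:centralrl} then allows us to isolate a minimal witnessing structure of tight central sets around $\{u,v\}$. After contracting each bridge $B_\ell$ to its attachment pair (using the fact that contractions cannot introduce the forbidden minor), the resulting instance is exactly a $K_{2,m}$-instance in the sense of Section~\ref{subsec:odd}, with $m$ equal to the number of bridges in which tight cuts appeared.

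The technical heart is therefore the $K_{2,m}$ case, which I would prove as a separate claim: a $K_{2,m}$-instance is cut-sufficient if and only if it has no odd $K_{2,p}$ minor. Let $D$ be the demand graph on the degree-$2$ nodes $v_1,\ldots,v_m$. If $D$ is bipartite, the instance is path-bipartite and Lemma~\ref{lem:bipk2m} gives routability (after an Eulerian doubling, which preserves both cut condition and the minor hypothesis). If $D$ contains an odd cycle $C$ but the cut condition forces no $st$-demand to ``exit'' through $C$, I would show directly that orienting each demand through $s$ or $t$ with fraction $1/2$ through each respects all capacities; the cut condition at the singletons $\{v_i\}$ gives exactly the balance $l(X)=l(Y)$, $r(X)=r(Y)$ that makes the fractional symmetric routing feasible (mirroring the computation in the proof of Lemma~\ref{lem:bipk2m} but using an odd cycle instead of a bipartition). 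If on the other hand $D$ has an odd cycle and the cut condition forces positive $st$-demand to interact with the cycle, I would extract an odd $K_{2,p}$ minor by selecting the cycle nodes, contracting other bridges, and deleting surplus demand and supply edges — exhibiting the forbidden minor and contradicting the hypothesis.

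The main obstacle I anticipate is exactly this trichotomy inside the $K_{2,m}$ claim: cleanly distinguishing the ``no effective $st$-demand'' subcase from the ``forced minor'' subcase. Unlike the bipartite analysis in Lemma~\ref{lem:bipk2m}, where pushing demands to $s$ or $t$ works by a parity/orientation argument on a bipartite graph, an odd cycle in $D$ breaks the orientation machinery, and I expect one will need an ad hoc argument combining the tight-cut structure with a T-join or signed-graph style extraction of the minor. A secondary obstacle is closing the induction when several strict $2$-cuts interact: one may need to iterate the decomposition or pick a minimal strict $2$-cut (one that is innermost with respect to some total order of central sets) to ensure the reduced instances are strictly smaller.
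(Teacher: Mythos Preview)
This statement is a \emph{conjecture} in the paper, not a theorem; the paper proves only the special case where $G$ itself is a $K_{2m}$ instance (Section~\ref{subsec:odd}). So there is no proof in the paper to compare against for the full statement, and the question is whether your outline actually closes the gap. It does not, and the main break is precisely at the contraction step in the strict $2$-cut case.

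Suppose a cross-bridge demand $xy$ resists pushing to both $u$ and $v$, so by Fact~\ref{fact:pushcond} there are tight central sets $L\ni u$ and $R\ni v$ avoiding $x,y$. You then contract each bridge $B_\ell\setminus\{u,v\}$ to a single middle vertex and invoke the $K_{2m}$ result. But $L$ and $R$ need not respect the bridge partition: $L$ may contain some but not all internal vertices of a given bridge, so there is no canonical image $L'$ in the contracted instance, and tightness is not preserved. Without tight images $L',R'$ downstairs, routability of the contracted $K_{2m}$ instance yields no contradiction. Contrast this with the paper's proof of Theorem~\ref{thm:semi}: there one contracts along $L\setminus R$, $R\setminus L$, and the components of $V\setminus(L\,\Delta\,R)$, which by construction makes $L',R'$ well-defined and tight, and the \emph{compliant} hypothesis is exactly what forces the resulting $K_{2m}$ instance to be path-bipartite, so that Lemma~\ref{lem:bipk2m} shows every middle-vertex demand (in particular the image of the obstructed one) is pushable. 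For arbitrary $H$ the contracted demand graph is arbitrary, and even the paper's full $K_{2m}$ argument in Section~\ref{subsec:odd} only shows that \emph{some} demand is pushable, not the specific image of $xy$. A secondary issue is that your induction does not obviously close: after pushing and applying Lemma~\ref{lem:reduce}, the pieces $(G_i',H_i')$ contain new demand edges (the pushed $xu,uy$ and the added $uv$ demands) and new $uv$ supply of a specific value $k_1$, and an odd $K_{2p}$ minor of such a piece using these new edges does not straightforwardly pull back to a minor of $(G,H)$. Your claim that it does is asserted but not argued, and this is exactly where the difficulty of the conjecture lives.
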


Together with the results of the previous section, we now prove that
the conjecture holds for $K_{2m}$ instances. In other words, if such
an instance has no odd $K_{2p}$ minor, then it is routable. We prove a
slightly stronger statement.  Let $(G,H)$ be a $K_{2m}$ instance that
satisfies the cut condition, is Eulerian and has no odd $K_{2p}$
minor, then there is an integral routing for $H$ in $G$.

The graphs $G$ and $H$ are capacitated. The proof is by induction on
total demand.  In the following, we may assume that no demand edge is
parallel to a supply edge. Furthermore, we assume as in the proof of
Lemma~\ref{lem:bipk2m}, that any cut induced by a middle node $v_i$ is
tight. If the demand graph does not contain an odd cycle, then it is
bipartite and routable by Lemma~\ref{lem:bipk2m}. Let us therefore
assume that the demand graph contains at least one odd cycle. As the
graph should not have any odd $K_{2p}$ minor, there is no demand edge
from $s$ to $t$. We will subsequently prove that at least one demand
edge $f=v_iv_j$ can be pushed to $s$ or $t$, by assuming that none
can, and deducing a contradiction. Here, by pushing the edge
$f=v_iv_j$ to $s$ we mean that one unit of demand on $f$ can be
removed and new demand edges $sv_i$ and $sv_j$ with unit demand are
created (pushing to $t$ is similar).  Pushing allows us to complete
the proof by induction as follows.  Once we push a demand edge $f$ to
$s$ or $t$, the resulting instance could possibly have an odd $K_{2p}$
minor. However, the demand edges thus created are parallel to some
supply edges, and have only one unit of demand. If we reduce at once
the resulting new demand edges with the parallel supply edges, we get
a new $K_{2m}$ instance with supply and demand edges that all existed
in the original one, and so does not have an odd $K_{2p}$ minor
either. As the new instance has a smaller total demand, we can apply
induction.

Now we prove the existence of a demand edge that can be pushed.  We
call a demand edge a \emph{leaf demand edge} if it is the unique
demand edge incident to some $v_i$ node. The following lemma is useful
in showing that such demand edges can be pushed.

\begin{lemma}
\label{lemma:cuts}
Let $(G,H)$ be a $K_{2m}$-instance that satisfies the cut condition
and such that any cut induced by a $v_i$ is tight. Then any tight cut
separating $s$ from $t$ separates every node $v_i$ from at least one
of its adjacent nodes in the demand graph.
\end{lemma}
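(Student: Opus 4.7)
The plan is to prove the contrapositive by an uncrossing-style argument between the hypothesized tight $s$-$t$ cut $S$ and the singleton cut $\{v_i\}$, which is tight by assumption. Suppose $S$ is tight with $s \in S$ and $t \notin S$, and suppose for contradiction that some $v_i$ has all of its $H$-neighbors on the same side of $S$ as itself. Without loss of generality $v_i \in S$ (otherwise replace $S$ by $V \setminus S$, which is equally tight and separates $s$ from $t$, and run the symmetric argument with $S \cup \{v_i\}$ in place of $S \setminus \{v_i\}$).

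First I would form the modified cut $S' = S \setminus \{v_i\}$ and compute the change in surplus. On the supply side, only the two edges incident to $v_i$ change status: the edge $v_it$ of capacity $r_i$ crosses $S$ but not $S'$, while $sv_i$ of capacity $l_i$ crosses $S'$ but not $S$; all other supply edges (including any $st$ edge) contribute identically to both cuts. Hence
\[
\mathrm{supply}(S') - \mathrm{supply}(S) \;=\; l_i - r_i.
\]
For the demands, moving $v_i$ across the cut swaps which of the demand edges incident to $v_i$ are cut. By hypothesis every $H$-neighbor of $v_i$ lies in $S$, so no demand at $v_i$ crosses $S$ while every demand at $v_i$ crosses $S'$; using the tightness of $\{v_i\}$, which gives $d(v_i) = l_i + r_i$, we obtain $\mathrm{demand}(S') - \mathrm{demand}(S) = l_i + r_i$. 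Combining,
\[
\sur(S') - \sur(S) \;=\; (l_i - r_i) - (l_i + r_i) \;=\; -2r_i.
\]

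Now $S$ is tight, so $\sur(S) = 0$, and the cut condition forces $\sur(S') \geq 0$; therefore $r_i \leq 0$, i.e., $r_i = 0$. This contradicts the defining property of a $K_{2m}$-instance, in which each middle node is adjacent with positive capacity to both $s$ and $t$ (equivalently, we may assume the instance is reduced so that any zero-capacity supply edge has been deleted, making $v_i$ no longer of degree two). The symmetric case $v_i \notin S$, done by comparing $S$ with $S \cup \{v_i\}$, yields $l_i = 0$ in the same way. The main obstacle is the bookkeeping on the demand side: one must verify that the hypothesis ``all $H$-neighbors of $v_i$ lie on $v_i$'s side'' makes the swap of cut demands fully one-sided, so that tightness of $\{v_i\}$ can be plugged in cleanly to transform $d(v_i)$ into $l_i + r_i$ and cancel the supply change.
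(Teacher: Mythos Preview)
Your proof is correct and follows essentially the same approach as the paper: flip $v_i$ across the tight cut $S$ and combine the tightness of $S$ and of $\{v_i\}$ to violate the cut condition on the modified set. The paper phrases this more tersely as $|\delta_H(S')|=|\delta_H(S)|+|\delta_H(v_i)|$ while $|\delta_G(S')|<|\delta_G(S)|+|\delta_G(v_i)|$ (the strict inequality implicitly using $r_i>0$, which you make explicit), but the underlying argument is identical.
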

\begin{proof}
  Suppose there is some tight cut $S$ which separates a degree $2$ node $v$ from
  none of its adjacent nodes in the demand graph. We have
  $|\delta_G(S)|=|\delta_H(S)|$. Let us consider the cut $S'$ obtained by
  flipping the side of $v$ in the cut $S$. Then
  $|\delta_H(S')|=|\delta_H(S)|+ |\delta_H(v)|$, but
  $|\delta_G(S')|<|\delta_G(S)|+|\delta_G(v)|$, because $S$ separates $s$
  from $t$. Since by assumption $|\delta_H(v)|=|\delta_G(v)|$, the cut
  condition is not satisfied on $S'$, and so we have a contradiction.
\end{proof}

As a corollary of the previous lemma, any leaf demand edge in a
$K_{2m}$ instance can be pushed to $s$ (or to $t$), since no
tight cut can separate both its end points from $s$ (or from $t$).

Recall that the demand graph has an odd cycle.  We consider the
structure of the demand graph. We call a graph an
\emph{edge-$K_{2m}$-cycle} if it can be obtained from a cycle by
replacing some edges by $K_{2m}$ subgraphs.

\begin{lemma}
If a $K_{2m}$ instance does not have an odd $K_{2p}$ minor, and the demand
graph contains an odd cycle, but no leaf demand edge, then the demand graph
is either an edge-$K_{2m}$-cycle, or $K_4$.
\end{lemma}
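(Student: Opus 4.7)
The plan is to classify the demand graph $H$ via a degree analysis at the middle nodes $v_1,\ldots,v_m$, using the no-odd-$K_{2p}$-minor hypothesis to prune configurations.

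First I would argue that $H$ has no $st$ demand edge. Indeed, any such $st$ demand, combined with the odd cycle hypothesized in $H$, yields an odd $K_{2p}$ minor: delete the demand edges outside a fixed odd cycle, and contract/delete supply edges so that the middle nodes lying on that odd cycle become the $p$ degree-$2$ nodes of a $K_{2p}$ supply skeleton (with $p$ odd). The $st$ demand then supplies the required central demand edge. So the odd cycle in $H$ must live entirely among the middle nodes (or enter $s$, $t$ only via middle-node demand edges).

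Next I would turn to the demand degrees of the $v_i$'s. The no-leaf assumption forces every $v_i$ incident to a demand edge to have $H$-degree at least $2$. The key dichotomy is whether some $v_i$ has three or more demand edges to other middle nodes. In the \emph{low-degree} subcase the demand graph restricted to the middle nodes has maximum degree $2$, so decomposes into paths and cycles; by the odd-cycle hypothesis at least one cyclic component is odd, and by the no-leaf condition each path component has its endpoints incident to a demand of the form $sv_i$ or $tv_i$. I would then argue that any configuration other than a single cycle (possibly supplemented by parallel $sv_i$ and $tv_i$ demands along its nodes) yields an odd $K_{2p}$ minor: the extra components, when reduced by contracting supply edges merging path endpoints into $s$ and $t$, create an $st$ demand while a disjoint odd cycle persists. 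This pins the structure down to an edge-$K_{2m}$-cycle.

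In the \emph{high-degree} subcase, some $v_i$ is incident to at least three demand edges to other middle nodes. Here I would aim to show that unless the demand graph collapses to $K_4$ (either on four middle nodes, or on $s$ together with three mutually demand-adjacent middle nodes, the two configurations compatible with the no-$st$-demand conclusion), one can extract an odd $K_{2p}$ minor by selecting three of the demand neighbors of $v_i$ together with the odd cycle and contracting supply edges so as to create an $st$ demand while preserving an odd cycle on at least three remaining middle nodes. The $K_4$ case is precisely the minimal configuration in which every such contraction either destroys the odd cycle or absorbs so many middle nodes into $s,t$ that fewer than three degree-$2$ middle nodes remain to support an odd cycle of length $\geq 3$.

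The main obstacle will be the high-degree subcase: verifying that every non-$K_4$ configuration with a degree-$\geq 3$ middle node admits an odd $K_{2p}$ minor. This requires careful bookkeeping of how each supply-edge contraction affects both the odd-cycle structure and the residual demand graph, and must account for capacitated (parallel) demand edges so that the extracted minor genuinely has the odd $K_{2p}$ form rather than a collapsed variant.
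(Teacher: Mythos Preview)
Your degree dichotomy does not cleanly separate the two target outcomes. A nontrivial edge-$K_{2m}$-cycle (one in which some cycle edge has actually been replaced by a $K_{2,m'}$ with $m'\geq 2$) has middle nodes of demand-degree at least $3$---namely the two ``poles'' of the replaced edge---and therefore lands in your high-degree case; but in that case you only aim for $K_4$. Concretely, take the demand graph to be the pentagon $v_1v_2v_3v_4v_5$ together with an extra node $v_6$ adjacent to $v_1$ and $v_3$. Here $v_1$ and $v_3$ have demand-degree $3$, the graph is not $K_4$, and a short case check shows it has no odd $K_{2p}$ minor: every contraction of two middle nodes into $s$ and $t$ that manufactures an $st$ demand leaves the remaining middle-node demands forming a path, a star, or a $4$-cycle, never an odd cycle. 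This graph \emph{is} an edge-$K_{2m}$-cycle (start from the $4$-cycle $v_1v_3v_4v_5$ and replace the edge $v_1v_3$ by a $K_{2,2}$ with middle nodes $v_2,v_6$), so the lemma holds for it, but your high-degree argument would get stuck trying to extract a forbidden minor.

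Two smaller points. First, in the reduced setting of this lemma, demand edges parallel to supply edges have already been eliminated, so there are no $sv_i$ or $tv_i$ demands; your discussion of path endpoints meeting $sv_i$ demands, and of a ``$K_4$ on $s$ with three middle nodes'', is therefore off target. Second, the paper avoids any degree dichotomy by anchoring on a \emph{shortest} odd cycle $C$: one shows $V(C)$ covers every demand edge (otherwise contract the uncovered edge's endpoints into $s,t$ to produce an odd $K_{2p}$ minor), and then the minimality of $C$ forces every vertex off $C$ to attach to two $C$-vertices at distance exactly $2$ along $C$. This directly yields the edge-$K_{2m}$-cycle structure when $|C|\geq 5$, and a short case analysis when $|C|=3$ gives either $K_4$ or the shared-edge-triangles configuration.
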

\begin{proof}
  Let $C$ be the shortest odd cycle contained in the demand graph.  We
  claim that $V(C)$, the nodes in $C$, cover all demand edges.  For if
  a demand edge $v_iv_j$ is not covered, we can get an odd $K_{2p}$
  minor by contracting supply edges incident to $v_i$ and $v_j$.
  Further, if $v_iv_j$ is a demand edge with only one of $v_i,v_j$ on
  $C$, the edge is a leaf demand edge and hence can be pushed.
  Therefore we can assume any node $v_i$ not in $V(C)$ is adjacent to at 
  least two nodes of $C$ in the demand graph.

  First, suppose that $C$ is a triangle. Then any node $v_i \not \in
  V(C)$ adjacent to two nodes of $C$ creates another triangle. It is
  easy to see that any two nodes $v_i,v_j \not \in V(C)$ adjacent to
  two nodes of $C$ must be adjacent to the same two nodes of $C$,
  otherwise we can find a triangle and an additional edge not
  connected to it, which forms an odd $K_{2p}$ minor. Also, if
  $v_i,v_j \not \in V(C)$ are adjacent to the same two nodes of $C$,
  neither of $v_i,v_j$ is adjacent to the third node of $C$, because
  this would again form an odd $K_{2p}$ minor. So there are only two
  possibilities: Either the demand graph consists of many triangles
  all sharing the same edge of $C$, which is an edge-$K_{2m}$-cycle,
  or it is $K_4$.

  Now, suppose $C$ is a odd cycle of length at least $5$.  Suppose a
  node $v \not \in V(C)$ is adjacent to two distinct nodes $v_1, v_2
  \in V(C)$ in the demand graph. Then $v_1v_2$ is not an edge of the
  demand graph for otherwise $v$ along with $v_1,v_2$ would form a
  triangle in the demand graph contradicting the fact that $C$ is the
  shortest odd cycle. Suppose the shortest path in $C$ connecting
  $v_1$ and $v_2$ has length $3$ or more. Then the edges $vv_1$ and
  $vv_2$ create a shortcut for $C$, which creates shorter cycles with
  each of the two paths in $C$ from $v_1$ to $v_2$, one of them odd,
  contradicting the choice of $C$.  Therefore, the shortest path in
  $C$ connecting $v_1$ and $v_2$ must have length $2$. This
  also implies that $v$ has degree $2$, because three nodes in $C$
  cannot all be at distance $2$ of each other.

  So any node $v \not \in V(C)$ forms a bridge of length $2$ between
  two nodes $v_1$ $v_2$ of $C$, with $v_1$ and $v_2$ being at distance
  $2$ in $C$. Let $v'$ be the middle node on the length $2$ path
  between $v_1$ and $v_2$ in $C$. We claim that $v'$ has degree $2$ in
  the demand graph. To see this, note that the cycle $C'$ obtained by
  replacing the path $v_1,v',v_2$ in $C$ by the path $v_1,v,v_2$ has
  the same length as $C$ and $v' \not \in V(C')$. By the argument in
  the previous paragraph, $v'$ has degree $2$. It follows that $v_1$
  and $v_2$ form a $2$-cut for the demand graph, and all but one of
  the connected components obtained by removing $v_1$ and $v_2$ are
  singletons. As any node $v \not \in C$ is adjacent to exactly two
  nodes $v_1$ and $v_2$ for which this is true, the demand graph is an
  edge-$K_{2m}$-cycle.
\end{proof}

It is known that all demand graphs on less than five nodes are
routable in any graph that satisfies the cut condition for it (see
\cite{Schrijver_book}), and hence $K_4$ is routable. So if we assume
that a $K_{2m}$ instance has no odd $K_{2p}$ minor, but that no demand
edge can be pushed, then the demand graph must be an
edge-$K_{2m}$-cycle.  We now show a property of the capacity and
demand vectors of such instances.

\begin{lemma}\label{lemma:bracket}
  Let $(G,H)$ be a $K_{2m}$ in which $H$ is an edge-$K_{2m}$-cycle, no
  demand edge can be pushed, and any cut induced by a $v_i$ is
  tight. Then for any node $v$ that has degree $2$ in both the demand
  and supply graph, the capacity and demand of the four edges incident
  to $v$ in both graphs is the same.
\end{lemma}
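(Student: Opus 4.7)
The plan is to write down four linear equations in the unknowns $l=u_{sv}$, $r=u_{vt}$, $d_a=d_{va}$, $d_b=d_{vb}$, where $a,b$ are the two demand-neighbors of $v$, and then solve. Because $v$ has supply edges to both $s$ and $t$, and no demand edge is parallel to a supply edge, $a$ and $b$ cannot be $s$ or $t$; hence they are among the degree-two middle nodes of $K_{2m}$. Tightness of the cut $\delta(v)$ gives our first equation, $l+r=d_a+d_b$.

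To produce the other equations, I extract four tight cuts from the non-pushability of the demand edges at $v$. Since the edge $va$ cannot be pushed to $s$, Fact~\ref{fact:pushcond} yields a tight cut $S_a$ containing $s$ but none of $v,a,t$. This cut separates $s$ from $t$, and by the hypothesis that every $v_i$-induced cut is tight, Lemma~\ref{lemma:cuts} forces $S_a$ to separate $v$ from some demand-neighbor; since $a\notin S_a$ already lies on $v$'s side, the separated neighbor must be $b$, so $b\in S_a$. Applying the same argument to the other three pushing failures at $v$ produces tight cuts $T_a,S_b,T_b$ containing $\{t,b\}$, $\{s,a\}$, $\{t,a\}$ respectively and each avoiding $v$ along with the other element of $\{a,b\}$. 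The main difficulty of the proof lies here: non-pushability must be converted into tight cuts of a very specific separation pattern, and Lemma~\ref{lemma:cuts}, together with the standing tightness of the $v_i$-induced cuts, is what pins down the location of the ``other'' demand-neighbor on each side.

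With these four cuts in hand, I flip $v$ into each and read off inequalities from the cut condition. Moving $v$ into $S_a$ shifts the supply side of the cut by $r-l$ (the edge $vt$ enters $\delta$ because $t\notin S_a$, while $sv$ leaves because $s\in S_a$) and shifts the demand side by $d_a-d_b$ (since $a\notin S_a$ and $b\in S_a$), so $\sigma(S_a\cup\{v\})=(r-l)-(d_a-d_b)\ge 0$, i.e.\ $r+d_b\ge l+d_a$. The analogous accounting on $T_b\cup\{v\}$ flips every sign and yields $l+d_a\ge r+d_b$, so $l+d_a=r+d_b$. Repeating with $S_b\cup\{v\}$ and $T_a\cup\{v\}$ gives $l+d_b=r+d_a$. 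Adding these two equations forces $l=r$ and subtracting forces $d_a=d_b$; substituting into $l+r=d_a+d_b$ concludes $l=r=d_a=d_b$, as required.
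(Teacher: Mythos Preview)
Your proof is correct and follows essentially the same approach as the paper: both extract the same four tight cuts from the non-pushability of $va,vb$ to $s,t$, use Lemma~\ref{lemma:cuts} to pin the other demand-neighbor on the far side of each cut, and then flip $v$ across these cuts. The only cosmetic difference is that the paper flips $v$ in two cuts at once (e.g.\ $S_1,T_1$) so that the supply changes cancel and a clean inequality $d_a\le d_b$ drops out directly, whereas you flip one cut at a time and combine the resulting mixed inequalities algebraically; the content is identical.
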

\begin{proof}
  Let $v$ be a node of degree $2$ in the demand graph, connected to
  $v_1$ and $v_2$. By assumption, no edge of the demand graph can be
  pushed. So there are tight cuts $S_1$, $T_1$ separating $vv_1$ from
  $s$ and $t$ respectively. Similarly, there are tight cuts $S_2$,
  $T_2$ separating $vv_2$ from $s$ and $t$ respectively. By
  Lemma~\ref{lemma:cuts}, the cuts $S_1$ and $T_1$ must both separate
  $vv_1$ from $v_2$. Similarly, $S_2$ and $T_2$ must separate $vv_2$
  from $v_1$. Let us consider $S_1$ and $T_1$, and flip the side of
  $v$ in both. The total capacity $|\delta_G(S_1)|+|\delta_G(T_1)|$ of
  both cuts is the same as before, but the total demand is modified by
  replacing the demand of $vv_2$ by that of $vv_1$. Since $S_1$ and
  $T_1$ were already tight, this means the demand of $vv_1$ is no
  greater than that of $vv_2$.  Symmetrically, starting from $S_2$ and
  $T_2$, we prove that the demand of $vv_2$ is no greater than that of
  $vv_1$, and so the demand on both edges is the same. By repeating
  the argument on the pair of cuts $S_1$ and $S_2$, and the pair $T_1$
  and $T_2$, we prove that the capacity of the edge $sv$ is the same
  as that of $vt$.  Since the cut induced by $v$ is tight, the
  capacities and demands are all equal.
\end{proof}

In the following, we call a node $v$ \emph{bracketed} by $v_1$ and
$v_2$ if it is of degree $2$ in the demand graph, and connected to
$v_1$ and $v_2$.  The previous lemma has the following easy corollary:
\begin{corollary}
  Let $(G,H)$ be a $K_{2m}$ in which $H$ is an edge-$K_{2m}$-cycle, no
  demand edge can be pushed, and any cut induced by a $v_i$ is tight.
  Let $v$ be a node bracketed by $v_1$ and $v_2$. Let $S$ be a tight
  cut separating $s$ from $t$. Then either $S$ separates $v$ from both
  $v_1$ and $v_2$, or $S$ separates $v_1$ from $v_2$, in which case
  the cut obtained by flipping the side of $v$ in $S$ is also tight.
\end{corollary}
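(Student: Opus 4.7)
The strategy is to exploit Lemma~\ref{lemma:bracket}, which assigns a common value $c$ to the four edges $sv$, $vt$, $vv_1$, $vv_2$ (capacities of the first two, demands of the last two). I will compare the surplus of $S$ with the surplus of $S'$, the cut obtained by flipping the side of $v$ in $S$, and show that the cut condition on $S'$ together with tightness of $S$ leaves exactly the two configurations asserted.

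The first key observation is that $|\delta_G(S')| = |\delta_G(S)|$. Indeed, because $S$ separates $s$ from $t$, exactly one of the supply edges $sv,vt$ lies across $\delta_G(S)$, contributing $c$ to the supply regardless of which side contains $v$; the same is true of $\delta_G(S')$. Thus the flip is supply-neutral, and $\sigma(S') - \sigma(S)$ depends solely on the contributions of $vv_1$ and $vv_2$ to the two demand cuts.

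Next, I set $a = [v_1 \in S]$ and $b = [v_2 \in S]$, and perform a short bookkeeping of the contributions of $vv_1$ and $vv_2$, separately for $v \in S$ and $v \notin S$. Using $\sigma(S)=0$, the two cases yield
\[
\sigma(S') \;=\; 2c(1-a-b) \quad \text{if } v \in S, \qquad \sigma(S') \;=\; 2c(a+b-1) \quad \text{if } v \notin S.
\]
Applying the cut condition $\sigma(S') \geq 0$ in either case gives a one-sided inequality on $a+b$, which rules out one extreme value and leaves exactly two admissible configurations. In the case $v \in S$ these are $a+b=0$, giving $v_1,v_2 \notin S$ (so $S$ separates $v$ from both), and $a+b=1$, in which case $S$ separates $v_1$ from $v_2$ and the displayed formula gives $\sigma(S')=0$, so $S'$ is tight. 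The case $v \notin S$ is symmetric, with $a+b=2$ and $a+b=1$ playing the analogous roles.

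No serious obstacle arises: Lemma~\ref{lemma:bracket} supplies the uniform weight $c$ that makes the flip so rigid, and the remainder is the two-line contribution check together with the cut condition for $S'$. The resulting dichotomy is exactly the statement of the corollary.
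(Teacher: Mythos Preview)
Your proof is correct and follows essentially the same approach as the paper: both compute the surplus change under the flip of $v$ (which is supply-neutral since $S$ separates $s,t$) and then invoke the cut condition on $S'$ together with Lemma~\ref{lemma:bracket}. The paper organizes the argument by first disposing of the case where $v_1,v_2$ lie on the same side (where flipping would force $\sigma(S')<0$, so $v$ must be opposite them) and then treating the separated case via Lemma~\ref{lemma:bracket}, whereas you unify both cases through the explicit formula $\sigma(S')=\pm 2c(1-a-b)$; the content is the same.
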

\begin{proof}
  If $S$ does not separate $v_1$ from $v_2$, then it must separate them
  from any node they bracket, because otherwise we get a tighter cut by
  flipping the bracketed node. If $S$ does separate $v_1$ from $v_2$,
  then by Lemma~\ref{lemma:bracket}, flipping $v$ does not change
  the surplus of the cut.
\end{proof}

The above corollary implies that for any tight cut separating $s$ from
$t$, there is another tight cut such that all nodes bracketed by a
pair $v_1$ and $v_2$ are on the same side of the cut.  Let us now
consider any demand edge. If it cannot be pushed then there are tight
cuts $S$ and $T$ separating the end points of the demand edge from $s$
and $t$ respectively. We flip all bracketed nodes so that they are all
on the same side of $S$, and all on the same side of $T$. We can now
consider all nodes bracketed by a same pair as a single node, since
they are all on the same side of $S$ and $T$. If we do so for all
bracketed nodes, we reduce the demand graph to a cycle, which is
routable, and hence contradicts the assumption that $S$ and $T$ are
tight.

\subsection{Compliant Instances}
\label{sec:semicompliant}

We define in this section the notion of compliant instance, and
prove any such instance is routable if it is Eulerian and satisfies
the cut condition. This is the main technical contribution of
the paper.
Recall that a demand edge $e$ is called fully compliant if $G+e$ is
series-parallel. If $G$ is a series-parallel graph created from the
edge $st$, we orient the edges of $G$ by orienting the initial $st$
and extending it naturally through series and parallel operations.
We abuse notation and use $G$ to refer to both the undirected
graph and the oriented digraph.

In the resulting digraph, $s$ is a unique source, and $t$ is a
unique sink; it is easy to see that this property is not lost by any
series or parallel operation. The graph is also acyclic, because we
can build an acyclic order starting from one for the $st$ edge and
extending it through the sequence of series and parallel operations.
As a consequence, any directed path can be extended to an $st$ path,
because we can always add an edge at the beginning until it starts
from $s$, and at the end until it ends at $t$.

We call a demand edge \emph{compliant} if there is a directed
path in $G$ connecting its endpoints. An instance is
\emph{compliant} if all edges are compliant or fully compliant.
(It is easy to show that if the $s,t$ cut has three or
more bridges, then in fact any fully compliant demand edge is also compliant.)

\begin{theorem}\label{thm:semi}
  Let $G$ be a series-parallel graph. Further let $G,H$ be a
  compliant instance with $G+H$ Eulerian. If $G,H$ satisfies the
  cut condition, then $H$ has an integral routing in $G$.
\end{theorem}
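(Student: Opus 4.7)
The plan is to induct on a suitable measure, say $|V(G)|$ with ties broken by $|E(G)|+|E(H)|$, applying the basic reductions of Section~\ref{sec:basicops} throughout. After reducing, we may assume that $G$ is $2$-node-connected, that $G+H$ is Eulerian, that no demand edge is parallel to a supply edge, and that every edge of $G+H$ lies in a tight cut. By Lemma~\ref{lem:decomp}, $G$ is either a capacitated ring---in which case all demands are (fully) compliant and an integral routing exists by the Okamura--Seymour theorem---or $G$ has a strict $2$-cut $\{u,v\}$, which WLOG is oriented so that $u$ precedes $v$ in the topological order on $G$. Let $(G_1,G_2)$ be the induced partition. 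If no demand crosses the partition, we apply the $2$-cut reduction Lemma~\ref{lem:reduce}, observing that the added parallel $uv$ supply edges are fully compliant and that any demand within $G_i$ remains compliant in $G_i$, and then conclude by induction on the two smaller instances.

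The technical heart of the argument is the case of a crossing compliant demand $f=xy$, with $x\in V(G_1)\setminus\{u,v\}$ and $y\in V(G_2)\setminus\{u,v\}$. The goal is to show that $f$ can always be pushed to either $u$ or $v$ while preserving both compliance and the cut condition; iterating this push eliminates every crossing demand and returns us to the non-crossing case. Since $f$ is compliant, a directed path $P$ in $G$ realizes it and must traverse $\{u,v\}$; this, together with the fact that every node of an SP graph built from $st$ has directed $s$-to-node and node-to-$t$ paths, is enough to certify that at least one $z\in\{u,v\}$ produces compliant pushed demands $xz,zy$.

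For the cut condition, suppose for contradiction that neither push is valid. By Fact~\ref{fact:pushcond} there exist tight central cuts $S$ with $u\in S$ and $\{x,y,v\}\cap S=\emptyset$, and $T$ with $v\in T$ and $\{x,y,u\}\cap T=\emptyset$. A quick centrality argument shows $S\subseteq V(G_1)$ and $T\subseteq V(G_2)$: any $G[S]$-path from $u$ out to a node of $V(G_2)$ would have to avoid $v$, violating that $\{u,v\}$ separates $G_1$ from $G_2$. Using $S,T$, the $2$-cut $\{u,v\}$, and Lemma~\ref{lemma:centralrl} together with submodularity of the surplus function, we extract maximal nested families of tight central cuts on each side of $\{u,v\}$; contracting these collapses the instance to a $K_{2m}$-instance whose two poles are the extremal contracted tight sets and whose middle nodes correspond to the remaining bridges. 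Compliance then forces the demand graph induced among the middle nodes to be bipartite: any residual middle-to-middle demand would descend from an original compliant demand with no directed path realizing it in $G$, a contradiction. Hence Lemma~\ref{lem:bipk2m} delivers an integral routing of the \emph{path-bipartite} $K_{2m}$-instance, and this routing lifts through the tight-cut contractions---each contracted core is itself a smaller compliant Eulerian instance satisfying the cut condition, so routable by induction---to an integral routing of $(G,H)$, contradicting the assumed failure to push $f$.

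The main obstacle is the contraction step in the previous paragraph: making the reduction to $K_{2m}$ precise, verifying that the contracted instance remains Eulerian and satisfies the cut condition (this should follow from tightness of the contracted cuts plus submodularity), certifying that the path-bipartite property really is forced by compliance together with the bridge-parallel topological structure, and then carefully showing that a routing in the contracted $K_{2m}$ lifts back through each tight-cut core to a routing in $G$ that honors all supply capacities. Degenerate cases---trivial bridges, or $u$ or $v$ coinciding with $s$ or $t$, or $S$ or $T$ failing to be central---should be absorbed either by the basic reductions or by direct application of Lemma~\ref{lem:reduce}.
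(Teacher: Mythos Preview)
Your outline shares the right skeleton with the paper---push a demand at a separating $2$-cut, and if both pushes are blocked by tight cuts, contract to a $K_{2m}$ instance and invoke Lemma~\ref{lem:bipk2m}---but it diverges at the crucial points and leaves the hard step unaddressed. The paper does \emph{not} work with a strict $2$-cut of $G$; instead, for a compliant-but-not-fully-compliant demand $uv$ it invokes Lemma~\ref{lemma:technical} to produce a $2$-cut $\{l,r\}$ separating $u$ from $v$ that lies on a directed $s$--$t$ path through $u,v$ and, critically, separates one of $u,v$ from \emph{both} $s$ and $t$. This extra property, together with Lemma~\ref{lemma:crosstwice} (a directed path crosses any central cut at most twice), is what drives the entire case analysis establishing that the contracted instance is tri-source or path-bipartite. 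Your one-line claim that ``compliance then forces the demand graph induced among the middle nodes to be bipartite'' is precisely where the paper spends almost all of its effort: it splits on whether $t\in L$, and in the harder case must argue (via a $K_4$-minor contradiction) that all arcs between any component $C_i$ and $R\setminus L$ point the same way, whence the components bipartition into ``in'' and ``out'' types. None of this orientation structure is available from a generic strict $2$-cut, and your submodularity-plus-nested-tight-cuts sketch does not substitute for it.

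There is also a logical muddle at the end. You say the $K_{2m}$ routing ``lifts through the tight-cut contractions \ldots\ to an integral routing of $(G,H)$, contradicting the assumed failure to push $f$.'' But a routing of $(G,H)$ would not contradict failure to push; it would simply finish the theorem. The paper's contradiction is cleaner and different: routability of the path-bipartite $K_{2m}$ instance (specifically, the pushing analysis inside the proof of Lemma~\ref{lem:bipk2m}) is directly incompatible with both images $L',R'$ being tight while the image of the demand sits outside each. No lifting is attempted or needed. Your lifting route would additionally require that each contracted ``core'' is itself a smaller compliant, Eulerian, cut-satisfying instance, which is far from obvious---demands passing through a tight core need not remain compliant relative to that core's own orientation. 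Finally, the claim $S\subseteq V(G_1)$ is not justified: $S$ is central and contains $u$, but $u$ may have neighbours in $V(G_2)\setminus\{v\}$, so $G[S]$ can certainly extend into $G_2$.
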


We start by two technical
lemmas on oriented series-parallel graphs and on
compliant demand edges.

\begin{lemma}
\label{lemma:crosstwice}
  Any directed path in $G$ crosses at most twice the cut defined by a
  central set.
\end{lemma}

\begin{proof}
  Suppose there is a directed path $P$ that crosses at least three
  times the cut $\delta(S)$ defined by a central set $S$. We extend
  $P$ to an $st$ path $P'$, and denote by $e_1$, $e_2$ and $e_3$ the
  first three edges of $P'$ crossing $\delta(S)$. $P'$ can be
  decomposed as follows: $P'=(P_1,e_1,P_2,e_2,P_3,e_3,P_4)$ with $P_1$
  starting from $s$ and $P_4$ ending at $t$. Without loss of
  generality, we can assume $P_1$ and $P_3$ do not intersect $S$,
  $P_2$ is contained in $S$, and $P_4$ is partially in $S$.
\parpic[r]{
  \begin{minipage}{2.5in}
    \begin{center}
      \includegraphics[width=2.5in]{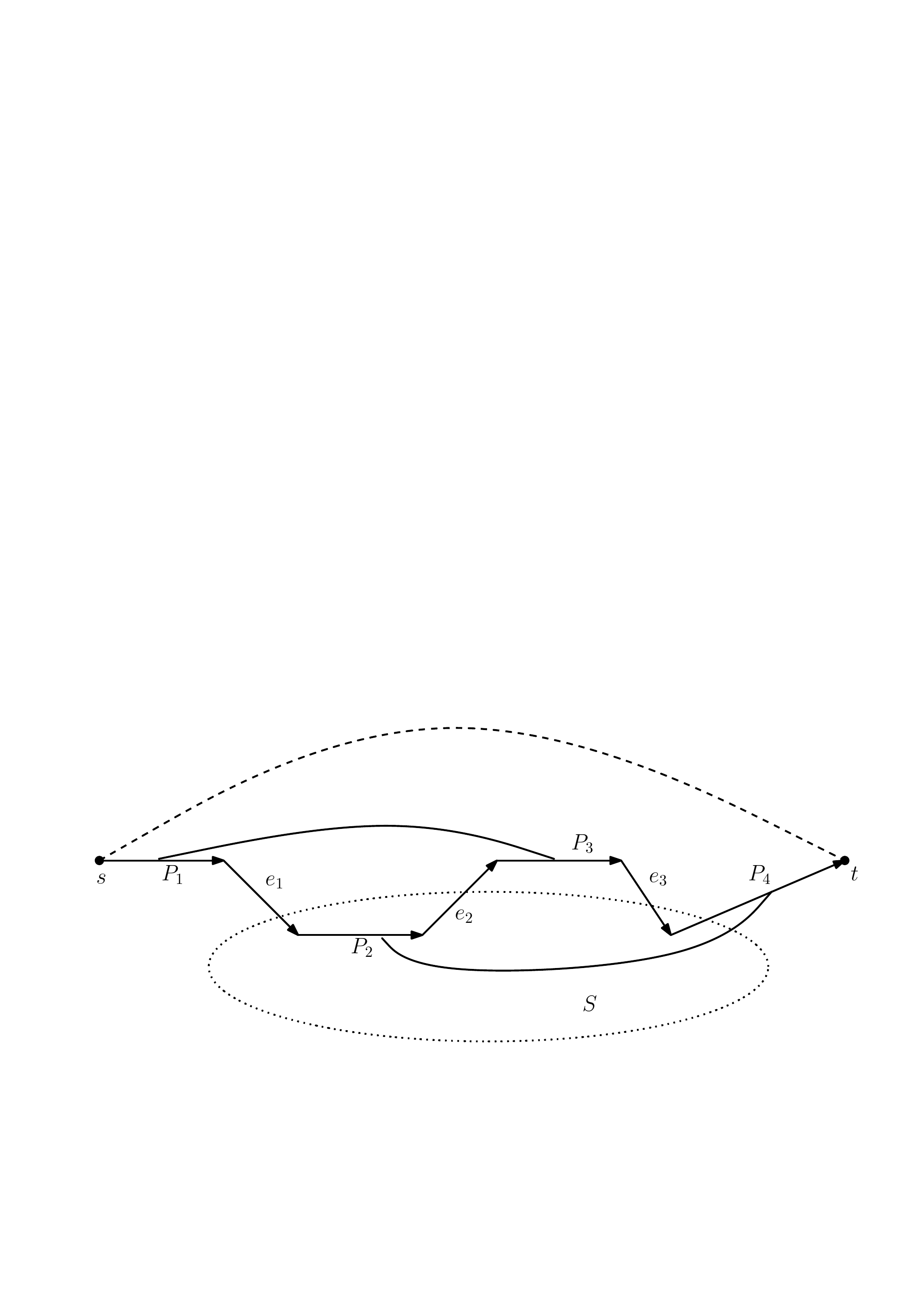}
    \end{center} \label{fig:central}
  \end{minipage}
}
We show
  these four parts form a $K_4$, which is a contradiction. Indeed,
  $P_1$ and $P_3$ are connected by a path which does not cross
  $\delta(S)$, because $S$ is central. Similarly, $P_2$ and $P_4$ are
  connected by a path contained in $S$. Since $S$ is a series-parallel
  graph created from an $st$ edge, adding an $st$ edge should not
  create a $K_4$ minor. However, this would clearly be the case here,
  and so we have a contradiction.
\end{proof}

\begin{lemma}
\label{lemma:technical}
Let $G$ be a $2$-connected series parallel graph obtained from an edge
$st$. Let $uv$ be a compliant edge that is not fully compliant. Then
there is a directed $s$-$t$ path $P$ that contains $u,v$ and in
addition, there is a $2$-cut $l,r$ in $G$ that separates $u,v$ such
that $l,r$ lie on $P$. Moreover, if $P$ traverses $s,l,u,r,v,t$ in
that order, then $l,r$ can be chosen to separate $u$ from both $s$ and
$t$ (and symmetrically, if $P$ traverses $s,u,r,v,l,t$ we can separate
$v$ from $s,t$).
\end{lemma}
\begin{proof}
  Since $uv$ is compliant, there is a directed path which without
  loss of generality traverses $u$ and then $v$. This can be extended
  to a directed $s$-$t$ path $P$ traversing $s,u,v,t$ in that order.
  Since $uv$ is not fully compliant, there is by Lemma~\ref{lem:2cut} a
  $2$-cut $l,r$ in $G$ separating $u$ from $v$.  One node of the
  $2$-cut, say $r$, has to be on the path from $u$ to $v$. So $P$
  traverses $s-u-r-v-t$ in that order.  Suppose $l$ is not on $P$;
  then $l,r$ separates $s$ from $t$ (otherwise, there would be a path
  $u-s-t-v$ in $G-\{l,r\}$). Since $G$ is $2$-node connected, there
  are two nodes-disjoint paths from $s$ to $t$, one containing $l$ and
  the other containing $r$.  This implies that $s,t$ is also a $2$-cut
  separating $l$ from $r$. If this were not the case, then $G-\{s,t\}$
  would contain a path joining $l,r$.  But then $G+st$ would clearly
  contain a $K_4$ minor, a contradiction. We now claim that $s,r$ is a
  $2$-cut that separates $u,v$, and hence choosing $l=s$ gives the
  desired cut. Suppose not, then there is a $u$-$v$ path $Q$ in
  $G-\{s,r\}$; this path $Q$ necessarily has to contain $l$ since
  $l,r$ is a $2$-cut for $u,v$. Let $Q_1$ be the sub-path of $Q$ from
  $u$ to $l$. If $Q_1$ contains $t$, then we have a $u-v$ path that
  avoids $l,r$, by following $Q_1$ and then the portion of $P$ between
  $t$ and $v$, a contradiction.  Otherwise, $Q_1$ combined with the
  portion of $P$ from $u$ to $r$ is an $l-r$ path that avoids $s,t$,
  again a contradiction. Hence we can assume that $l$ also lies on $P$
  as claimed.

  For the second part, assume that $P$ traverses $s,l,u,r,v,t$ in
  order.  Clearly, there cannot be a path from $u$ to $t$ in
  $G'=G-\{l,r\}$ for otherwise it can be combined with $P$ to find a $u-v$ path
  that avoids
  $l,r$. Suppose $u$ is connected to $s\neq l$ in $G'$. Since $G$ is
  $2$-node-connected, there are two oriented nodes-disjoint $s$-$t$
  paths in $G$ and one of them avoids $l,r$; otherwise one contains $l$ and the other $r$.
  Hence these two paths
  along with the portion of $P$ from $l$ to $r$ yield a $K_4$ in
  $G+st$. Thus $s,t$ are connected in $G'$ and hence if $u$ can reach
  $s$ in $G'$ it can reach $t$, and hence $v$ as well, again contradicting that $l,r$
  is a $2$-cut for $u,v$.
  \end{proof}

We are now ready to prove Theorem~\ref{thm:semi}.

\begin{proof}
 Let $uv$ be a demand edge which is compliant, but
  not fully compliant. We show that we can push $uv$ into a series
  of fully compliant demand edges, maintaining the hypotheses for the new instance.

  By Lemma~\ref{lemma:technical} we have, without loss of generality, a directed $s$-$t$ path $P$ that
  traverses nodes $s,l,u,r,v,t$ in that order (possibly $s=l$)
  where the component of $G-l,r$ containing $u$  does not contain $v,s$ or $t$.

  \begin{figure}[htb]
    \begin{center}
      \includegraphics[height=2in]{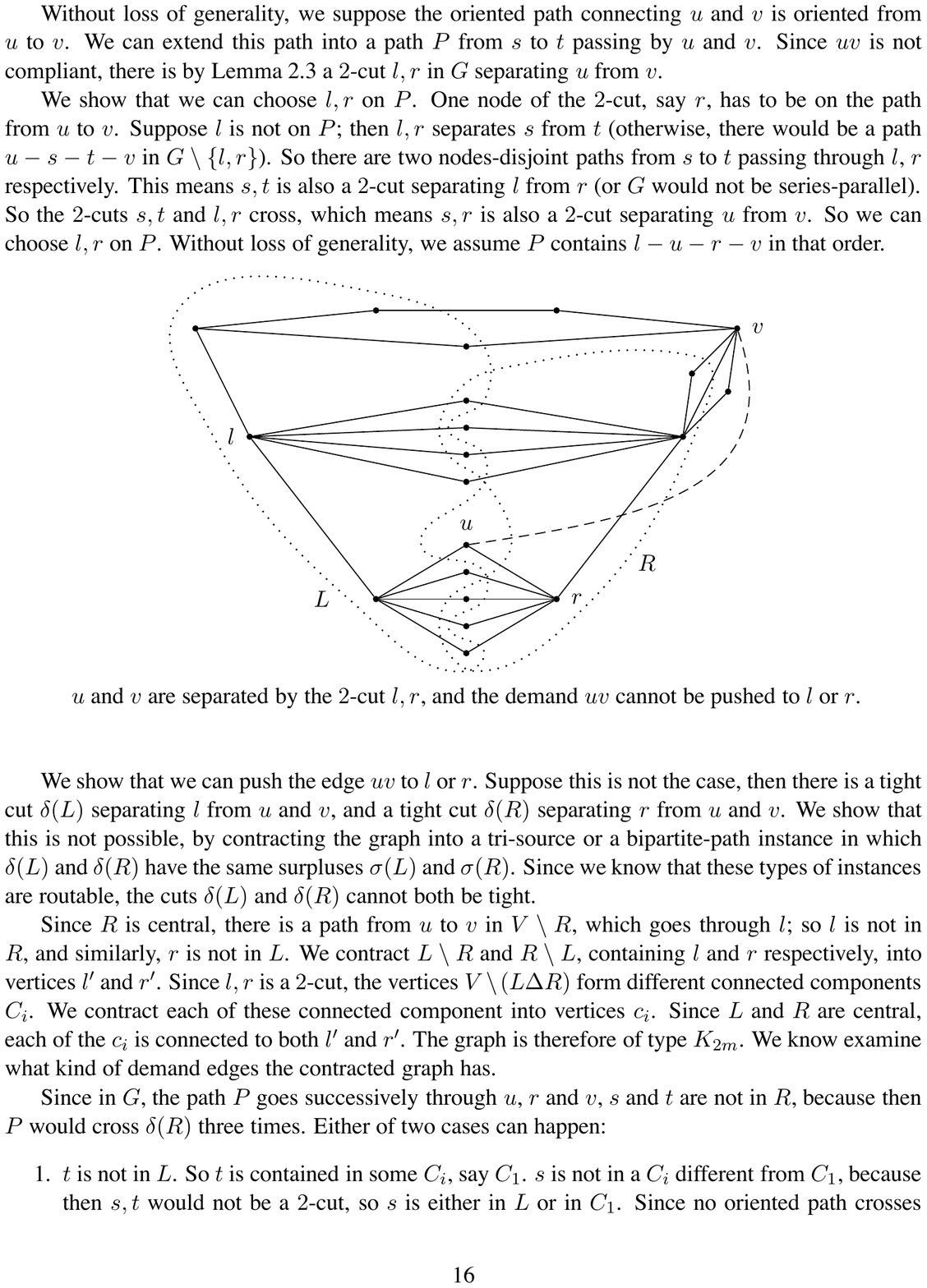}
    \end{center}
    \caption{$u$ and $v$ are separated by the $2$-cut $l,r$, and the demand $uv$ cannot be pushed to $l$ or $r$.}
    \label{fig:contraction}
  \end{figure}

  We show that we can push the edge $uv$ to $l$ or $r$. Suppose this
  is not the case, then since $l,r$ is a $2$-cut, by
  Fact~\ref{fact:pushcond} there is a (central) tight cut $\delta(L)$
  separating $l$ from $u,r$ and $v$, and a (central) tight cut
  $\delta(R)$ separating $r$ from $u,l$ and $v$. We show that this is
  not possible, by contracting the graph into a tri-source or a
  path-bipartite instance which again contains tight cuts
  corresponding to $\delta(L)$ and $\delta(R)$.  Since we know that
  such instances are routable, these tight cuts could not exist.


  Recall that $L \setminus R$ contains $l$ but not $r$,
  and $R \setminus L$ contains $r$ but not $l$. Hence we can contract
  $L\setminus R$ and $R\setminus L$ and label the  nodes $l'$ and $r'$
  respectively.  (By Lemma~\ref{lemma:centralrl}, this is actually a
  contraction on connected subgraphs, although it is not critical at
  this point in the argument that we have a minor as such.)
  Denote the resulting instance (after removing loops) by $G^*,H^*$.

  Since $\{l,r\}$ is a $2$-cut separating $u,v$ we have that the graph
  induced by the nodes $V\setminus(L\Delta R)$ has at least $2$
  components (one containing $u$ and one containing $v$).  Let $C_i$
  denote the components in this graph and let $X_i = R \cap L \cap
  C_i, Y_i = C_i \cap (V \setminus (R \cup L))$. Consider any
  component $K$ of $G[Y_i]$.  Since $G-L$ is connected, there is a
  path from $K$ to $R \setminus L$ in the graph induced by $K \cup (R
  \setminus L)$. Similarly, there is a path from $K$ to $L \setminus
  R$ in $G[K \cup (L \setminus R)]$, since $G-R$ is connected.
  Analogously, if $K$ is a component of $G[X_i]$, then since $G[R]$ is
  connected, there must be a path from $K$ to $R \setminus L$ in $G[K
  \cup (R \setminus L)]$.  Similarly there is a path from $K$ to $L
  \setminus R$.  Now if both $X_i,Y_i$ are nonempty, then we may
  choose a pair of components $K,K'$ from $X_i,Y_i$ respectively, and
  a path joining them in $C_i$, so that we can form $K_4-e$ in the
  minor whose nodes are $l',r',K,K'$. This together with a path
  through some other $C_j$ yields a $K_4$. Hence for each $i$, at most
  one of $X_i$ or $Y_i$ is nonempty. Moreover, if we shrink each $C_i$
  to a node $c_i$, then we have edges $c_il',c_ir'$. The shrunken
  graph is therefore of type $K_{2m}$, with possibly an edge from $l'$
  to $r'$. Since each $c_i$ was either of ``type'' $R \cap L$, or type
  $V\setminus(R \cup L)$, in the shrunken graph we can identify tight
  cuts induced by sets $L',R'$ associated with our original pair
  $L,R$.

  We next claim that neither $s$ nor $t$ is in $R$ (and hence
  $R\setminus L$); to see this, note that the $s$-$t$ path $P$ goes
  successively through $s,u,r,v,t$ and hence if either $s$ or $t$ is
  in $R$, then $P$ would cross $\delta(R)$ three or more times.

  We now examine two cases based on whether $t$ is in $L$ or not.
  In each case we examine the structure of the demand edges in
  the shrunken graph.

  \begin{enumerate}
  \item $t$ is not in $L$. So $t$ is contained in some $C_i$, say
    $C_1$. Suppose first that $s$ lies in some $C_i$ different from
    $C_1$.  By Lemma~\ref{lemma:technical}, $l,r$ separates $u$ from
    $s,t$; so $u \not\in C_i \cup C_1$.  But then adding $st$ (i.e.,
    $c_ic_1$) to the shrunken graph (which maintains the
    series-parallel property), would create a $K_4$ on the nodes
    $l',r',c_i,c_1$, by considering the $l'-r'$ path through the
    component $C_j$ containing $u$. Hence we may assume that $s$ is
    either in $L$ or in $C_1$.

   Consider next some compliant
    demand edge $u'v'$ in the shrunken instance, and suppose that neither of its endpoints lie in
    $C_1$;
    say the endpoints are in $C_i,C_j$.
    Consider the directed $s$-$t$ path $P'$  associated with this
    demand where $P'$   traverses $s,u',v',t$ in that order.  In
    the shrunken graph it must cross the cuts induced by
    $C_1,C_i,C_j$ at least $5$ times. 
    Since every edge in any of these cuts
    lies in either $\delta(L')$ or $\delta(R')$, one of these two cuts
    is crossed three times, a contradiction. Hence, any compliant edge that remains in the shrunken graph,
    must have one endpoint in $C_1$.  Thus the shrunken graph is a
    tri-source instance where $l'$, $r'$ and $c_1$ are the three
    sources.

  \item $t$ is in $L$. We claim that $s$ is also in $L$; $P$ goes
  successively through $l$, $u$ and $t$, and hence if $s$ is not in
  $L$, $P$ would cross $\delta(L)$ three times.

  Recall that any directed path can be extended into a path starting
  at $s$ and ending at $t$, and that this path should cross the cut of
  any central set at most twice by Lemma~\ref{lemma:crosstwice}.
  We also use Lemma~\ref{lemma:centralrl} to obtain that both $R \setminus L,L\setminus R$
  are central sets. Thus since $s$ and $t$ are
  both in $L\setminus R$, there are two types of
  directed $s$-$t$ paths. The first type does not ever enter $R \setminus L$.
  If such a path ever leaves  $L\setminus R$, then it must
  enter some $C_i$. It must  then leave $C_i$ to reach $t$, and
  hence it enters $L\setminus R$ again. Thus the path has crossed the cut of
  $L\setminus R$ twice, and hence it can not leave it again. The second
  type of path may
  traverse $R\setminus L$. This is the only type that can traverse
  more than one $C_i$. We claim that a directed $s$-$t$ path of the
  second
  type goes from $L\setminus R$ to $R\setminus L$, traversing at most
  one $C_i$ on the way, then goes back from $R\setminus L$ to
  $L\setminus R$, traversing at most one $C_i$ on the way.
  This follows since any $s$-$t$ directed path cannot cross
  $\delta(R \setminus L)$ more than twice, so it can enter at most
  once, and leave at most once.

  Therefore, a directed $s$-$t$ path of the second type leaves $L\setminus
  R$, possibly traverses a $C_i$, enters $R\setminus L$, leaves
  $R\setminus L$, possibly traverses a $C_i$, and enters $L\setminus
  R$ in that order. It cannot leave $L\setminus R$ again after, as it
  has crossed its cut twice. As a consequence, no directed path can
  traverse more than two $C_i$'s.

  We now show that for any $C_i$, the arcs (oriented edges of $G$)
  connecting $C_i$ to $R\setminus L$ are all oriented in the same
  direction. Assume there is a $C_i$, say $C_1$, with an arc $e_1$
  entering and an arc $e_2$ leaving $R\setminus L$. We can extend
  $e_1$ into a directed path $P_1$ starting with $s$. Since $P_1$ ends
  by entering $R\setminus L$ with $e_1$, it did not leave $R\setminus
  L$ before, so it entered $C_1$ from $L\setminus R$. We can also
  extend $e_2$ into a directed path $P_2$ ending at $t$. Since $P_2$ starts
  by leaving $R\setminus L$ with $e_2$, it can not enter $R\setminus
  L$ again, so it leaves $C_1$ for $L\setminus R$.

\parpic[r]{
  \begin{minipage}{2in}
    \begin{center}
      \includegraphics[height=1.5in]{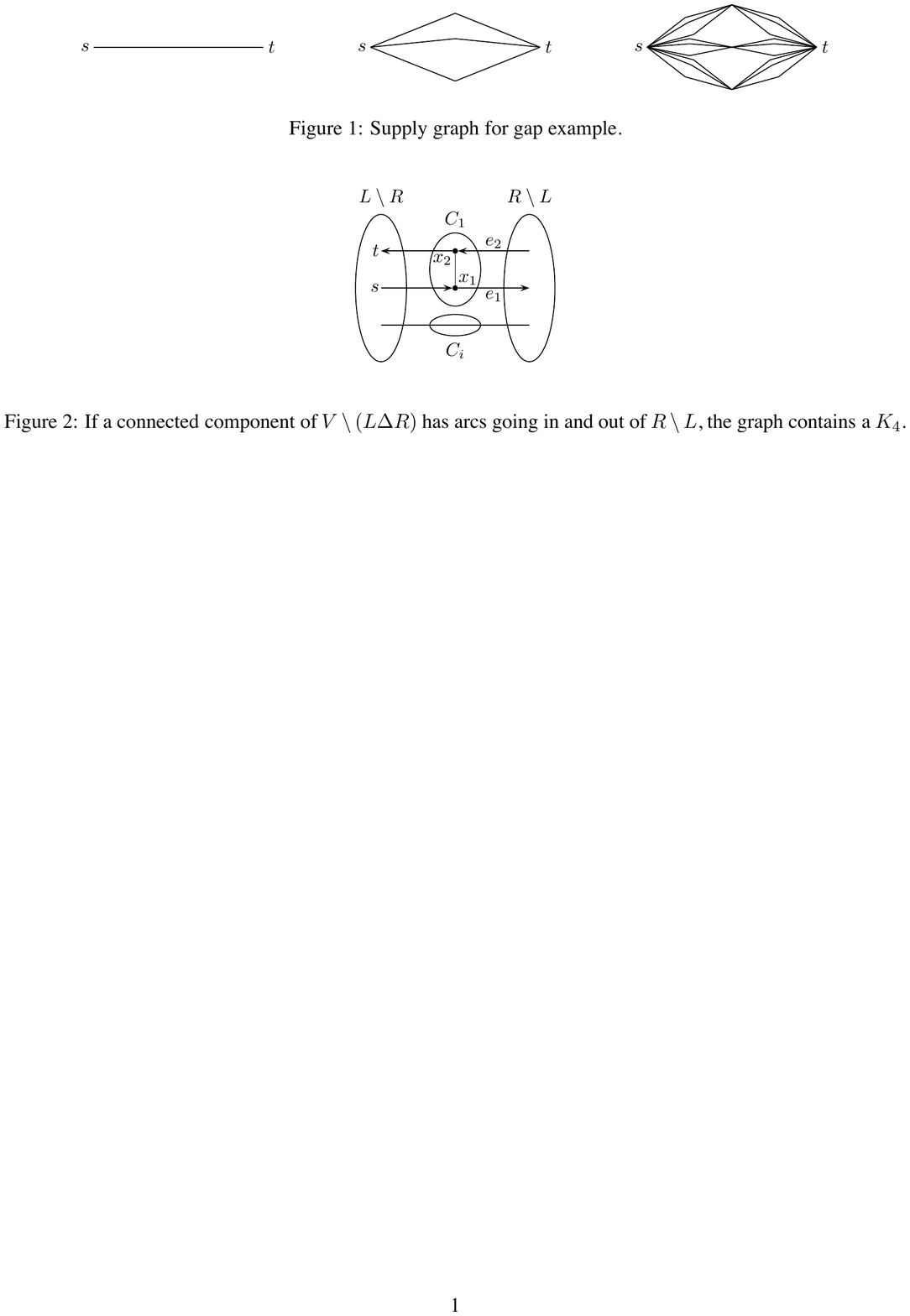}
    \end{center}
  \end{minipage}
}

We now show $P_1$ and $P_2$ are node-disjoint inside $C_1$. Suppose
that it is not the case, and that there is some $x\in C_1$ contained
both in $P_1$ and $P_2$. Then we can create a directed path leaving
$R\setminus L$ then entering it again, by following $P_2$ from $e_2$
to $x$, then following $P_1$ from $x$ to $e_1$. The directed path is
simple, since the graph is acyclic. We can extend it to a directed
$s$-$t$ path which then contradicts our previous claims.  Hence $P_1$
and $P_2$ are node-disjoint inside $C_1$. But since $C_1$ is
connected, these paths are connected by some path. That is, for any
$x_1\in C_1$ on $P_1$ and any $x_2\in C_1$ on $P_2$, there is a path
inside $C_1$ connecting $x_1$ to $x_2$.  Also, since $u$ and $v$ are
not in the same $C_i$, there is at least one other $C_i$ linking
$L\setminus R$ and $R\setminus L$. This now creates a $K_4$ in the
graph (with $l',r'$ shrunken) on the nodes $l',r',x_1,x_2$, a
contradiction.

  Thus for any $C_i$, the arcs connecting $C_i$ to
  $R\setminus L$ are all oriented in the same direction. Therefore
  the
  $C_i$'s are partitioned into two types: {\em out} components are those with arcs going into
  $R\setminus L$, and {\em in} components  with arcs entering from $R\setminus
  L$.  It follows that  any directed path traversing two $C_i$'s traverses one of
  each type.  Any compliant edge $u'v'$ that remains in the shrunken
  graph that is not incident to $l'$ or $r'$, must admit a directed $u'v'$ with its
  endpoints
  in distinct components. Hence exactly one of $u',v'$ lies in an in component,
  and the other lies in an out component. Thus the shrunken graph is a
  path-bipartite instance.
  \end{enumerate}

In either case, the instance $G^*,H^*$ obtained is routable by
Lemma~\ref{lem:bipk2m}. Therefore, the cuts $\delta(L)$ and
$\delta(R)$ could not have  both been tight, and so we can push $uv$
to either $l$ or $r$. By induction, we can push any compliant
edge into a series of fully compliant edges.
\end{proof}

\section{Integral Routing with Congestion}

\subsection{Routing in Congestion 5 in Series Parallel Graphs}
\label{sec:spcong5}

\begin{theorem}
Suppose $G,H$ is a series-parallel instance satisfying the cut
condition. Then $H$ has an integral routing with edge congestion
$5$.
\end{theorem}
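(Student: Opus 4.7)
The plan is to reduce a general series-parallel instance $(G,H)$ to a compliant instance that satisfies the hypotheses of Theorem~\ref{thm:semi}, paying a constant factor in capacity. First I would apply the basic operations from Section~\ref{sec:basicops} (1-cut, parallel, and slack reductions) so that $G$ is 2-node-connected and every edge lies in some tight cut. Then, using the series-parallel construction, I would fix a root terminal pair $s,t$ and orient the edges of $G$ from $s$ to $t$, so that the notion of a compliant demand is well defined.

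Next, I would convert $H$ into a compliant demand graph $H'$ by splitting every non-compliant demand. For each non-compliant $uv \in H$, Lemma~\ref{lemma:technical} produces a 2-cut $\{l,r\}$ separating $u$ from $v$, with both $l$ and $r$ lying on a directed $s$-$t$ path through $u$ and $v$. I would replace $uv$ by compliant demands obtained by pushing through $l$ and $r$ (for instance the four edges $ul, lv, ur, rv$, which are all compliant since each endpoint lies on a directed $s$-$t$ path through the other). I would then verify the central claim that $5G$ satisfies the cut condition for $H'$: for every cut $\delta(S)$, $|\delta_{H'}(S)| \le 5\,|\delta_G(S)|$, by case analysis on the positions of $s,t,l,r,u,v$ relative to $S$. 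Here Lemma~\ref{lemma:centralrl} is essential because it controls how central sets interact with the 2-cut $\{l,r\}$, and Lemma~\ref{lem:minimal} lets us restrict attention to central cuts; together they force the worst-case blow-up scenarios to coincide with positions of existing surplus in $G$. Once this is established, I would add a bounded number of parallel supply edges (well within the $5G$ budget) to restore the Eulerian property while preserving both the cut condition and compliance, and then invoke Theorem~\ref{thm:semi} to obtain an integral routing of $H'$ in $5G$. Concatenating the integral routings of the compliant pieces ($u \to l \to v$ and $u \to r \to v$) yields an integral routing of each original demand $uv$ in $5G$.

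The main obstacle is the cut-accounting step: proving that the compliant split blows up the demand across every cut by at most a factor of $5$. A single non-compliant demand, split naively, can contribute up to $4$ times its original share to some cut (when $S$ has $\{l,r\}$ on one side and $\{u,v\}$ on the other), and an additional allowance is needed for Eulerization and for the interaction of many simultaneously-split demands sharing the same tight cuts. The specific constant $5$ (as opposed to the conjectured optimum of $2$) is what the case analysis naturally yields; improving it would require either a more refined splitting rule or a tighter global accounting tailored to the series-parallel decomposition tree.
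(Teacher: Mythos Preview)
Your proposal has two genuine gaps.

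First, you invoke Lemma~\ref{lemma:technical} to obtain the $2$-cut $\{l,r\}$ for a \emph{non-compliant} demand $uv$, but that lemma is stated and proved only for demands that are compliant (and not fully compliant): its conclusion that $l,r$ lie on a directed $s$-$t$ path through both $u$ and $v$ presupposes such a path exists, which is exactly what fails for a non-compliant pair. You can still find a separating $2$-cut via Lemma~\ref{lem:2cut}, but your justification that the four split edges $ul,lv,ur,rv$ are compliant (``each endpoint lies on a directed $s$-$t$ path through the other'') collapses once $u,v$ are not on a common directed path. The correct structural fact is that if $\{s',t'\}$ is the \emph{highest-level} $2$-cut in the series-parallel decomposition separating $x$ from $y$, then $xs',xt',ys',yt'$ are compliant; this is what the paper uses, and it is not what Lemma~\ref{lemma:technical} provides.

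Second, and more seriously, the cut-accounting step is where the real work lies, and your ``case analysis on the positions of $s,t,l,r,u,v$ relative to $S$'' cannot succeed as stated. A single demand analysis only bounds the blow-up additively (your four-edge split can add $+2$ to $|\delta_{H'}(S)|$ even when $uv$ contributed $0$), and many demands with different $\{l,r\}$ pairs can pile this additive excess onto one tight cut; there is no local mechanism to control this. The paper's proof avoids this entirely by importing the fractional flow-cut gap of~$2$ for series-parallel graphs \cite{ChakrabartiJLV08} as a black box: the congestion-$2$ fractional routing is used both to \emph{choose} which single node ($s'$ or $t'$) to push each demand to, and simultaneously to \emph{certify} that the pushed instance is half-routable in $2G$, hence satisfies the cut condition in $4G$. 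One more copy of $G$ supplies a $T$-join for the Eulerian condition, giving $5$. Your proposal never invokes this fractional result, and without it there is no engine driving the bound.
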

\begin{proof}
  By the result of \cite{LeeR07}, there is a fractional routing $f$ of
  $H$ with congestion $2$. For any demand edge $xy$, suppose that
  $s',t'$ induce the highest level (with respect to the decomposition
  of $G$ starting from $st$) 2-cut separating $x,y$.  Then at least
  half of any fractional flow for $xy$ has to go either via $s'$ or
  $t'$.  Without loss of generality, assume it is $s'$. We push the
  $xy$ demand edge to $s'$, i.e., we create demand edges $xs',ys'$ and
  remove $xy$.  We do this simultaneously for all demand edges - we
  are pushing demands based on the fractional flow $f$. The new
  instance is compliant. Let us call $H'$ the new demand graph. By
  construction, there is a feasible fractional flow of $1/2$ of each
  demand from $H'$ in $2G$. This implies that $4G$ satisfies the cut condition
  for $H'$.  In order to make the graph Eulerian, we can add a
  $T$-join, where $T$ is the set of odd degree nodes in $4G+H'$. Since
  we can assume that $G$ is connected by previous reductions, we may
  choose such a $T$-join as a subset of $E(G)$. It follows that we can
  create an Eulerian, compliant instance $G',H'$ that satisfies the
  cut condition, and $G'$ is a subgraph of $5G$.  Hence by
  Theorem~\ref{thm:semi}, we may integrally route $H'$, and hence $H$
  in the graph $5G$.
\end{proof}

We believe that the above result can be strengthened substantially
and postulate the following:
\begin{con}\label{con:SP}
  Let $G,H$ satisfy the cut condition where $G$ is series-parallel and
  $G+H$ is Eulerian. Then there is a congestion $2$ integral routing
  for $H$.
\end{con}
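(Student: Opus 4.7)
The plan is to prove Conjecture~\ref{con:SP} by induction on the series-parallel decomposition of $G$, guided by a fractional routing of $H$ in $2G$ (which exists since the fractional flow-cut gap on series-parallel graphs is $2$), and to invoke Theorem~\ref{thm:semi} (compliant routing) at the leaves of the recursion.

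The base case is when $G$ is a capacitated ring. Here the classical ring-routing theorem (a special case of Okamura--Seymour) asserts that the cut condition together with the Eulerian property forces an integral routing already in $G$, so congestion $1 < 2$ suffices. For the inductive step, first apply all basic reductions (Section~\ref{sec:basicops}), so $G$ is $2$-connected and no trivial simplifications remain. If $G$ is not a ring then by Lemma~\ref{lem:decomp} there is a strict $2$-cut $\{u,v\}$ with partition $(G_1, G_2)$. Demands with both endpoints inside a single $V(G_i)$ are untouched. For each \emph{crossing} demand $xy$ with $x \in V(G_1) \setminus \{u,v\}$ and $y \in V(G_2) \setminus \{u,v\}$, we must pick a routing endpoint $\tau(xy) \in \{u,v\}$; under $\tau$ the demand $xy$ is replaced by $x\tau(xy)$ in a new demand graph $H'_1$ on $G_1$ and by $\tau(xy)y$ in $H'_2$ on $G_2$. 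The aim is to choose $\tau$ so that each $(2G_i, H'_i)$ is Eulerian and satisfies the cut condition; induction then delivers integral routings in $2G_1$ and $2G_2$ which glue into an integral routing of $H$ in $2G$.

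To choose $\tau$ I would use a fractional routing $f$ of $H$ in $2G$ and set $p_{xy} \in [0,1]$ to be the fraction of $f_{xy}$ that crosses the $2$-cut through $u$. This $p$ trivially makes both sub-instances fractionally feasible in $2G_1$ and $2G_2$. The Eulerian condition reduces to a fixed parity constraint on the number of crossings assigned to each of $u$ and $v$, which follows from $G+H$ being Eulerian at $u$ and $v$ and is cheap to enforce. The hard step is to round $p$ to a $\{0,1\}$-valued $\tau$ that simultaneously respects the parity constraint and preserves the cut condition in both $2G_1$ and $2G_2$. I would attempt this via the laminar structure of central tight cuts in each series-parallel $G_i$ (using Lemma~\ref{lem:minimal} and Lemma~\ref{lemma:centralrl}), aiming for a matroid-intersection or Hall-type exchange argument; the characterisation of cut-sufficient $K_{2m}$ instances from Section~\ref{subsec:odd} should serve as a guide for the local structure of the tight cuts produced at each decomposition step.

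The principal obstacle is precisely this two-sided rounding. Lee and Raghavendra's $2-o(1)$ lower bound shows that the factor-$2$ slack is fractionally tight, so the integer rounding must be essentially lossless; there is no room for the $T$-join/extra-copy-of-$G$ trick that forced the congestion-$5$ proof to pay an additional factor. Achieving a lossless rounding seems to require a substantial strengthening of Theorem~\ref{thm:semi}, roughly a compliant-routing theorem for demand graphs produced from a fractional flow via a single ``half-split'' push. Establishing (or disproving) such a strengthening---and in particular confirming or refuting the $K_{2m}$-minor conjecture of Section~\ref{subsec:odd} for general series-parallel graphs---is where I expect any proof of Conjecture~\ref{con:SP} to stand or fall.
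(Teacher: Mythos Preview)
The statement you are trying to prove is labelled a \emph{conjecture} in the paper; the authors do not prove it. Immediately after their congestion-$5$ theorem they write ``We believe that the above result can be strengthened substantially and postulate the following,'' and then state Conjecture~\ref{con:SP}. There is therefore no ``paper's own proof'' to compare your attempt against.

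Your proposal is not a proof but a proof outline with a gap that you yourself flag in the final paragraph: the rounding of the fractional split $p_{xy}$ at a strict $2$-cut to a $\{0,1\}$ choice $\tau(xy)$ that \emph{simultaneously} preserves the cut condition in both $(2G_1,H'_1)$ and $(2G_2,H'_2)$, while meeting the parity constraint. You say you ``would attempt'' this via laminar structure of tight cuts and a matroid-intersection or Hall-type argument, and that it ``seems to require a substantial strengthening of Theorem~\ref{thm:semi}.'' That is an honest assessment, but it means the inductive step is not established. The lower-bound family of Section~\ref{app:lower} shows the fractional slack is exactly $2$, so any integer rounding here must be lossless across \emph{every} $2$-cut encountered in the recursion, not just one; a single bad choice at any level destroys the invariant, and the tight-cut structure on the two sides of a strict $2$-cut is in general not jointly laminar. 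Neither the $K_{2m}$ characterisation of Section~\ref{subsec:odd} nor Lemma~\ref{lemma:centralrl} supplies the two-sided exchange you need, and the paper's own congestion-$5$ argument succeeds precisely because it spends extra congestion (doubling twice, then adding a $T$-join) to sidestep this rounding. Until you can actually carry out the two-sided lossless rounding, what you have is a plausible programme, not a proof.
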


\subsection{Rerouting Lemma from \cite{treewidth}}
\label{sec:rerouting}
We state the rerouting lemma that we referred to in the
introduction. It is useful to refer to the informal version we described
earlier.  Let $D$ be a demand matrix in a graph $G$ and let $f : V
\rightarrow V$ be a mapping. We define a demand matrix $D_f$ as
follows:

\[
D_f(xy) = \sum_{uv: f(u) = x, f(v) = y} D(uv).
\]
In other words the demand $D(uv)$ for a pair of nodes $uv$ is
transferred in $D_f$ to the pair $f(u)f(v)$. Thus the total demand
transferred from $u$ to $f(u)$ is $\sum_v D(uv)$. We define another
demand matrix $D'_f$ which essentially asks that each node $u$ can
send this amount of flow to $f(u)$.

\[
D'_f(uf(u)) = \sum_{v} D(uv).
\]

\begin{proposition}
  \label{prop:compose-routing}
  If $D'_f$ is (integrally) routable in $G$ with congestion $a$, and
  $D_f$ is (integrally) routable in $G$ with congestion $b$, then $D$
  is (integrally) routable with congestion $a+b$ in $G$.
\end{proposition}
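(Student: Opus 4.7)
The plan is to compose the two given routings by tagging each unit of flow with the underlying demand pair it should serve, then concatenating three pieces per demand. Fix a feasible routing $g'$ of $D'_f$ with congestion $a$ and a feasible routing $g$ of $D_f$ with congestion $b$.

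First, I would tag the flow in $g'$. For each node $u$, the $g'$-flow from $u$ to $f(u)$ has total value $D'_f(u,f(u)) = \sum_v D(uv)$, so it can be split into subflows of value $D(uv)$, one tagged by each partner $v$. Analogously, for each pair $(x,y)$ the $g$-flow between $x$ and $y$ has total value $D_f(xy) = \sum_{uv:\, f(u)=x,\, f(v)=y} D(uv)$, so it splits into subflows of value $D(uv)$ tagged by the underlying pair $(u,v)$. In the integral case the routings decompose into integer (unit) paths, and the tagging is a bipartite assignment between paths and partners whose marginals match the path counts, so it is always feasible.

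Second, for each demand pair $uv$ with $D(uv)>0$ I build its routing of $D(uv)$ units from $u$ to $v$ by concatenating three pieces: (i) the $g'$-subflow from $u$ to $f(u)$ tagged $v$, (ii) the $g$-subflow from $f(u)$ to $f(v)$ tagged $(u,v)$, and (iii) the $g'$-subflow from $v$ to $f(v)$ tagged $u$, traversed in reverse (using that $G$ is undirected). Each of the three pieces has flow value exactly $D(uv)$, so the concatenation is a valid flow of value $D(uv)$ from $u$ to $v$; in the integral setting the three pieces are collections of $D(uv)$ unit paths each, which we pair up arbitrarily to obtain $D(uv)$ integer $u$-$v$ walks.

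The remaining point is the congestion bound. On any edge $e$, the contribution of the $g$-pieces totals at most $b$, since each unit of $g$-flow on $e$ is used exactly once, namely in step (ii) of the unique demand pair whose tag it carries. Likewise, each unit of $g'$-flow on $e$ is used exactly once: a tagged piece of $g'$ from $u$ to $f(u)$ with tag $v$ plays role (i) for demand $uv$, while a tagged piece of $g'$ from $v$ to $f(v)$ with tag $u$ plays role (iii) for the same demand $uv$, and these are two distinct pieces of $g'$, not the same piece used twice. Hence the $g'$-contribution on $e$ totals at most $a$, and the composite routing has congestion at most $a+b$. I do not anticipate any genuine obstacle; the only care required is the bookkeeping to verify that every flow piece is used exactly once, and that in the integral case the tagging and pairing can be realized at the level of unit paths.
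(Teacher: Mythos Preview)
Your proof is correct and is precisely the composition argument the paper intends. The paper does not give a formal proof of this proposition; it is stated as self-evident, with the introduction remarking only that ``clearly we can compose the routings for $H'$ and $H''$ to route $H$.'' Your write-up is a careful formalization of that sentence: tag the $D'_f$ and $D_f$ flows by the original demand pair, then concatenate $u\to f(u)$, $f(u)\to f(v)$, and $f(v)\to v$. Your observation that the $u\to f(u)$ piece tagged $v$ and the $v\to f(v)$ piece tagged $u$ are distinct pieces of $g'$ is exactly the point needed to get congestion $a$ rather than $2a$ from the $g'$ contribution, and it is correct because $D'_f$ charges $\sum_w D(uw)$ to $u$ and separately $\sum_w D(vw)$ to $v$.
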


We need a cut condition given by the simple lemma below.
For completeness, we include a proof in the appendix (\ref{app:rerouting}).

\begin{lemma}[\cite{treewidth}]
  \label{lem:match}
  Let $D$ be a demand matrix on a given graph $G$ and let $f: V
  \rightarrow V$ be a mapping. If the cut condition is satisfied for
  $D$, and $D'_f$ is routable in $\gamma G$, then the cut condition is
  satisfied for $D_f$ in $(\gamma+1) \cdot G$.
\end{lemma}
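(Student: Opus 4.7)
The plan is to prove the cut-condition for $D_f$ in $(\gamma+1)G$ by showing the pointwise inequality
\[
|\delta_{D_f}(S)| \;\leq\; |\delta_D(S)| + |\delta_{D'_f}(S)|
\]
for every $S \subseteq V$, and then combining this with (i) the hypothesis that $D$ satisfies the cut condition in $G$, i.e.\ $|\delta_D(S)| \leq |\delta_G(S)|$, and (ii) the fact that routability of $D'_f$ in $\gamma G$ implies the cut condition $|\delta_{D'_f}(S)| \leq \gamma |\delta_G(S)|$. Adding these two bounds gives exactly $|\delta_{D_f}(S)| \leq (\gamma+1)|\delta_G(S)|$.

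To establish the key inequality, I would unfold the definitions. Fix $S$ and, for any node $w$, write $\chi(w) \in \{0,1\}$ for its indicator of belonging to $S$. For an unordered pair $\{u,v\}$, its contribution to $|\delta_{D_f}(S)|$ is $D(uv) \cdot [\chi(f(u)) \neq \chi(f(v))]$; its contribution to $|\delta_D(S)|$ is $D(uv) \cdot [\chi(u) \neq \chi(v)]$; and since $D'_f(w,f(w)) = \sum_x D(wx)$, its contribution to $|\delta_{D'_f}(S)|$ is $D(uv) \cdot \bigl([\chi(u) \neq \chi(f(u))] + [\chi(v) \neq \chi(f(v))]\bigr)$. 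Thus it suffices to prove, for the four 0/1 values $\chi(u), \chi(v), \chi(f(u)), \chi(f(v))$, the elementary inequality
\[
[\chi(f(u)) \neq \chi(f(v))] \;\leq\; [\chi(u)\neq\chi(v)] + [\chi(u)\neq\chi(f(u))] + [\chi(v)\neq\chi(f(v))].
\]

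This is just the triangle inequality for the cut pseudometric $d(a,b) = [\chi(a) \neq \chi(b)]$, applied to the path $f(u) \to u \to v \to f(v)$. Equivalently, the four indicators sum to an even number modulo $2$, so if the left-hand side equals $1$ then at least one term on the right must also equal $1$. Summing this pointwise inequality over all demand pairs $\{u,v\}$ yields the claimed bound on $|\delta_{D_f}(S)|$, and the lemma follows. There is no real obstacle here; the only subtlety is the bookkeeping to avoid double-counting when $D$ is treated as an undirected demand graph, which is handled cleanly by summing over unordered pairs.
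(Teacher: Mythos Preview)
Your proof is correct and follows essentially the same approach as the paper: both establish the key inequality $D_f(S) \le D(S) + D'_f(S)$ for every cut $S$ and then combine it with the two hypotheses. The paper phrases the argument as a case-based charging scheme (charge $D(uv)$ to $D(S)$ if $uv$ is separated, otherwise to the unique separated edge among $uf(u),vf(v)$), while you express the same parity observation as the triangle inequality for the cut pseudometric along the path $f(u)\!-\!u\!-\!v\!-\!f(v)$; these are the same idea in different clothing.
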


We give a useful corollary of the above lemma.

\begin{corollary}
  \label{cor:reroute}
  Let $G=(V,E)$ satisfy the cut-condition for $H=(V,E_H)$ and let
  $A\subseteq V$ be a node-cover in $H$. Then there exists a demand
  graph $I=(A,F)$ such that $2G$ satisfies the cut-condition for
  $I$. Moreover, if $I$ is (integrally) routable in $2G$ with
  congestion $\alpha$, then $H$ is (integrally) routable in $G$ with
  congestion $(1+2\alpha)$.
\end{corollary}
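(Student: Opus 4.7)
The plan is to apply the rerouting machinery of this subsection: construct $I$ via the push induced by a map $f \colon V \to A$, verify the cut condition for the pushed instance $I$ in $2G$ using Lemma~\ref{lem:match}, and assemble the final routing of $H$ using Proposition~\ref{prop:compose-routing}.

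Concretely, I set $f(a) = a$ for each $a \in A$, and $f(v) = \pi(v)$ for each $v \in V \setminus A$, where $\pi(v) \in N_H(v) \cap A$; such a $\pi(v)$ exists because $A$ is a node-cover of $H$. Let $D$ be the demand matrix of $H$ and define $I := D_f$, which is automatically supported on $A$ and hence is a demand graph $(A, F)$. The associated endpoint-rerouting demand $D'_f$ carries $\deg_H(v)$ units between each $v \in V \setminus A$ and its image $\pi(v)$; the contributions coming from vertices in $A$ are self-loops and are vacuous.

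Granting that $D'_f$ is routable in $G$ (congestion $1$), Lemma~\ref{lem:match} applied with $\gamma = 1$ yields the cut condition for $I$ in $2G$, establishing the first half of the corollary. For the ``moreover'' part, suppose $I$ is (integrally) routable in $2G$ with congestion $\alpha$ --- equivalently, routable in $G$ with congestion $2\alpha$. Combining this routing with the routing of $D'_f$ in $G$ via Proposition~\ref{prop:compose-routing}, with parameters $a = 1$ and $b = 2\alpha$, produces an (integral) routing of $D = H$ in $G$ with total congestion $1 + 2\alpha$. Integrality is inherited from the two component routings.

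The main obstacle is showing that $D'_f$ is indeed routable in $G$: for a naive choice of $\pi$ one can place more demand across some cut of $G$ than its capacity. The fix is to choose $\pi$ compatibly with the cut structure of $(G, H)$ --- each $v \notin A$ has $\deg_H(v) \le |\delta_G(\{v\})|$ candidate targets in $N_H(v) \cap A$, and one must select $\pi$ (possibly via a fractional or probabilistic matching argument, exploiting the cut-condition hypothesis and the cover property of $A$) so that the star-like demands $(v, \pi(v))$ of multiplicity $\deg_H(v)$ admit a simultaneous routing in $G$. Once this step is in place, the composition above is mechanical.
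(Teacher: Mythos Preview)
Your overall architecture --- define a map $f\colon V \to A$, invoke Lemma~\ref{lem:match} to get the cut condition for $I = D_f$ in $2G$, then compose via Proposition~\ref{prop:compose-routing} --- is exactly right. But the gap you flag at the end is fatal for the particular $f$ you chose, and no clever selection of a single-valued $\pi(v) \in N_H(v)$ can repair it. Take $G$ to be the path $a_1\,v\,a_2$ with unit capacities, $H = \{va_1, va_2\}$, and $A = \{a_1, a_2\}$; the cut condition holds. Whichever $\pi(v)\in\{a_1,a_2\}$ you pick, $D'_f$ asks for $\deg_H(v) = 2$ units between $v$ and a single leaf neighbour, and $G$ can carry only one unit on that edge, so $D'_f$ is not routable in $G$ with congestion~$1$ for \emph{any} single-valued $\pi$. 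The obstruction is not which target you pick but the insistence that all of $v$'s demand be rerouted to the \emph{same} node of $A$.

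The paper sidesteps this by letting a flow in $G$ determine $f$, rather than adjacency in $H$. Contract $A$ to a single node $a$: the contracted pair still satisfies the cut condition and is now a single-source instance, so by max-flow min-cut it has an integral routing in the contracted graph. Unshrinking $a$, each terminal outside $A$ (one introduces dummy terminals so that a vertex with $H$-degree $d>1$ contributes $d$ separate terminals) acquires a path ending at some specific node of $A$, and $f$ is defined by these endpoints. Then $D'_f$ is routable in $G$ with congestion~$1$ \emph{by construction}, Lemma~\ref{lem:match} applies with $\gamma = 1$, and the rest of your composition goes through verbatim. In the example above, one unit from $v$ is sent to $a_1$ and one to $a_2$ --- precisely the splitting your single-valued $\pi$ forbade.
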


\begin{proof}
  Assume for simplicity that $G$ and $H$ have unit capacities. Let $A
  \subset V$ be a node-cover in $H$. Shrink $A$ to a node $a$ to
  obtain a new supply graph $G'$ and a new demand graph $H'$.  Since
  $A$ is a node-cover all demand edges in $H'$ are incident to
  $a$, and so $H'$ is a star, and $G',H'$ is a single-source instance. For simplicity assume that there are no parallel edges in $H'$;
  if node $u$ has $d > 1$ parallel edges to $a$, then add $d$ dummy
  terminals connected to $u$ with infinite capacity edges in $G'$ and
  replace each $(u,a)$ edge by an edge from a dummy terminal to
  $a$. Let $S \subseteq V\setminus A$ be the set of nodes that have a
  demand edge to $a$ in $H'$. Note that $G'$ satisfies the
  cut-condition for $H'$. 
  Therefore by the maxflow-mincut theorem (or Menger's
  theorem) $H'$ is routable in $G'$ with congestion $1$ and by our
  assumption that the demands are unit valued and capacities are
  integer valued, the flow corresponds to $|S|$ paths, one from each
  node in $S$ to $a$. Now unshrink $a$ to $A$; thus the flow
  corresponds to paths from $S$ to $A$ in $G$. Define a mapping $f: V
  \rightarrow V$ where $f(u) = u$ if $u \in V\setminus S$ (we only care about $u \in A$), and if $u
  \in S$ then $f(u) = v$ if the path from $u$ to $A$ ends in $v \in
  A$. Let $D$ be the demand matrix corresponding to $H$ and $D_f$ be
  demand matrix induced by the mapping $f$. Let $I=(A,F)$ be the
  demand graph induced by $f$. Note then that $D'_f$ corresponds to
  the single-sink flow problem determined by $H'$. Hence by
  Lemma~\ref{lem:match}, $D_f$, and hence $I$, satisfies the cut
  condition in $2G$. We then apply
  Proposition~\ref{prop:compose-routing} to see that if $I$ is
  (integrally) routable in $2G$ with congestion $\alpha$ (which is the
  same as $I$ being routable in $G$ with congestion $2\alpha$), then
  $H$ is (integrally) routable in $G$ with congestion $(1+2\alpha)$
  since $H'$ is integrally routable in $G$ with congestion $1$.
\end{proof}

\subsection{$k$-Outerplanar and $k$-shell Instances}
\label{sec:kouter}

Let $G=(V,E)$ be an embedded planar graph. We define the outer layer
or the $1$-layer of $G$ to be the nodes of $G$ that are on the outer
face of $G$. The $k$-th layer of $G$ is the set of nodes of $V$ that
are on the outer face of $G$ after the nodes in the first $k-1$ layers
have been removed. A {\em $k$-Outerplanar} graph is a planar graph that can
be embedded with at most $k$ layers. We let $V_i$ denote the nodes on
the $i$-th layer.  We are interested in multiflows in planar graphs
when at least one terminal of each demand pair is lies in one of the
outer $k$ layers; we call such instances {\em $k$-shell instances}.  Let
$H=(V,F)$ be a demand graph.

\begin{theorem}[Okamura-Seymour~\cite{OS}]
\label{thm:os} Let $G$ be a planar graph and $H$ be a demand
graph with all terminals on a single face. If $H$ satisfies the
cut-condition, then there is a half-integral routing of $H$ in $G$.
Moreover if $G+H$ is Eulerian, $H$ is integrally routable in $G$.
\end{theorem}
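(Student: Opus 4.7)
The plan is to apply the primal pushing framework that has driven this paper. First, reduce to the Eulerian case by replacing $(G,H)$ with $(2G, 2H)$: this preserves the cut condition, is Eulerian, and an integral routing of $2H$ in $2G$ is exactly a half-integral routing of $H$ in $G$. So assume henceforth that $G+H$ is Eulerian, and proceed by induction on $|E(G)|+|E(H)|$. Using the basic reductions from Section~\ref{sec:basicops} (slack, parallel, $1$-cut), I may assume $G$ is $2$-node-connected. Fix a planar embedding in which all demand endpoints lie on the outer face, bounded by a cycle $C$.

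The inductive step proceeds by selecting a pushable demand. Among all demand edges $uv$ with endpoints on $C$, choose one minimizing the length of an arc $A \subseteq C$ from $u$ to $v$ that contains no other demand endpoint in its interior; such a minimal pair exists by extremality. Let $x$ be the neighbor of $u$ along $A$, so $ux \in E(C)$. Push $uv$ to $x$: the resulting demand $ux$ is parallel to the supply edge $ux$ and is removed by the parallel reduction, while the new demand $xv$ has a strictly shorter empty arc to $v$. The resulting instance is strictly smaller in $|E(G)|+|E(H)|$, Eulerian, and still has all its demand endpoints on the outer face; by induction it is integrally routable, and prepending the supply edge $ux$ to the route of $xv$ yields a route of the original demand $uv$.

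The main obstacle, and the heart of the Okamura-Seymour theorem, is showing that the push preserves the cut condition. By Fact~\ref{fact:pushcond}, a failure is witnessed by a tight cut $\delta(S)$ with $x \in S$ but $u,v \notin S$. Planarity is decisive here: $\delta(S)$ corresponds in the planar dual to a disjoint union of cycles, so $\delta(S) \cap E(C)$ has even cardinality, and since $ux \in E(C)$ is forced into $\delta(S)$, at least one more edge of $\delta(S)$ must lie on $C$. The Eulerian parity combined with the tightness of $\delta(S)$ forces the crossings of $\delta(S)$ with $C$ to be tightly constrained, and the minimality of $A$ rules out additional crossings on the arc $A$ itself. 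I would then split the instance along $\delta(S)$ by shrinking $V \setminus S$ and $S$ in turn, producing two strictly smaller Eulerian instances, each still planar with all demand endpoints on an outer face and each satisfying the cut condition (using that $\delta(S)$ was tight). Induction on each half yields the desired integral routings, which recombine along $\delta(S)$ into a routing of the original instance.
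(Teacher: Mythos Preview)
The paper does not prove Theorem~\ref{thm:os}; it is quoted from \cite{OS} and used as a black box. So there is no paper proof to compare against, and what follows is an assessment of your sketch on its own merits.

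Your outline has the right shape (double to make Eulerian, push along the outer face, induct), but the two central steps both fail as written. First, the extremal demand you ask for need not exist. Take $G$ the $4$-cycle on $1,2,3,4$ with capacity $2$ on every edge, and $H$ consisting of two parallel copies each of the demands $13$ and $24$. This instance is Eulerian, satisfies the cut condition, and is integrally routable, yet for \emph{every} demand edge both arcs of $C$ contain a demand endpoint in their interior; your selection rule is vacuous here. The classical Okamura--Seymour argument does not pick the demand first. It fixes an outer edge $e=vw$, takes a \emph{minimal} tight central set $S$ with $w\in S$, $v\notin S$, and only then selects a demand crossing $\delta(S)$, using the minimality of $S$ (not the bare geometry of $H$) to certify that the push preserves the cut condition.

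Second, your fallback of ``split along $\delta(S)$ and recombine'' does not work without substantial additional argument. In $G/(V\setminus S)$ the crossing demands are routed to $v^{*}$ through the edges of $\delta(S)$, giving one bijection between crossing demands and edges of $\delta(S)$; in $G/S$ you get a second, unrelated bijection. Gluing the two routings along $\delta(S)$ would in general use each edge of $\delta(S)$ twice. The standard proof never splits along the tight cut: it uses the tight cut only to identify the correct demand to push, shows directly that no obstructing tight set can survive (this is where the minimality is spent), and recurses on a single strictly smaller instance.
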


\begin{theorem}[\cite{ChekuriGNRS}]
\label{thm:kop}
If $G$ is a $k$-Outerplanar graph and $H$ satisfies the cut-condition,
then $H$ is fractionally routable in $G$ with congestion $c^k$ for
some universal constant $c$.
\end{theorem}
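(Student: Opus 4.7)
The plan is to prove this by induction on $k$ using a primal peeling argument in the spirit of the rest of the paper, rather than via metric embeddings. At each step I would strip off the outermost layer $V_1$ and reduce to a $(k-1)$-outerplanar instance using the rerouting framework of Section~\ref{sec:rerouting}. For the base case $k=1$, the graph $G$ is outerplanar, so all of $V$ lies on its outer face, and Theorem~\ref{thm:os} (Okamura-Seymour) gives a half-integral routing whenever the cut condition holds. Thus any $c \geq 2$ suffices at the base.

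For the inductive step, write $V' = V_2 \cup \cdots \cup V_k$, and let $G'$ be the induced subgraph on $V'$, which is $(k-1)$-outerplanar. First I would reduce (using $1$-cut and slack reductions from Section~\ref{sec:basicops}) to the case that $G$ is $2$-connected, so that $V_1$ lies on a single face. For each $u \in V_1$ I would choose an anchor $f(u) \in V_2$ respecting the planar embedding, and set $f(u) = u$ for $u \in V'$. With $D$ the demand matrix of $H$, form $D'_f$ and $D_f$ as in Section~\ref{sec:rerouting}. Since $V_1$ sits on one face, the single-source-style demand matrix $D'_f$ can be routed in $O(1) \cdot G$ using short paths along the outer face together with the edges between $V_1$ and $V_2$. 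Then Lemma~\ref{lem:match} certifies that $D_f$ satisfies the cut condition in $O(1) \cdot G$. Since $D_f$ is supported on $V(G')$, induction routes $D_f$ in $c^{k-1} \cdot G'$, and Proposition~\ref{prop:compose-routing} composes the two routings to give a routing of $D$ in $G$ with congestion $O(1) + O(1) \cdot c^{k-1}$, which is at most $c^k$ once $c$ is chosen large enough.

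The main obstacle is the first step inside the induction: routing $D'_f$ with a \emph{universal} constant congestion independent of the instance. The anchors $f(u)$ must be chosen so that the total demand pushed across any outer-face edge is bounded by a constant times its capacity; otherwise the constant in the recursion is not universal and the bound degrades to something weaker than $c^k$. The natural strategy is to traverse the outer face, partition $V_1$ into arcs, greedily assign each arc to an anchor in $V_2$ so that the mass of each arc roughly matches the capacity between that arc and its anchor, and then apply Okamura-Seymour (or a direct Menger-type argument on the outer ring) to certify the routing. The cut condition on the original pair $G,H$ must be invoked carefully to bound the mass any arc of the outer face is required to transport, and the presence of a possible "hole" created when the outer layer borders several faces of $G - V_1$ may require splitting and recombining routings along the boundary between $V_1$ and $V_2$.
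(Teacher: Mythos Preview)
First, a framing point: Theorem~\ref{thm:kop} is not proved in this paper at all. It is quoted from \cite{ChekuriGNRS}, and the paper explicitly remarks that ``the proof of Theorem~\ref{thm:kop} relies on machinery from metric embeddings.'' The original argument embeds $k$-outerplanar metrics into $\ell_1$ with distortion $c^k$ and then invokes the equivalence $\alpha(G)=c_1(G)$. So your proposal is not a variant of the paper's proof; it is an attempt to replace a dual (embedding) proof by a primal peeling proof.

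Your induction has a genuine gap, and it is not the one you flag. Suppose you succeed in routing $D'_f$ with universal constant congestion and hence, via Lemma~\ref{lem:match}, certify the cut condition for $D_f$ in $O(1)\cdot G$. You then write ``Since $D_f$ is supported on $V(G')$, induction routes $D_f$ in $c^{k-1}\cdot G'$.'' This step fails: the inductive hypothesis for the $(k-1)$-outerplanar graph $G'=G[V\setminus V_1]$ requires the cut condition for $D_f$ to hold \emph{in $G'$}, not in $G$. Lemma~\ref{lem:match} only gives you the cut condition in $O(1)\cdot G$, and there is no reason this transfers to $G'$, because optimal routings of $D_f$ may genuinely need the outer layer. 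A tiny example: let $V_1=\{a_1,a_2,a_3,a_4\}$ be a $4$-cycle, $V_2=\{b_1,b_2\}$ with edges $b_1a_1,b_1a_2,b_2a_3,b_2a_4$, all capacities $1$, and a single unit demand $b_1b_2$. The cut condition holds in $G$, the demand is already supported on $V'$, yet $G'$ has no edges at all, so the cut condition in $G'$ is violated and induction cannot be invoked.

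This is precisely why the paper's own inductive argument (Theorem~\ref{thm:kshell2}) does \emph{not} start from the cut condition: it takes a fractional routing as input, peels from the innermost relevant layer, and uses the actual flow to split demands into those that stay inside $G''=G[V\setminus\bigcup_{i<k}V_i]$ (for which the flow itself witnesses the cut condition in $2G''$) and those that exit to the outer layers (which are then re-anchored there). Without a routing in hand, you have no mechanism to decide which demands can be confined to the smaller graph, and your peel-and-recurse scheme stalls. Fixing this would require either producing, at each step, a routing (not just the cut condition) in the smaller instance, or showing how to absorb the capacity of the stripped layer into $G'$ while preserving the cut condition---neither of which is in your outline, and the first is essentially the theorem you are trying to prove.
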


We can strengthen the above theorem to prove the following on $k$-shell
instances.

\begin{theorem}
\label{thm:kshell}
Let $G$ be a planar graph and let $V' = \cup_{i=1}^k V_i$ be the set
of nodes in the outer $k$ layers of a planar embedding of $G$.
Suppose $H=(V,F)$ is a demand graph where for each demand edge at
least one of the end points is in $V'$.  If $G,H$ satisfy the cut
condition, then $H$ can be fractionally routed in $G$ with congestion
$c^k$ for some universal constant $c$.
\end{theorem}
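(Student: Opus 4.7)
The plan is to combine the rerouting corollary (Corollary~\ref{cor:reroute}) with the flow-cut gap theorem for $k$-outerplanar graphs (Theorem~\ref{thm:kop}). First observe that, since every demand edge of $H$ has an endpoint in $V'$, the set $V'$ is a node-cover of $H$. Applying Corollary~\ref{cor:reroute} with $A = V'$, I would obtain a demand graph $I = (V', F)$ such that $2G$ satisfies the cut condition for $I$; the corollary further guarantees that any fractional routing of $I$ in $2G$ with congestion $\alpha$ lifts to a fractional routing of $H$ in $G$ with congestion $1+2\alpha$. It therefore suffices to route $I$ in $G$ with congestion $c^{O(k)}$.

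To route $I$, I would reduce to a $(k+1)$-outerplanar instance by contracting each connected component of $V \setminus V'$ to a single node, producing a planar graph $G^*$. Inherited from the planar embedding of $G$, the layers of $G^*$ are exactly $V_1, V_2, \ldots, V_k$ together with a final innermost layer consisting of the contracted nodes, so $G^*$ is $(k+1)$-outerplanar. Because $I$ has no endpoint in $V \setminus V'$, any cut $S^*$ of $G^*$ corresponds to a cut $S$ of $G$ with $S \cap V' = S^* \cap V'$ whose $I$-demand and $G$-capacity agree with the $G^*$ quantities, so the cut condition for $I$ in $2G$ directly yields the cut condition for $I$ in $2G^*$. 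Applying Theorem~\ref{thm:kop} then fractionally routes $I$ in $2G^*$ with congestion $c^{k+1}$.

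It remains to translate a fractional routing of $I$ in $G^*$ back into one in $G$. A flow path in $G^*$ that passes through a contracted node $r$ must be realized in $G$ by a subpath through the corresponding connected component of $G[V \setminus V']$, entering and leaving at the component's boundary (nodes adjacent to $V'$). Reading off the $G^*$ flow at the interface yields a multicommodity routing problem on the planar subgraph $G[V \setminus V']$ whose terminals all lie on $V_{k+1}$, the outer face of $G[V \setminus V']$; by the Okamura-Seymour theorem (Theorem~\ref{thm:os}) this inner problem is fractionally routable with congestion $O(1)$ provided its induced demand satisfies the cut condition in $G[V \setminus V']$. Composing the outer routing (in $G^*$) with the inner routing (via OS) then gives the desired $c^{O(k)}$-congestion routing of $I$ in $G$, and a final invocation of Corollary~\ref{cor:reroute} produces the same asymptotic bound for $H$. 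I expect the main obstacle to be verifying the cut condition for the induced inner demand in $G[V \setminus V']$ up to a constant factor; this calls for a careful argument combining the cut condition for $I$ in $2G$ on cuts that mix $V'$-nodes with inner nodes with the per-edge congestion bound on the boundary edges guaranteed by Theorem~\ref{thm:kop}.
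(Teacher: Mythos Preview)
The paper does not actually prove this theorem here; it explicitly defers the proof to a separate manuscript and states that it relies on metric-embedding machinery, in particular the vertex-sparsifier results of \cite{EnglertGKRTT10} and the peeling lemma of \cite{LeeS09}. Your approach is therefore not the paper's, and more importantly, the step you flag as the ``main obstacle'' is a genuine gap that cannot be closed along the lines you suggest.

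The problem is the uncontracting step. Once you contract a component $C$ of $G[V\setminus V']$ to a single vertex $r$ and apply Theorem~\ref{thm:kop}, you obtain \emph{some} low-congestion routing in $G^*$, but you have no control over how that routing uses $r$. The induced inner demand on $C$ can exceed the internal capacity of $C$ by an arbitrary factor, regardless of the cut condition for $I$ in $2G$. Concretely, take $k=1$, let $V_1$ be a $4m$-cycle with edge capacities $m$, and let $C=\{w_1,w_2\}$ with a single unit-capacity edge, where $w_1$ is joined to $v_1,\dots,v_m$ and $w_2$ to $v_{2m+1},\dots,v_{3m}$ (unit capacities). For $I$ consisting of unit demands $v_i v_{2m+i}$, the cut condition holds in $G$ (every separating cut has at least two cycle edges, hence capacity $\ge 2m$). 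In $G^*$ the routing that sends every demand through $r$ has congestion~$1$, yet it induces demand $m$ across the single edge $w_1w_2$ of $C$. Neither of your proposed tools helps: the per-edge congestion bound on boundary edges only gives $D_C(S_0)\le 2c^{k+1}\cdot|\{\text{boundary edges out of }S_0\}|$, which is unrelated to $|\delta_C(S_0)|$; and since $I$ has no endpoints in $C$, the cut condition for $I$ in $2G$ applied to any set containing $S_0\subseteq C$ gives information only about the $V'$-part of the cut, not about $|\delta_C(S_0)|$. The embedding/vertex-sparsifier route the paper invokes is precisely what is needed to replace each component $C$ by a gadget that approximately preserves its internal cut structure before applying the $k$-outerplanar bound; your contraction throws that structure away.
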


The proof of Theorem~\ref{thm:kop} relies on machinery from metric
embeddings. Our proof of Theorem~\ref{thm:kshell} also relies on
metric embeddings, and in particular uses recent results \cite{LeeS09}
and \cite{EnglertGKRTT10}. Since these techniques are some what
orthogonal to the primal methods that we use in this paper, we
describe a proof of Theorem~\ref{thm:kshell} in a separate manuscript
\cite{ChekuriSW-unpub}.  Below we relate the integer and fractional
flow-cut gaps for $k$-shell instances.

\begin{theorem}
  \label{thm:kshell2}
  Let $G$ be a planar graph and let $V' = \cup_{i=1}^k V_i$ be the set
  of nodes in the outer $k$ layers of a planar embedding of $G$.  Suppose
  $H=(V,F)$ is a demand graph where for each demand edge at least one
  of the end points is in $V'$. If $H$ is fractionally routable in
  $G$, then it can be integrally routed in $G$ with congestion $6^k$.
\end{theorem}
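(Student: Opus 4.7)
The plan is to proceed by induction on $k$, using Corollary~\ref{cor:reroute} to reduce the $k$-shell hypothesis to an instance handled by the Okamura--Seymour theorem (Theorem~\ref{thm:os}) at the base. For the base case $k=1$, every demand of $H$ has an endpoint on $V_1$, so $V_1$ is a node cover of $H$. I would apply Corollary~\ref{cor:reroute} with $A = V_1$ to obtain a demand graph $I$ on $V_1$ satisfying the cut condition in $2G$, with the guarantee that any integer routing of $I$ in $2G$ with congestion $\alpha$ extends to an integer routing of $H$ in $G$ with congestion $1+2\alpha$. Since every demand of $I$ then has both endpoints on a single face of $2G$, Theorem~\ref{thm:os} gives a half-integral routing of $I$ in $2G$, equivalently an integer routing of $I$ in $2G$ with congestion $2$, and so $H$ is integer routable in $G$ with congestion $5 \le 6 = 6^1$.

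For the inductive step, suppose the statement holds for $(k-1)$-shell instances. The plan is to reduce $H$ to a $(k-1)$-shell instance to which the inductive hypothesis applies with congestion $6^{k-1}$ in $2G$, so that Corollary~\ref{cor:reroute} then yields $1 + 2 \cdot 6^{k-1} + (\text{small overhead}) \le 6^k$ for $H$ in $G$. The reduction proceeds in two stages. First, apply Corollary~\ref{cor:reroute} with $A_1 = V'$ (a node cover of $H$ since $H$ is $k$-shell) to obtain an instance $I$ on $V'$ satisfying the cut condition in $2G$. Second, invoke Proposition~\ref{prop:compose-routing} with a mapping $g: V' \to V_1 \cup \cdots \cup V_{k-1}$ that fixes the outer $k-1$ layers and sends each $v \in V_k$ to an adjacent node in $V_{k-1}$ (such a neighbour exists because $V_{k-1}$ separates $V_k$ from the outer face in any planar layer embedding). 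The induced demand matrix $D_g$ then lives on $V_1 \cup \cdots \cup V_{k-1}$ and is a $(k-1)$-shell instance, while the movement demands $D'_g$ can be routed by short layer-crossing paths.

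The hard part will be to (i) bound the congestion of $D'_g$ by a small constant independent of $k$, since many demands at distinct nodes of $V_k$ may share the same edges incident to $V_{k-1}$, and (ii) certify that $D_g$ is fractionally routable in the appropriate scaled graph so that the inductive hypothesis applies. For (i), a careful choice of $g$ that balances demands across distinct $V_{k-1}$-neighbours (when several are available) should keep the congestion bounded by a small constant. For (ii), the cut condition for $D_g$ follows from Lemma~\ref{lem:match}, while fractional routability can be transported through the two rerouting mappings starting from the fractional routing of $H$ assumed by hypothesis. Once these two points are in place, Proposition~\ref{prop:compose-routing} composes the integer routings of $D_g$ and $D'_g$ into an integer routing of $I$ in $2G$, and Corollary~\ref{cor:reroute} delivers the target congestion $6^k$ for $H$ in $G$.
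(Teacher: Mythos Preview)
Your base case is correct and is exactly Claim~\ref{claim:os} in the paper.

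The inductive step, however, has a genuine gap at precisely the point you flag as ``hard part (i)''. The mapping $g$ you propose need not exist: a node $v\in V_k$ is on the outer face of $G[V\setminus(V_1\cup\cdots\cup V_{k-1})]$, but this does not force $v$ to have any neighbour in $V_{k-1}$ (think of an inner triangle attached to the outer layers by a single edge at one of its vertices; the other two vertices have no $V_{k-1}$-neighbour). More seriously, even when such neighbours exist, the congestion of $D'_g$ is not controlled. The total demand in $I$ incident to $v\in V_k$ can be as large as $2|\delta_G(v)|$ (that is all the cut condition in $2G$ gives you), while the capacity of edges from $v$ into $V_{k-1}$ may be an arbitrarily small fraction of $|\delta_G(v)|$. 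No ``balancing'' over the available $V_{k-1}$-neighbours can fix this, because there may be only one such edge (or none). So the step ``$D'_g$ can be routed by short layer-crossing paths with constant congestion'' fails in general, and with it the whole induction.

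The paper's induction avoids this by never trying to push $V_k$-terminals outward along fixed adjacencies. Instead it looks at the inner graph $G''=G[V\setminus(V_1\cup\cdots\cup V_{k-1})]$, where $V_k$ is the outer face, and uses the \emph{given fractional routing} to split the $V_k$-demands into two parts. Demands whose flow stays mostly inside $G''$ are handled there by Claim~\ref{claim:os} (Okamura--Seymour) with congestion $10$. Demands whose flow mostly leaves $G''$ are rerouted to the outer $k-1$ layers along the (doubled) fractional flow paths themselves, which automatically gives integral paths with congestion $2$; this defines the new $(k-1)$-shell instance to which induction applies. The key idea you are missing is that the fractional routing, not graph adjacency, dictates where and how to move the $V_k$-terminals, and this is what keeps all the constants bounded.
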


Combining the above two theorems, we have:
\begin{corollary}
  If $G,H$ is a $k$-shell instance that satisfies the cut-condition,
  then $H$ is {\em integrally} routable in $G$ with congestion
  $c^k$ for some universal constant $c$.
\end{corollary}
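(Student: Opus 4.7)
The plan is to proceed by induction on $k$, using Corollary~\ref{cor:reroute} to reduce a $k$-shell instance to a $(k-1)$-shell instance on the doubled graph, with the Okamura-Seymour theorem (Theorem~\ref{thm:os}) serving as the base case.

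For $k=1$, every demand edge has an endpoint on the outer face $V_1$, so $V_1$ is a node-cover of $H$. Applying Corollary~\ref{cor:reroute} with $A=V_1$ produces a demand graph $I=(V_1,F)$ all of whose terminals lie on a single face of the planar graph $2G$ and for which $2G$ satisfies the cut condition. Theorem~\ref{thm:os} gives a half-integral routing of $I$ in $2G$, equivalently an integral routing in $4G$ (that is, congestion $\alpha=2$ in $2G$); Corollary~\ref{cor:reroute} then integrally routes $H$ in $G$ with congestion $1+2\alpha=5\le 6=6^1$. For the inductive step $k\ge 2$, I would choose a node-cover $A$ of $H$ contained in $V_1\cup\cdots\cup V_{k-1}$, so that Corollary~\ref{cor:reroute} produces a demand graph $I$ on $A$ which is a $(k-1)$-shell instance in the planar graph $2G$ satisfying the cut condition there. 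The inductive hypothesis then gives an integral routing of $I$ in $2G$ with congestion $6^{k-1}$, and Corollary~\ref{cor:reroute} yields an integral routing of $H$ in $G$ with congestion $1+2\cdot 6^{k-1}\le 6^k$, closing the induction.

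The hard part is the choice of node-cover $A$. Demand edges both of whose endpoints lie in $V_k$ (and, similarly, demands with one endpoint in $V_k$ and one in $V\setminus V'$) prevent any set $A\subseteq V_1\cup\cdots\cup V_{k-1}$ from being a node-cover of $H$. These ``deep'' demands must be preprocessed by pushing one endpoint of each from $V_k$ onto $V_{k-1}$, which I would do by working inside the subgraph $G'=G-(V_1\cup\cdots\cup V_{k-1})$, in which $V_k$ lies on a single face: Theorem~\ref{thm:os} applied inside $G'$, together with a further invocation of Corollary~\ref{cor:reroute} to certify the required cut condition from the given fractional routing of $H$ in $G$, accomplishes the push at the cost of a constant factor in the capacities of $G'$. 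The delicate part is the bookkeeping ensuring that this preprocessing, the single application of Corollary~\ref{cor:reroute}, and the inductive step together incur a per-layer multiplicative blow-up of at most $6$, so that the congestion stays bounded by $6^k$.
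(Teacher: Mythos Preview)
There is a genuine gap: your argument repeatedly invokes ``the given fractional routing of $H$ in $G$'', but the Corollary only assumes the cut condition, not fractional routability. These are not the same thing for $k$-shell instances --- indeed, passing from the cut condition to a fractional routing with congestion $c^k$ is exactly the content of Theorem~\ref{thm:kshell}, which relies on metric-embedding machinery and is proved elsewhere. The paper's proof of the Corollary is simply to compose Theorem~\ref{thm:kshell} (cut condition $\Rightarrow$ fractional routing with congestion $c_1^k$) with Theorem~\ref{thm:kshell2} (fractional routing $\Rightarrow$ integral routing with additional congestion $6^k$), yielding congestion $(6c_1)^k$. Your claimed bound of $6^k$, which matches Theorem~\ref{thm:kshell2} rather than the Corollary, confirms that you are implicitly assuming fractional routability.

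This gap bites precisely at your preprocessing step. To push the ``deep'' demands (those with both endpoints in $V_k\cup(V\setminus V')$) so that $V_1\cup\cdots\cup V_{k-1}$ becomes a node-cover, you must either (i) route them inside $G'=G-(V_1\cup\cdots\cup V_{k-1})$, which requires the cut condition in $G'$, or (ii) route their $V_k$-endpoints out to $V_{k-1}$, which requires a single-sink cut condition. Neither is implied by the cut condition for $H$ in $G$: a deep demand with both endpoints inside a set $S\subseteq V(G')$ contributes to the demand you need to evacuate from $S$ but contributes nothing to $|\delta_H(S)|$. The paper's proof of Theorem~\ref{thm:kshell2} resolves exactly this by using the fractional flow to split $F_k$ into $F_k^a$ (at least half the flow traverses the outer $k-1$ layers) and $F_k^b$ (at least half stays in $G''$); the flow itself certifies routability of $F_k^b$ in $2G''$ and provides the paths that push $F_k^a$ outward. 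Without that fractional flow --- which only Theorem~\ref{thm:kshell} supplies --- there is no mechanism to make your preprocessing go through. (As a minor additional point, you propose pushing to $V_{k-1}$ by working inside $G'$, but $V_{k-1}\not\subseteq V(G')$, so the target of the push is not even present in that subgraph.)
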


We need the following claim as a base case to prove Theorem~\ref{thm:kshell2}.
\begin{claim}
  \label{claim:os}
  Let $G$ be a planar graph and $H=(V,F)$ be a demand graph such that
  for each demand edge at least one end point is on the outer face
  $V_1$. If $G,H$ satisfy the cut-condition, then there is
  an integral routing of $H$ in $G$ with congestion $5$.
\end{claim}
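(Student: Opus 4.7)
The plan is to combine the rerouting corollary (Corollary~\ref{cor:reroute}) with the Okamura-Seymour theorem (Theorem~\ref{thm:os}) in a direct way. First I would observe that since every demand edge of $H$ has at least one endpoint in $V_1$, the outer face $V_1$ is a node-cover of $H$. This is precisely the hypothesis needed to invoke Corollary~\ref{cor:reroute} with $A := V_1$, producing a new demand graph $I=(V_1,F')$, supported entirely on the outer face, such that $2G$ satisfies the cut condition for $I$. Moreover, the corollary reduces the original routing task to the following one: if $I$ can be integrally routed in $2G$ with some congestion $\alpha$, then $H$ admits an integral routing in $G$ with congestion $1 + 2\alpha$.

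Next I would route $I$ using Okamura-Seymour. Since every endpoint of a demand edge in $I$ lies on a single face of $2G$ (the outer face, which is inherited from the planar embedding of $G$), Theorem~\ref{thm:os} applies to the instance $(2G, I)$ and yields a half-integral routing of $I$ in $2G$. Scaling flow values by $2$ turns this into an integral routing of $I$ in $4G$, i.e., an integral routing of $I$ in $2G$ with congestion $\alpha = 2$. Plugging $\alpha = 2$ into Corollary~\ref{cor:reroute} gives an integral routing of $H$ in $G$ with congestion $1 + 2\cdot 2 = 5$, which is exactly the claim.

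There is essentially no serious obstacle: the entire content of the proof is recognizing that the hypothesis "at least one endpoint on the outer face" is precisely a node-cover condition on $V_1$, so that the rerouting machinery immediately lands the subproblem in the classical Okamura-Seymour setting. The constant $5$ is tight for this approach: it arises as $1 + 2 \cdot 2$, where the initial $+1$ is the cost of the single-sink Menger routing hidden inside Corollary~\ref{cor:reroute}, the factor $2$ multiplying $\alpha$ comes from working in $2G$ rather than $G$ in the rerouting step, and the final factor $2$ in $\alpha$ itself reflects the unavoidable half-integrality gap in Okamura-Seymour (absent an Eulerian hypothesis on $2G+I$, which we have no reason to impose here).
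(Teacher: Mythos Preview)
Your proof is correct and is essentially identical to the paper's own argument: observe $V_1$ is a node-cover, apply Corollary~\ref{cor:reroute} to reduce to an Okamura--Seymour instance $I$ on $2G$, route $I$ half-integrally (hence integrally with congestion $2$), and conclude congestion $1+2\cdot 2=5$ for $H$. The paper's proof is terser but follows exactly the same steps.
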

\begin{proof}
  We observe that $V_1$ is a node cover in $H=(V,F)$. Therefore, by
  Corollary~\ref{cor:reroute} there is a demand graph $I=(V_1,F')$
  such that $2G,I$ satisfy the cut condition. By the Okamura-Seymour
  theorem (Theorem~\ref{thm:os}), $I$ is integrally routable in $2G$
  with congestion $2$. Therefore, by Corollary~\ref{cor:reroute}, $H$
  is integrally routable in $G$ with congestion $5$.
\end{proof}

\begin{proof}[of Theorem~\ref{thm:kshell2}]
  We prove the theorem by induction on $k$.
  The base case of $k=1$ follows from Claim~\ref{claim:os}.

  Assuming the hypothesis for $j<k$, we prove it for $j=k$. Let
  $H=(V,F)$ be a demand graph that is fractionally routable in $G$ and
  such that each demand edge is incident to a node in the outer $k$
  layers. Let $H_k=(V,F_k)$ be the subgraph of $H$ induced by the
  demand edges $F_k \subseteq F$ that are incident to at least one
  node in $V_k$ and moreover the other end point is not in $V_1 \cup
  \ldots \cup V_{k-1}$.  We obtain a new supply graph $G'$ by
  shrinking the nodes in $\cup_{i=1}^{k-1} V_i$ to a single node
  $v$. Note that $H_k$ is fractionally routable in $G'$ as well. Fix
  some arbitrary routing of $H_k$ in $G'$. Partition $F_k$ into
  $F_k^a$ and $F_k^b$ as follows. $F_k^a$ is the set of all demands
  that route at least half their flow through $v$ in $G'$. $F_k^b =
  F_k \setminus F_k^a$.  Thus a demand in $F_k^b$ routes at least half
  its flow in the graph $G'' = G[V \setminus \cup_{i=1}^{k-1}V_i]$. We
  claim that the demand graph $H_k^b=(V \setminus \cup_{i=1}^{k-1}V_i,
  F_k^b)$ is integrally routable in $G''$ with congestion $10$. For
  this we note that $H_k^b$ is fractionally routable in $2G''$ which
  implies that $(2G'',H_k^b)$ satisfies the cut-condition. Moreover
  $V_k$ is the outer-face of $G''$ and each demand in $F_k^b$ has at
  least one end point in $V_k$ and hence we can apply
  Claim~\ref{claim:os}.

  Now consider demands in $F_k^a$ and their flow in $G'$. For
  simplicity assume that the end points of $F_k^a$ are disjoint and
  let $T$ be the set of end points. By doubling the fractional flow in
  $G'$ of each demand $f \in F_k^a$ we get a feasible routing for
  sending one unit of flow from each $t \in T$ to $v$. Thus in $G'$
  there are paths $P_t, t \in T$ where $P_t$ is a path from $t$ to $v$
  and no edge has more than two paths using it. These paths imply that
  the terminals in $T$ can be integrally routed to nodes in
  $\cup_{i=1}^{k-1}V_i$ with congestion $2$ in $G$ (simply unshrink
  $v$). For each demand $f=(u,v) \in F_k^a$ let $f'=(u',v')$ be a
  new demand where $u'$ and $v'$ are the nodes in
  $\cup_{i=1}^{k-1}V_i$ that $u$ and $v$ are routed to. Let these
  demands be $F^c_k$. We claim that $F^c_k$ is fractionally routable in
  $G$ with congestion $3$ - simply concatenate the routing of $F_k^a$
  with the paths that generated $F^c_k$ from $F^a_k$. Now consider
  the demand graph $H'=(V,(F\setminus F_k) \cup F_k^c)$. Since $H=(V,F)$ is
  fractionally routable in $G$ and $F_k^c$ is fractionally routable in $3G$,
  we have that $H'$ is fractionally routable in $4G$. Also, each demand
  in $H'$ has an end point in $\cup_{i=1}^{k-1}V_i$. Therefore, by
  the induction hypothesis, $H'$ is integrally routable in $G$ with
  congestion $4\cdot 6^{k-1}$.

  Routing $H$ as above consists of routing $H'$, the routing of $F_k^a$, and the
  routing of the demands in $F_k^b$ to the outer $k-1$ layers; adding up
  the congestion for each of these routings as shown above, we see
  that $H$ is routable in $G$ with congestion $4\cdot 6^{k-1} + 10 + 2
  \le 6^k$ for $k \ge 2$.  This proves the hypothesis for $k$.
\end{proof}

\subsection{Flow-Cut Gap and Node Cover size of Demand Graph}
\label{sec:gunluk}

Linial, London and Rabinovich \cite{LLR} and Aumann and Rabani
\cite{AR} showed that if the supply graph $G=(V,E)$ satisfies the
cut condition for a demand graph $H=(V,E_H)$, then $H$ is routable
in $G$ with congestion $O(\log k)$ where $k = |E_H|$; to obtain this
refined result (instead of an $O(\log n)$ bound), \cite{LLR,AR} rely
on Bourgain's proof of the distortion required to embed a finite
metric into $\ell_1$. G\"{u}nl\"{u}k \cite{Gunluk07} further refined the
bound and showed that the flow-cut gap is $O(\log k^*)$ where $k^*$
is the size of the smallest node cover in $H$; recall a {\em node
cover} is a subset $S$ of nodes for which every edge of $H$ has at
least one endpoint in $S$. For example if $k^* = 1$, then $H$
induces a single-source problem for which the flow-cut gap is $1$.
G\"{u}nl\"{u}k's argument requires a fair amount of technical reworking of
Bourgain's proof. Here we give a simple and insightful proof via
Lemma~\ref{lem:match}, in particular Corollary~\ref{cor:reroute}.

\begin{theorem}
  Let $G=(V,E)$ satisfy the cut-condition for $H=(V,E_H)$ such that
  $H$ has a node-cover of size $k^*$. Then $H$ is routable in $G$
  with congestion $O(\log k^*)$.
\end{theorem}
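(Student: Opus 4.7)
The plan is to apply Corollary~\ref{cor:reroute} with the node-cover $A$ of size $k^*$ and then invoke the Linial--London--Rabinovich / Aumann--Rabani flow-cut gap bound on the much smaller, derived instance. The key observation is that the induced demand graph $I=(A,F)$ produced by the corollary lives on only $k^*$ nodes, so $|F| \leq \binom{k^*}{2}$, and thus applying the $O(\log|E_H|)$ flow-cut gap bound to $I$ yields $O(\log k^*)$ directly, without needing to redo Bourgain-style embeddings on $H$.

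Concretely, here are the steps I would carry out. First, fix a minimum node-cover $A \subseteq V$ of $H$, so $|A|=k^*$. Apply Corollary~\ref{cor:reroute} to obtain a demand graph $I=(A,F)$ with $2G,I$ satisfying the cut-condition, and with the guarantee that any fractional routing of $I$ in $2G$ with congestion $\alpha$ lifts to a fractional routing of $H$ in $G$ with congestion $1+2\alpha$. Next, observe that $I$ is a demand graph on $k^*$ nodes, so $|F| \leq \binom{k^*}{2}$. By \cite{LLR,AR}, the cut condition on $(2G,I)$ implies that $I$ is fractionally routable in $2G$ with congestion $O(\log |F|) = O(\log k^*)$. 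Plugging $\alpha = O(\log k^*)$ back into Corollary~\ref{cor:reroute} gives a routing of $H$ in $G$ with congestion $1+2\alpha = O(\log k^*)$, as claimed.

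There is no real obstacle here beyond verifying that the corollary is being applied in the correct direction and that the $O(\log |E_H|)$ bound of \cite{LLR,AR} applies to $I$ considered as a demand graph on the supply graph $2G$ (which it does, since $2G$ retains the structure of $G$ and the cut-condition is explicitly guaranteed). The elegance of this argument, relative to G\"unl\"uk's direct reworking of Bourgain, is precisely that the rerouting/Corollary~\ref{cor:reroute} machinery reduces the problem to a classical flow-cut gap statement on an instance whose demand-graph size is already controlled by $k^*$, so the logarithmic factor we need comes out of a black-box appeal to the known gap bound rather than from a careful metric embedding argument.
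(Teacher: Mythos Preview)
Your proposal is correct and is essentially identical to the paper's own proof: both apply Corollary~\ref{cor:reroute} with a minimum node-cover $A$ of size $k^*$, observe that the resulting demand graph $I=(A,F)$ has at most $O((k^*)^2)$ edges, invoke the $O(\log |F|)=O(\log k^*)$ flow-cut gap of \cite{LLR,AR} on $(2G,I)$, and then lift back via the $1+2\alpha$ guarantee.
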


\begin{proof}
  Let $A \subset V$ be a node-cover in $H$ such that $|A| = k^*$.
  We now apply Corollary~\ref{cor:reroute} which implies that
  there is a demand graph $I=(A,F)$ such that $2G$ satisfies the
  cut-condition for $I$. Moreover if $I$ is routable in $2G$ with
  congestion $\alpha$ then $H$ is routable in $G$ with congestion $(1+2\alpha)$.
  Note that $I$ is a demand graph with at most $(k^*)^2$ edges, therefore, it
  is routable in $2G$ with congestion $O(\log k^*)$ \cite{LLR,AR}. Hence $H$
  is routable in $G$ with congestion $O(\log k^*)$.
\end{proof}

\subsection{Multiflows with terminals on $k$ faces of a planar graph}
\label{sec:outerface} Lee and Sidiropoulos \cite{LeeS09} recently
gave a powerful methodology via their {\em peeling} lemma to reduce
the flow-cut gap question for a class of instances to other
potentially simpler class of instances. Using this they reduced the
flow-cut gap question for minor-free graphs to planar graphs and
graphs closed under bounded clique sums. One of the applications of
their peeling lemma is the following result.  Let $G$ be an embedded
planar graph and $H$ be a demand graph such that the endpoints of
edges in $H$ lie on at most $k$ faces of $G$. If $G$ satisfies the
cut-condition, then $H$ is (fractionally) routable in $G$ with
congestion $e^{O(k)}$. Their proof extends to graphs of bounded
genus and relies on the non-trivial peeling lemma. Here we give a
simple proof with a stronger guarantee for the planar case, again
using Lemma~\ref{lem:match}.

\begin{theorem}
  Let $G=(V,E)$ be an embedded planar graph and $H=(V,F)$ be a demand
  graph such that the endpoints of edges in $H$ lie on at most $k$
  faces of $G$. If $G$ satisfies the cut-condition, then $H$ is
  routable in $G$ with congestion $3k$. Moreover if $G$ and $H$ have
  integer capacities and demands respectively, then there is an
  integral flow with congestion $5k$.
\end{theorem}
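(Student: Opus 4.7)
The plan is to apply the rerouting corollary (Corollary~\ref{cor:reroute}) face by face, using the Okamura-Seymour theorem as the ``base'' routing on each individual face, and then sum the congestions.

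First, I would partition $F$ into $F_1,\ldots,F_k$ as follows. Let $V_i$ denote the set of nodes on the $i$-th face. For each demand edge $uv\in F$, both endpoints lie in $V_1\cup\cdots\cup V_k$ by hypothesis; assign $uv$ to the smallest index $i$ with $\{u,v\}\cap V_i\neq\emptyset$. This produces demand graphs $H_i=(V,F_i)$ with the property that every edge of $H_i$ has at least one endpoint on the $i$-th face, i.e.\ $V_i$ is a node-cover of $H_i$. Since $G,H$ satisfies the cut-condition and $H_i$ is a sub-demand graph of $H$, the pair $G,H_i$ also satisfies the cut-condition.

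Next, for each fixed $i$ I would invoke Corollary~\ref{cor:reroute} with node-cover $A=V_i$. This produces an auxiliary demand graph $I_i=(V_i,F_i')$ such that $2G$ satisfies the cut-condition for $I_i$, and moreover if $I_i$ is (integrally) routable in $2G$ with congestion $\alpha$, then $H_i$ is (integrally) routable in $G$ with congestion $1+2\alpha$. The crucial point is that every terminal of $I_i$ lies on a \emph{single} face of $G$, so the Okamura-Seymour theorem (Theorem~\ref{thm:os}) applies directly to the instance $(2G,I_i)$: fractionally, $I_i$ routes in $2G$ with congestion $1$ (Okamura-Seymour gives a half-integral routing when the cut-condition holds), and integrally, $I_i$ routes in $2G$ with congestion $2$ (by doubling the half-integral routing, i.e.\ applying OS to the Eulerian instance obtained after duplicating flow paths). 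Plugging these into Corollary~\ref{cor:reroute} yields that $H_i$ is fractionally routable in $G$ with congestion $3$, and integrally routable in $G$ with congestion $5$.

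Finally, I would superpose the routings across all $k$ faces: summing the congestions of the $k$ separate routings in $G$ gives a fractional routing of $H=\bigcup_i H_i$ in $G$ with total congestion at most $3k$, and an integral routing with total congestion at most $5k$. This is exactly the two bounds claimed.

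I do not expect a serious obstacle here; the whole argument is a clean application of the rerouting corollary combined with Okamura-Seymour, and its simplicity is the main point of the section. The only mild subtlety is making sure that in the integer case the $1$ in the ``$1+2\alpha$'' of Corollary~\ref{cor:reroute} really comes as an \emph{integer} routing of the induced single-source instance in $G$, which is guaranteed by Menger's theorem once we assume integer capacities and unit demands (with standard splitting of parallel demands as in the proof of that corollary).
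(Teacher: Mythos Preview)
Your proposal is correct and follows essentially the same approach as the paper: decompose the demand edges according to the $k$ faces so that each $H_i$ has $V_i$ as a node-cover, apply Corollary~\ref{cor:reroute} to each $H_i$, invoke Okamura--Seymour on the resulting single-face instance $I_i$ (congestion $1$ fractionally, $2$ integrally in $2G$), and sum over $i$. The only cosmetic difference is that you partition $F$ into disjoint $F_i$'s whereas the paper lets $F_i$ be all edges with an endpoint in $V_i$ (allowing overlaps); either works for the bound.
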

\begin{proof}
  Let $V_1, \ldots, V_k$ be the node sets of the $k$ faces on which
  the demand edges are incident to. Let $F_i \subseteq F$ be the edges
  in $H$ that have at least one end point incident to a node in $V_i$
  and let $H_i=(V,F_i)$ be the demand graph induced by $F_i$.  Note
  that $G$ satisfies the cut-condition for $H_i$. Clearly $V_i$ is a
  node-cover for $H_i=(V,F_i)$ and hence by
  Corollary~\ref{cor:reroute}, there is a demand graph
  $I_i=(V_i,F'_i)$ such that $2G$ satisfies the cut-condition for
  $I_i$. Note, however, that $I_i$ is an Okamura-Seymour instance in
  that all terminals lie on a single face. Hence $I_i$ is routable in
  $2G$ with congestion $1$ and is integrally routable in $2G$ with
  congestion $2$. Hence, by Corollary~\ref{cor:reroute}, $H_i$ is
  routable in $G$ with congestion $3$ and integrally routable in $G$
  with congestion $5$. By considering $H_1, \ldots, H_k$ separately,
  $H$ is routable in $G$ with congestion $3k$ and integrally with
  congestion $5k$.
\end{proof}

\noindent {\bf Remark:} We observe that a bound of $k$ is easy via
the Okamura-Seymour theorem if for each demand edge the two end
points are incident to the same face. The rerouting lemma allows us
to easily handle the case when the end points may be on different
faces; in fact the proof extends to the case when nodes on $k$ faces
in $G$ form a node cover for the demand graph $H$. The above proof
can be easily extended to graphs embedded on a surface of genus $g$
to show a flow-cut gap of $3\alpha_g k$ where $\alpha_g$ is the gap
for instances in which all terminals are on a single face.

\section{Lower-bound on flow-cut gap in series-parallel graphs}

\label{app:lower}
Lee and Raghavendra have shown in~\cite{LeeR07} that the flow-cut gap
in series-parallel graph can be arbitrarily close to $2$; this lower
bound matches an upper bound found previously
\cite{ChakrabartiJLV08}. They introduce a family of supply graphs and
prove the required congestion using the theory of metric embeddings.
In particular, their lower bound is shown by a construction of a
series parallel graph and a lower bound on the distortion required to
embed the shortest path metric on the nodes of the graph into
$\ell_1$.

In this section we give a different proof of the lower bound. We use
the same class of graphs and the recursive construction of Lee and
Raghavendra \cite{LeeR07}. However, we use a primal and direct
approach to proving the lower bound by constructing demand graphs that
satisfy the cut-condition but cannot be routed. We lower bound the
congestion required for the instances by exhibiting a feasible
solution to the dual of the linear program for the maximum concurrent
multicommodity flow in a given instance. This is captured by the
standard lemma below which follows easily from LP duality.

\begin{lemma}
  \label{lem:dual-lb}
  Let $G=(V,E)$ and $H=(V,F)$ be undirected supply and demand graphs
  respectively with $c_e$ denoting the capacity of $e \in E$ and
  $d_f$ denoting the demand of $f \in F$. Then the minimum congestion
  $r$ required to route $H$ in $G$ is given by
  \[
  \min_{\ell:E\rightarrow \cR^+} \frac{\sum_{f \in F} d_f
    \ell_f}{\sum_{e \in E} c_e \ell_e}
    \]
    where $\ell_f$ is the shortest path distance between the end
    points of $f$ in $G$ according to the non-negative length function
    $\ell:E \rightarrow \cR^+$.
\end{lemma}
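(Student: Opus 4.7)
The approach is a direct application of linear programming duality, and there is no real obstacle once the LP is set up correctly; the only subtle point is the reduction from the dual's path-length inequalities to shortest-path distances.

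First I would formulate the minimum congestion problem as an explicit linear program with path-indexed variables. For each demand $f \in F$ with endpoints $s_f, t_f$ and each simple $s_f$-$t_f$ path $P$ in $G$, let $x_{P,f} \geq 0$ be the flow on $P$ for demand $f$, and let $r \geq 0$ be the congestion. The primal LP is
\[
\min r \quad \text{s.t.} \quad \sum_{P} x_{P,f} \geq d_f \ \forall f \in F, \qquad \sum_{f \in F}\sum_{P \ni e} x_{P,f} \leq r\,c_e \ \forall e \in E, \qquad x \geq 0.
\]
Restricting to simple paths is without loss of generality, since flow can always be decomposed into simple path flows while only (weakly) decreasing edge usage.

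Second, I would take the LP dual. Introducing multipliers $y_f \geq 0$ for the demand constraints and $\ell_e \geq 0$ for the capacity constraints, the dual reads
\[
\max \sum_{f \in F} d_f\, y_f \quad \text{s.t.} \quad y_f \leq \sum_{e \in P} \ell_e \ \forall f, \forall\ s_f\text{-}t_f\text{ path } P, \qquad \sum_{e \in E} c_e\, \ell_e \leq 1, \qquad y, \ell \geq 0.
\]
The dual of $r$ supplies the single normalization $\sum_e c_e \ell_e \leq 1$, while the dual of each $x_{P,f}$ supplies the path-length inequalities; the dual of each demand constraint turns $y$ into a nonnegative coefficient in the objective.

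Third, I would simplify the dual. Since $y_f$ appears only as an upper bound over all $s_f$-$t_f$ paths, at optimality $y_f$ equals the tightest such bound, namely the shortest-path distance $\ell_f := \min_P \sum_{e \in P} \ell_e$ under the length function $\ell$. Substituting, the dual reduces to
\[
\max \sum_{f \in F} d_f\, \ell_f \quad \text{s.t.} \quad \sum_{e \in E} c_e\, \ell_e \leq 1, \qquad \ell \geq 0.
\]
Because the ratio $\frac{\sum_f d_f \ell_f}{\sum_e c_e \ell_e}$ is invariant under positive scaling of $\ell$, this normalized optimum equals the supremum of this ratio over all nonzero $\ell : E \to \cR^+$. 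Strong LP duality (applicable since both primal and dual are feasible and bounded) then identifies this optimum with the minimum congestion $r$, yielding the claimed formula, with the optimization taken in the direction required to make the ratio the best lower bound obtainable from an edge-length function.
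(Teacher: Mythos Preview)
Your proof is correct and follows exactly the approach the paper indicates (``follows easily from LP duality''); the paper does not spell out the details, and your derivation of the dual and the reduction of $y_f$ to the shortest-path distance $\ell_f$ is the standard one. You also correctly recognize that the optimization over $\ell$ should be a \emph{maximum} (the dual is a maximization, and each $\ell$ gives a lower bound on $r$, as confirmed by the corollary that follows); the ``$\min$'' in the statement is a typo in the paper, and your closing sentence handles this gracefully.
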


\begin{corollary}
  \label{cor:std-lb}
  By setting $\ell_e = 1$ for each $e \in E$, the minimum congestion
  required to route $H$ in $G$ is at least $D/C$ where $C = \sum_{e
    \in E} c_e$, and $D = \sum_{f \in F} d_f \ell_f$, where $\ell_f$ is
  the shortest path distance in $G$ between the end points of $f$
  (with unit-lengths on the edges).
\end{corollary}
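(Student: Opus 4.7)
The corollary is a direct specialization of Lemma~\ref{lem:dual-lb}, and my plan is simply to substitute a specific uniform length function into the dual characterization the lemma provides. Because the dual LP expresses the optimal congestion as an extremum (over all nonnegative length functions $\ell : E \rightarrow \cR^+$) of the ratio $\sum_{f \in F} d_f \ell_f / \sum_{e \in E} c_e \ell_e$, where $\ell_f$ denotes the shortest-path length between the endpoints of demand $f$ under $\ell$, any explicit choice of $\ell$ certifies a valid lower bound on the true congestion (this is the ``weak duality'' direction: for any routing of $H$ in $r \cdot G$, unfolding the sum $\sum_e \ell_e \cdot (\text{total flow on } e)$ along paths shows $r \sum_e c_e \ell_e \ge \sum_f d_f \ell_f$).

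Concretely, I would set $\ell_e = 1$ for every $e \in E$. Under this assignment, the induced shortest-path distance $\ell_f$ becomes the unweighted hop-distance in $G$ between the endpoints of $f$, so the numerator collapses to $\sum_{f \in F} d_f \ell_f = D$ and the denominator to $\sum_{e \in E} c_e = C$, yielding the claimed bound $r \ge D/C$.

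There is essentially no technical obstacle in proving the corollary itself, since Lemma~\ref{lem:dual-lb} has already done the substantive work by invoking LP duality on the concurrent multicommodity flow LP. The real value of stating the corollary is methodological: it reduces the task of lower-bounding the flow-cut gap on the explicit recursive series-parallel family constructed in Section~\ref{app:lower} to an elementary combinatorial calculation comparing the total ``demand times unweighted distance'' against the total capacity, bypassing the metric-embedding machinery used by Lee and Raghavendra~\cite{LeeR07}.
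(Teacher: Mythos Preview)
Your proposal is correct and is exactly what the paper intends: the corollary is stated in the paper without proof, as an immediate specialization of Lemma~\ref{lem:dual-lb} obtained by plugging in the uniform length function $\ell_e = 1$. Your explicit weak-duality justification (unfolding $\sum_e \ell_e \cdot(\text{flow on }e)$ along paths) is a nice touch, since it makes the direction of the inequality transparent independently of whether the extremum in the lemma is written as a min or a max.
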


We refer to $C$ and $D$ as defined in the corollary above as the
total capacity and the total demand length respectively. We refer
to the lower bound $D/C$ as the {\em standard} lower bound.

\paragraph{Construction of the Lower Bound Instances:}
The family of graphs presented in \cite{LeeR07} is obtained
from a single edge by a simple operation, which consists in replacing
every edge by a $K_{2,m}$ graph, as shown in Figure~\ref{fig:2gap-supply}.

\begin{figure}[h]
  \centering
  \includegraphics[width=5in]{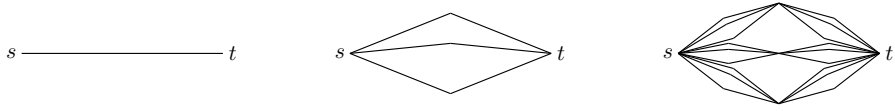}
  \caption{Supply graph for gap example. Replace each edge by $K_{2,m}$, here
$m=3$.}
  \label{fig:2gap-supply}
\end{figure}

\begin{figure}[h]
  \centering
  \includegraphics[height=1.5in]{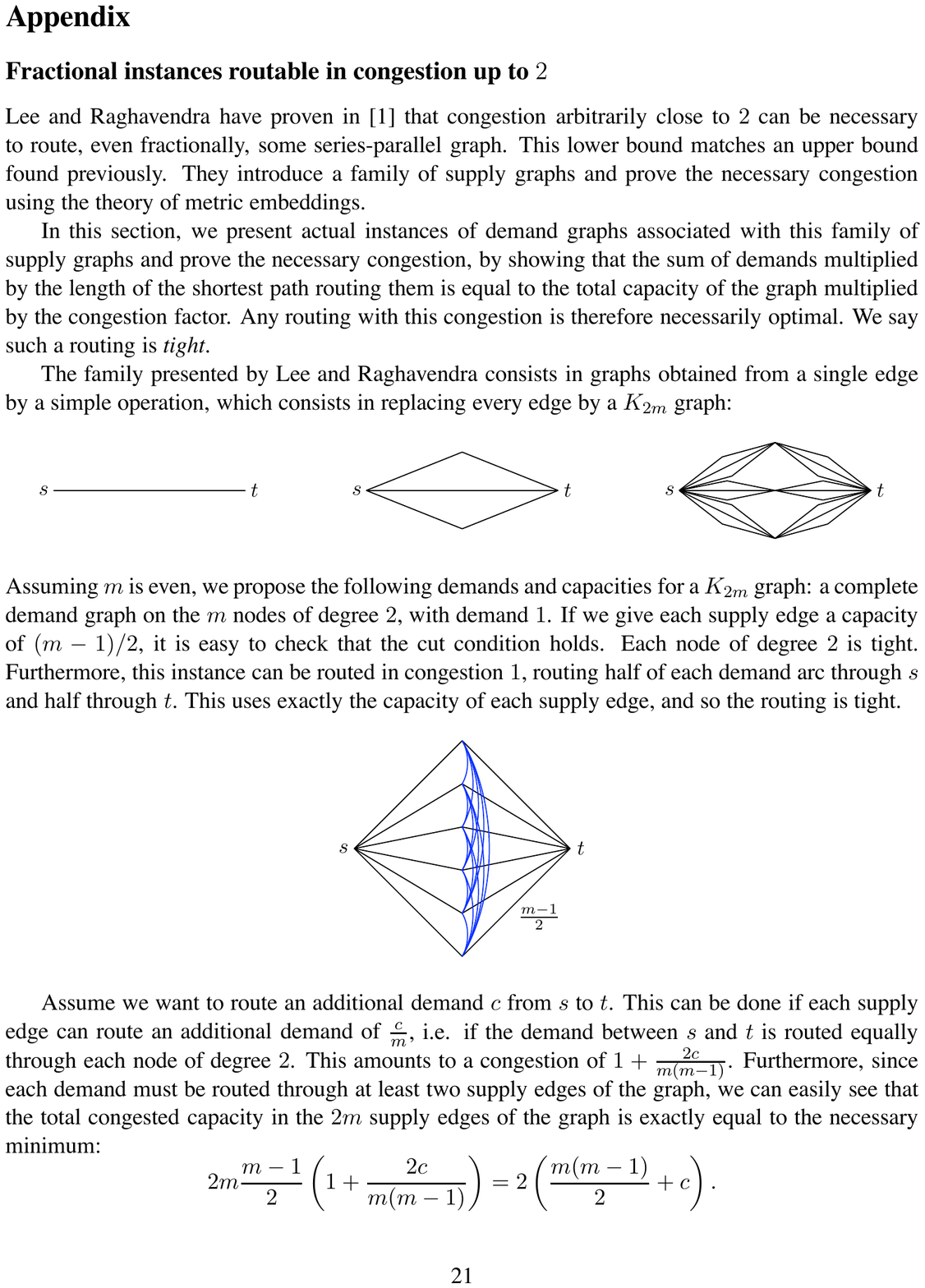}
  \caption{Building block for gap example.}
  \label{fig:2gap-demand}
\end{figure}

Assuming $m$ is even, consider the following demands and capacities
for a $K_{2,m}$ graph: a complete demand graph on the $m$ nodes of
degree $2$, with demand $1$ on each edge, and a capacity $(m-1)/2$ on
each supply edge. See Figure~\ref{fig:2gap-demand}. One easily checks
that the cut condition holds for this instance. Moreover, the total
capacity is $2m \cdot (m-1)/2 = m(m-1)$, and the total demand length
is $2 \cdot m(m-1)/2 = m(m-1)$; hence the standard lower bound $D/C =
1$, and in fact the instance is routable. The simple yet important
observation is that for any central cut separating $s$ from $t$, the
surplus of the cut is equal to $m(m-1)/2-k(m-k)$, when the cut
separates the degree $2$ nodes into sets of cardinality $k$ and
$m-k$. The minimum surplus is attained when $k=m/2$, where it is
$m(m-2)/4$. We call this instance $I(s,t)$. Based on this we define
an instance $I_c(s,t)$ for any real $c > 0$ as the instance obtained
from $I(s,t)$ by multiplying all demands and capacities of $I(s,t)$
by $\frac{c}{m(m-2)/4}$. The effect of this is that $I_c(s,t)$ satisfies
the cut condition, has a standard lower bound of $1$ and is routable, and
the surplus of any central cut separating $s$ and $t$ is at least $c$.

We summarize the properties of $I_c(s,t)$ in the following lemma, with
the last property being a crucial one.

\begin{lemma}
The instance $I_c(s,t)$ has the following properties:
\begin{enumerate}
\item It has $2m$ supply edges, each of capacity
  $\frac{c}{m(m-2)/4}\frac{m-1}{2}$, and it has $\frac{m(m-1)}{2}$
  demand edges, each with demand $\frac{c}{m(m-2)/4}$.
\item The standard lower bound for $I_c(s,t)$ is $1$ and the instance is
  routable.
\item The minimum surplus of any cut separating $s$ from $t$ is equal
  to $c$.
\item Adding an $st$ demand edge of demand $c$ to $I_c(s,t)$
  preserves the cut condition but the standard lower bound for this modified
  instance is $(1+\frac{m-2}{2(m-1)})$.
\end{enumerate}
\end{lemma}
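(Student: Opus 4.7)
The plan is to verify the four properties in sequence, since each reduces to a short direct calculation on the $K_{2,m}$ base instance $I(s,t)$ followed by an application of the uniform scaling factor $\lambda := c/(m(m-2)/4)$ that defines $I_c(s,t)$.

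For property (1) I would simply note that scaling every demand and every capacity in $I(s,t)$ by $\lambda$ gives the stated numbers: there are $2m$ supply edges of base capacity $(m-1)/2$, so each becomes $\lambda(m-1)/2$; and $\binom{m}{2} = m(m-1)/2$ demand edges of base demand $1$, so each becomes $\lambda$. Property (2) then requires computing $C := \sum_e c_e = 2m\cdot \lambda(m-1)/2 = \lambda m(m-1)$, and $D := \sum_f d_f\ell_f$. The key observation is that in $K_{2,m}$ any two degree-$2$ nodes are at unit-length distance exactly $2$ (through $s$ or $t$), so $D = \binom{m}{2}\cdot 2\cdot \lambda = \lambda m(m-1)$, yielding $D/C=1$. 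Routability itself follows by splitting each $v_iv_j$ demand equally through $s$ and through $t$: the load on any edge $sv_i$ is then $\lambda\cdot (m-1)/2$, which exactly meets its capacity; the same holds at $t$.

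For property (3), I would first argue that every cut $S$ with $s\in S\not\ni t$ in $K_{2,m}$ is automatically central, because after removing $s$ (resp.~$t$) the degree-$2$ nodes of $V\setminus S$ (resp.~$S$) are still connected through $t$ (resp.~$s$). Thus it suffices, by Lemma~\ref{lem:minimal}, to look at such cuts. If $k$ of the $m$ degree-$2$ nodes are in $S$, the supply cut is the full set of edges from $s$ to $V(G)\setminus (S\cup\{t\})$ plus those from $t$ to $S\setminus\{s\}$, totalling $m$ edges, hence capacity $\lambda m(m-1)/2$. The demand across the cut is $\lambda k(m-k)$. The surplus $\lambda\bigl(m(m-1)/2 - k(m-k)\bigr)$ is minimized at $k=m/2$ (using $m$ even), giving $\lambda\cdot m(m-2)/4 = c$.

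Property (4) is then immediate. The only cuts whose demand increases when we add an $st$-edge of demand $c$ are those separating $s$ from $t$, and by (3) each such cut has surplus at least $c$ in $I_c(s,t)$, so the surplus remains non-negative. For the standard lower bound, $C$ does not change, while $D$ increases by $c$ times the shortest-path length from $s$ to $t$ in $K_{2,m}$, which is $2$. So the new ratio is
\[
\frac{D+2c}{C} \;=\; 1 + \frac{2c}{\lambda m(m-1)} \;=\; 1 + \frac{2c\cdot m(m-2)/(4c)}{m(m-1)} \;=\; 1 + \frac{m-2}{2(m-1)}.
\]
Because the whole argument reduces to substituting the scaling factor and tracking a handful of edge counts, I do not anticipate any real obstacle; the only step requiring mild care is the central-cut reduction in property (3), which rules out pathological non-central cuts in $K_{2,m}$ before one minimizes over the partition parameter $k$.
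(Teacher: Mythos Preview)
Your proposal is correct and follows essentially the same route as the paper: the paper dispatches items (1)--(3) with ``follow directly from the description of $I_c(s,t)$'' and then does exactly the $\frac{C+2c}{C}$ calculation you give for item (4). Your write-up simply makes the scaling and the $k(m-k)$ minimization explicit, and your appeal to Lemma~\ref{lem:minimal} in (3) is superfluous since you already observe that every $s$--$t$ separating set in $K_{2,m}$ is central.
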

\begin{proof}
  The first three facts follow directly from the description of
  $I_c(s,t)$.  The surplus of any $st$ cut in $I_c(s,t)$ is $c$, and
  $I_c(s,t)$ satisfies the cut condition. Hence, it follows that
  adding an $st$ demand edges with demand $c$ preserves the cut
  condition. We note that for $I_c(s,t)$, the total capacity $C = 2m
  \cdot \frac{c}{m(m-2)/4}\frac{m-1}{2} = \frac{4c(m-1)}{(m-2)}$, this
  is also equal to the total demand length $D$ for $I_c(s,t)$. Adding
  an $st$ demand edge with demand value $c$ increases the total demand
  length by $2c$, and therefore the standard lower bound for the
  modified instance is $\frac{C + 2c}{C} = (1+\frac{m-2}{2(m-1)})$.
\end{proof}

As in \cite{LeeR07}, the goal is to show that the congestion lower
bound obtained in the construction above can be amplified by
iteration.  As a first step, the next lemma shows that replacing an
edge of the supply graph by an instance of type $I_c(s,t)$ maintains
the cut condition. Given a multiflow instance $G,H$ and a supply edge
$uv$ in $G$ we can obtain a new instance $G',H'$ by replacing $uv$ by
the instance $I_c(s,t)$ where we identify $u$ with $s$ and $t$ with
$v$. Note that all $H'$ contains all the demand edges of $H$
and the demand edges of $I_c(s,t)$.

\begin{lemma}\label{lem:cutcond}
  Let $I=G,H$ be a multiflow instance that verifies the cut condition. Let
  $uv$ be a supply edge in $G$ with capacity $c > 0$. Let $I'=G',H'$ be
  a new instance obtained by replacing $uv$ by $I_c(s,t)$.
   Then $G',H'$ verifies the cut condition.
\end{lemma}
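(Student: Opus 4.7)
The plan is to fix an arbitrary cut $S' \subseteq V(G')$ and decompose its surplus $\sigma_{G',H'}(S') = |\delta_{G'}(S')| - |\delta_{H'}(S')|$ into a contribution from the ``original'' part of $G'$ (namely $G$ with the edge $uv$ removed, together with all demands of $H$) and a contribution from the inserted gadget $I_c(s,t)$. I would identify $s$ with $u$ and $t$ with $v$ so that $V(G) \cap V(I_c(s,t)) = \{u,v\}$, and let $S = S' \cap V(G)$ together with $W = S' \cap (V(I_c(s,t)) \setminus \{u,v\})$. Every edge of $G'$ lies either in $G - uv$ or in the gadget, so the total surplus splits cleanly as $\sigma_{G',H'}(S') = \sigma_{G\setminus uv,H}(S) + \sigma_{I_c(s,t)}(S'\cap V(I_c(s,t)))$.

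The argument then breaks into two cases depending on how $S'$ separates $u$ and $v$. First, suppose $u$ and $v$ lie on the same side of $S'$; by complementing if necessary assume $u,v \in S'$. Then $uv \notin \delta_G(S)$, so the ``original'' contribution is exactly $\sigma_{G,H}(S) \geq 0$ by the cut condition for $I$. The gadget part is a cut of $I_c(s,t)$ in which $s$ and $t$ lie on the same side; since $I_c(s,t)$ satisfies the cut condition (property 2 of the summarizing lemma), this contribution is also non-negative, giving $\sigma_{G',H'}(S') \geq 0$.

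In the second case, $u \in S'$ and $v \notin S'$ (or vice versa). Now $uv$ does cross $\delta_G(S)$, contributing its capacity $c$, so the ``original'' contribution is $\sigma_{G,H}(S) - c$. The gadget part is an $st$-cut of $I_c(s,t)$, and by property 3 every such cut has surplus at least $c$. Adding the two contributions yields
\[
\sigma_{G',H'}(S') \;\geq\; \bigl(\sigma_{G,H}(S) - c\bigr) + c \;=\; \sigma_{G,H}(S) \;\geq\; 0,
\]
as required.

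The whole proof is really just bookkeeping: the clever work was already done in designing $I_c(s,t)$ so that its $st$-cuts have surplus at least $c$, exactly enough to absorb the loss of the edge $uv$ of capacity $c$. The only point at which one must be slightly careful is checking the decomposition of $\delta_{G'}(S')$ and $\delta_{H'}(S')$ into ``$G$-part'' and ``gadget-part''; this is routine since no demand or supply edge of $G'$ straddles both parts (the only shared vertices are $u=s$ and $v=t$). I anticipate no real obstacle.
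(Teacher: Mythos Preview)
Your proof is correct and follows essentially the same approach as the paper: both split into the two cases according to whether $S'$ separates $u$ from $v$, use that $I_c(s,t)$ satisfies the cut condition in the non-separating case, and use that every $st$-cut of $I_c(s,t)$ has surplus at least $c$ in the separating case. Your version is simply more explicit about the decomposition of the surplus, whereas the paper compresses the argument to two sentences.
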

\begin{proof}
The surplus of any cut separating $u$ from $v$ in $I'$ is at
least as big as the surplus of the corresponding cut in $I$, since the
surplus of any cut separating $s$ from $t$ in $I_c(s,t)$ is at least
$c$. Also, the surplus of any cut in $I'$ that does not separate
$u$ from $v$ is equal to the surplus of the corresponding cut in $I$,
plus the nonnegative surplus of some cut in $I_c(u,v)$.
\end{proof}

From any instance, we build an instance requiring a larger congestion
using the following transformation: We replace \emph{each} supply edge
by an instance of type $I_c(u,v)$. The next theorem gives a lower
bound on the required congestion in the transformed instance.

\begin{theorem}
  \label{thm:amplify}
  Let $I=G,H$ be an instance verifying the cut condition, with total
  capacity $C$ and total demand length $D$. Let $I'$ be the instance obtained
  by replacing each supply edge $uv$ of capacity $c_{uv}$ by an
  instance $I_{c_{uv}}(u,v)$. Then the transformed instance $I'=G',H'$
  verifies the cut condition, and the standard lower bound for $I'$
  is $1+\frac{D}{C} \cdot \frac{m-2}{2(m-1)}$.
\end{theorem}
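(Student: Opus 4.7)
My plan is to verify the cut condition for $I'$ and then separately compute the total capacity $C'$ and total demand length $D'$ of $I'$, after which the standard lower bound follows from Corollary~\ref{cor:std-lb}. The cut condition is immediate: performing the replacements one at a time, Lemma~\ref{lem:cutcond} guarantees that each individual replacement takes an instance satisfying the cut condition to another one satisfying the cut condition. An induction on the number of supply edges of $G$ then gives the cut condition for $I'$.

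For the capacity side, each supply edge $uv$ of $G$ with capacity $c_{uv}$ is replaced by $I_{c_{uv}}(u,v)$, which by the first property of the summary lemma contributes total capacity $\frac{4c_{uv}(m-1)}{m-2}$. Summing over all supply edges of $G$ yields $C' = \frac{4(m-1)}{m-2}\,C$.

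For the total demand length I split $H'$ into (a) the original demands inherited from $H$, and (b) the new demands created inside the individual $I_{c_{uv}}(u,v)$ gadgets. For (a) the key geometric observation is that in $G'$ the shortest path (with unit edge lengths) between any two original nodes $x,y\in V(G)$ is exactly twice the shortest $xy$ distance in $G$. Indeed, each supply edge $uv$ of $G$ becomes a copy of $K_{2,m}$ whose middle nodes are adjacent only to $u$ and $v$, so any $uv$ path inside that gadget has length at least $2$ and length-$2$ paths always exist; hence lifting an $xy$ path in $G$ of length $\ell$ gives an $xy$ path in $G'$ of length $2\ell$, and conversely one can contract. Therefore the original demands contribute $2D$ to $D'$. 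For (b), two middle nodes $v_i,v_j$ of a single $K_{2,m}$ gadget lie at distance exactly $2$ in $G'$ (a detour leaving and re-entering the gadget costs at least $4$ since both endpoints $u,v$ must be traversed), so the demand length contributed by each $I_{c_{uv}}(u,v)$ equals its internal total demand length, which by a direct count equals its total capacity $\frac{4c_{uv}(m-1)}{m-2}$. Summing, the new demands contribute $\frac{4(m-1)}{m-2}\,C$ to $D'$.

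Combining, $D' = 2D + \tfrac{4(m-1)}{m-2}C$, and Corollary~\ref{cor:std-lb} gives the standard lower bound
$$\frac{D'}{C'} \;=\; \frac{2D + \frac{4(m-1)}{m-2}C}{\frac{4(m-1)}{m-2}C} \;=\; 1 + \frac{(m-2)\,D}{2(m-1)\,C},$$
as claimed. The only nontrivial point is the shortest-path identification in part (a): no shortcut through middle nodes of any $K_{2,m}$ gadget can bring the distance between two original vertices below twice its value in $G$. This is precisely what is enforced by the fact that every middle node of a gadget has degree $2$ and its only neighbors are the endpoints of the replaced edge, so every single traversal of a gadget already costs $2$ supply edges.
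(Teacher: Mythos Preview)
Your proof is correct and follows essentially the same approach as the paper: invoke Lemma~\ref{lem:cutcond} for the cut condition, compute $C'=\frac{4(m-1)}{m-2}C$, split $D'$ into the doubled original demand length $2D$ plus the internal gadget demand length (which equals $C'$), and read off $D'/C'$. You give slightly more justification for the shortest-path doubling and for the internal distances staying at $2$, but the argument is otherwise identical to the paper's.
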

\begin{proof}
  The fact that $I'$ verifies the cut condition is a direct
  consequence of Lemma~\ref{lem:cutcond}.

  Let us first compute the total capacity of $G'$. Since each supply
  edge of capacity $c$ in $I$ is replaced with $2m$ supply edges of
  capacity $\frac{c}{m(m-2)/4}\frac{m-1}{2}$, the total capacity of
  $G'$ is $C'=C\frac{2m}{m(m-2)/4}\frac{m-1}{2}=C\frac{4(m-1)}{m-2}$.

  The demand edges in $H'$ either exist in $H$, or are added by the
  transformation, i.e. they are internal to some $I_c(u,v)$ instance
  that replaces a supply edge $uv$. The total demand length in $I'$ can
  therefore be decomposed into the part corresponding to the demand
  edges that exist in $I$, and the total demands internal to each
  $I_c(u,v)$ instance. In each $I_c(u,v)$ instance, the total demand length
  is equal to the total capacity. So the sum of demand lengths internal
  to each $I_c(u,v)$ instance is equal to the total capacity,
  which is the total capacity of $G'$. On the other hand,
  the demand edges that exist in $I$ have the same demand, but the
  shortest path of each such edge has exactly doubled in the transformed
  instance. It follows that $D' = C'+2D$.

  Therefore, the standard lower bound for $I'$ is is
  $\frac{D'}{C'}=\frac{C'+2D}{C'}=1+\frac{2D}{C}\frac{m-2}{4(m-1)} =
  1+\frac{D}{C}\frac{m-2}{2(m-1)}$.
\end{proof}

\begin{figure}[ht]
  \centering
  \includegraphics[width=6in]{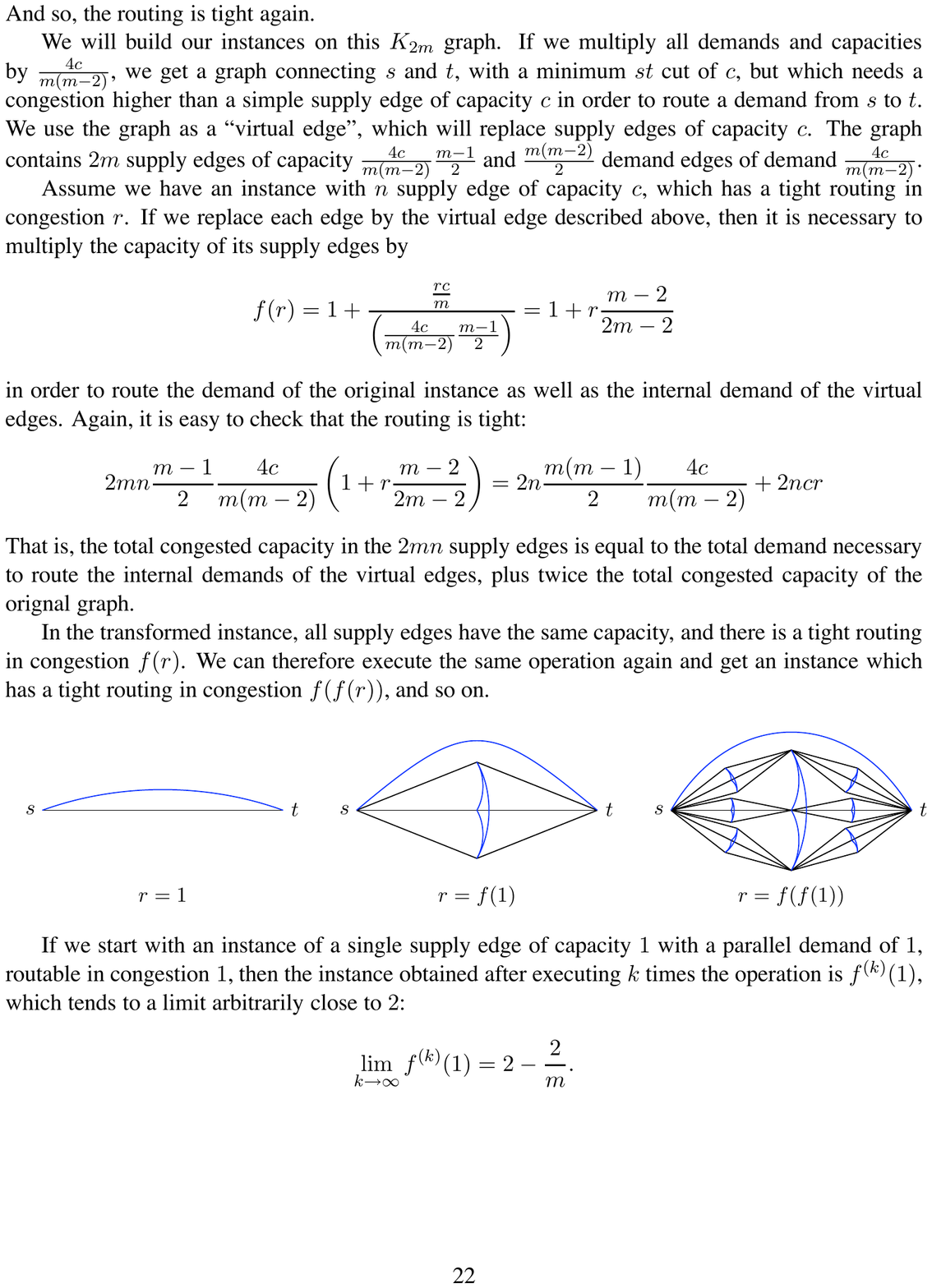}
  \caption{The flow-cur gap $r$ grows bigger with each iteration.}
  \label{fig:2gap-proof}
\end{figure}

Thus, Theorem\ref{thm:amplify} can be used to amplify the flow-cut
gap.  In particular, if the standard lower bound yields a gap of $x$
for an instance $I$ then, one obtains an instance with standard lower
bound yielding a gap of $f(x)=1+x\frac{m-2}{2(m-1)}$. We can iterate
this process $k$ times yielding instances with flow-cut gap
$f^{(k)}(x)$ where $f^{(k)}$ is the function $f$ composed $k$
times. We note that
$$
f^{(k)}(x) = 1+\frac{(m-2)}{2(m-1)}+\cdots+\left(\frac{(m-2)}{2(m-1)}\right)^{k-1}+x\left(\frac{(m-2)}{2(m-1)}\right)^k,
$$
and that this converges to
$$
\lim_{k\rightarrow\infty}f^{(k)}(x) = 2-\frac{2}{m},
$$
for any $x$.

\begin{theorem}
  For any $\varepsilon > 0$, there is a series-parallel graph instance for which
  the flow-cut gap is $2-\varepsilon$.
\end{theorem}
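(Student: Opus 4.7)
My plan is to apply Theorem~\ref{thm:amplify} iteratively, starting from a trivial base instance. Given $\varepsilon > 0$, first fix an even integer $m$ large enough that $2/m < \varepsilon/2$. For the base instance take a single supply edge $st$ with capacity $c$ together with a single demand of value $c$ between $s$ and $t$; this is series-parallel, satisfies the cut condition, has flow-cut gap $1$, and has standard lower bound $D_0/C_0 = 1$.

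Next I iterate the transformation from Theorem~\ref{thm:amplify}, replacing each supply edge of capacity $c_{uv}$ by the $K_{2,m}$-based gadget $I_{c_{uv}}(u,v)$. Each application preserves the cut condition by Lemma~\ref{lem:cutcond}, and it preserves series-parallelism since replacing an edge by $K_{2,m}$ is realized by one parallel operation (creating $m$ copies) followed by a single series subdivision on each copy. By Theorem~\ref{thm:amplify} the update on the standard lower bound is $x \mapsto f(x) = 1 + rx$ where $r = (m-2)/(2(m-1))$, so after $k$ iterations the bound is
\[
f^{(k)}(1) \;=\; \sum_{i=0}^{k-1} r^i + r^k \;=\; \frac{1-r^k}{1-r} + r^k .
\]
Since $r < 1$, this converges as $k \to \infty$ to $1/(1-r) = 2 - 2/m$.

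To finish, choose $k$ large enough that $f^{(k)}(1) > 2 - 2/m - \varepsilon/2 > 2 - \varepsilon$ (which is possible since $r^k \to 0$). Letting $I^{(k)}$ be the resulting series-parallel instance, Corollary~\ref{cor:std-lb} then gives that its flow-cut gap is at least $f^{(k)}(1) > 2 - \varepsilon$, as desired.

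The proof is essentially mechanical once Theorem~\ref{thm:amplify} and the properties of $I_c(s,t)$ are in hand; the main things to verify carefully are that each iteration truly preserves both the cut condition and the series-parallel structure, and that the recurrence $x \mapsto 1 + rx$ for the standard lower bound yields the claimed closed form. The only conceptual subtlety is the use of Corollary~\ref{cor:std-lb} (i.e.\ the unit-length dual solution) rather than a more general length function, which suffices here precisely because the $K_{2,m}$ construction is already tuned to make the cut bound and the standard dual bound nontrivially diverge under recursion.
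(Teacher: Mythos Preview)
Your proof is correct and follows essentially the same approach as the paper: start from a single edge with equal supply and demand, fix $m$ large enough that $2/m < \varepsilon/2$, iterate the transformation of Theorem~\ref{thm:amplify}, and use the convergence of the recurrence $x \mapsto 1 + rx$ to $2-2/m$ to pick $k$ large enough. If anything, your version is slightly more careful in explicitly requiring $m$ to be even and in spelling out why the $K_{2,m}$ replacement preserves series-parallelism.
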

\begin{proof}
  We apply the iterated construction starting with the instance
  consisting of a single edge of capacity $1$ as the supply graph, and
  a demand graph consisting of the same edge with demand $1$; clearly,
  the standard lower bound for this instance is $1$. See
  Figure~\ref{fig:2gap-proof}.  We observe that the construction
  preserves the property that the supply graphs are series
  parallel. Thus, if we set $m = \lceil 4/\varepsilon \rceil$ in
  $I(s,t)$, the iterated construction yields a series parallel graph
  instance with flow-cut gap arbitrarily close to $2-\varepsilon/2$, and
  hence one can choose a sufficiently large $k$ such that iterating
  the construction $k$ times gives an instance with gap at least
  $2-\varepsilon$.
\end{proof}

\paragraph{Acknowledgements:} This work was partly done while the
first two authors were at Lucent Bell Labs; they acknowledge support
from an ONR grant N00014-05-1-0256. C.~Chekuri is also partly
supported by NSF grant CCF-0728782. We thank Anupam Gupta and Sanjeev
Khanna for several discussions and ideas along the course of this
work. We thank an anonymous reviewer for pointing out a minor error in
an earlier proof of Theorem~\ref{thm:kshell2}.

\appendix
\section{Appendix}

\subsection{Proof of the Rerouting Lemma \cite{treewidth}}
\label{app:rerouting}

For completeness we give the proof from \cite{treewidth}.

\begin{proof}
  Consider a cut $\delta_G(S)$ in $G$ for some $S \subset V$.
  Let $D(S)$ denote the total demand across this cut.
  Since $D$ satisfies the cut condition $|\delta_G(S)| \ge D(S)$ for all $S
  \subset V$.  Also, since $D'_f$ is routable in $\gamma G$, $\gamma
  |\delta_G(S)| \ge D'_f(S)$ for all $S \subset V$.

  To prove the lemma we need to show that $(\gamma+1) |\delta_G(S)| \ge
  D_f(S)$.  From the above inequalities, it suffices to show
  that $D_f(S) \le D(S) + D'_f(S)$.  Let $X_S$ denote the set of all
  unordered pairs of nodes $uv$ such that $u$ and $v$ are separated by
  $S$, that is $|\{u,v\} \cap S| = 1$. We can write $D_f(S)$ as
  $\sum_{uv: f(u)f(v) \in X_S} D(uv)$. For each pair $uv$ such that
  $f(u)f(v) \in X_S$, we charge $D(uv)$ to either $D(S)$ or $D'_f(S)$
  such that there is no overcharge. This will complete the argument.

  We consider two cases. If $uv \in X_S$ then we charge $D(uv)$ to
  $D(S)$.  Note that $\sum_{uv \in X_S} D(uv) = D(S)$ and hence we do
  not over charge $D(S)$. If $uv \not \in X_S$, then either $uf(u) \in X_S$
  or $vf(v) \in X_S$ but not both. In $uf(u) \in X_S$ we charge $D(uv)$ to
  $u$, otherwise to $v$. We observe that the total charge to
  a node $u$ is at most $D'_f(uf(u))$ and it is charged only if
  $uf(u) \in X_S$. Hence the total charge to $D'_f(S)$ is not exceeded
  either.
\end{proof}

\end{document}

\subsection{Proof of Theorem~\ref{thm:kfaces}}
\label{app:kfaces}

\begin{proof}[Sketch]
  Let $F_1, \ldots, F_k$ be the node sets of the $k$ faces on which
  the terminals lie.  Consider $F_1$ and let $H_1=(V, E_1)$ be the
  demand graph induced by the edges in $H$ such that at least one endpoint
  of the edge is in $F_1$. Let $S_1$ be the terminals in
  $H_1$ not in $F_1$; these are terminals with a demand edge to some
  terminal in $F_1$. For $t \in S_1$ let $d(t)$ be the amount of demand incident
  to $t$ in $H_1$. Since $G$ satisfies the cut-condition for $H_1$, by
  the maxflow-mincut theorem, we can route the demand from terminals
  in $S_1$ to $F_1$ --- note that the demand $d(t)$ from $t \in S_1$
  may be split and reach different nodes in $F_1$, and not the
  desired destination. Once the demand from $S_1$ reaches $F_1$ we
  wish to reroute this flow to the desired destination. The key point
  is that this rerouting involves nodes only on $F_1$  and so this defines
  an OS instance on the face corresponding to $F_1$. Moreover, by
  Lemma~\ref{lem:match}, $2G$ satisfies the cut condition for this
  rerouting problem. By the Okamura-Seymour theorem, $2G$ admits a
  feasible flow for the rerouting problem. Composing the two flows
  shows that $H_1$ is routable in $3G$. Repeating this with each of
  the other faces, we see that $H$ is routable in $G$ with congestion
  $3k$. For the integral flow, observe that the first phase routing of
  $S_1$ to $F_1$ is routable integrally. The second phase via the
  Okamura-Seymour theorem gives a half-integral flow in $2G$ which
  implies an integral flow in $4G$. Hence $H_1$ is integrally routable
  in $5G$ and thus $H$ is integrally routable in $G$ in $5kG$.
\end{proof}